\definecolor{shadecolor}{gray}{0.8}
\definecolor{lgray}{gray}{0.5}
\newcounter{mnotecount}[section]
\renewcommand{\themnotecount}{\thesection.\arabic{mnotecount}}
\newcommand{\mnote}[1]
{\protect{\stepcounter{mnotecount}}$^{\mbox{\footnotesize
$
\bullet$\themnotecount}}$ \marginpar{
\raggedright\tiny\em
$\!\!\!\!\!\!\,\bullet$\themnotecount: #1} }
\DeclareFontFamily{U}{mathx}{\hyphenchar\font45}
\DeclareFontShape{U}{mathx}{m}{n}{
      <5> <6> <7> <8> <9> <10>
      <10.95> <12> <14.4> <17.28> <20.74> <24.88>
      mathx10
      }{}
\DeclareSymbolFont{mathx}{U}{mathx}{m}{n}
\DeclareMathAccent{\widecheck}{0}{mathx}{"71}
\DeclareMathAccent{\wideparen}{0}{mathx}{"75}
\newtheorem{thm}{Theorem}
\newtheorem{cor}[thm]{Corollary}
\newtheorem{lem}[thm]{Lemma}
\newtheorem{prop}[thm]{Proposition}
\theoremstyle{definition}
\newtheorem{defn}[thm]{Definition}
\newtheorem{rem}[thm]{Remark}
\newtheorem{ex}[thm]{Example}
\newcommand{\sN}{{\mathbb N}}
\newcommand{\sR}{{\mathbb R}}
\newcommand{\sM}{{\mathbb M}}
\newcommand{\A}{\mathcal{A}}
\newcommand{\B}{\mathcal{B}}
\newcommand{\Cf}{\mathscr{C}}
\newcommand{\M}{\mathcal{M}}
\newcommand{\N}{\mathcal{N}}
\newcommand{\Pf}{\mathscr{P}}
\newcommand{\K}{\mathcal{K}}
\newcommand{\cL}{\mathcal{L}}
\newcommand{\T}{\mathcal{T}}
\newcommand{\cS}{\mathcal{S}}
\newcommand{\X}{\mathcal{X}}
\newcommand{\Y}{\mathcal{Y}}
\newcommand{\U}{\mathcal{U}}
\newcommand{\V}{\mathcal{V}}
\newcommand{\W}{\mathcal{W}}
\newcommand{\itGamma}{\mathit{\Gamma}}
\newcommand{\itSigma}{\mathit{\Sigma}}
\newcommand{\itPi}{\mathit{\Pi}}
\newcommand{\loc}{\mathrm{loc}}
\newcommand{\cb}{\mathrm{b}}
\newcommand{\cc}{\mathrm{c}}
\newcommand{\bx}{\textup{\textbf{x}}}
\newcommand{\bv}{\mathbf{v}}
\newcommand{\bA}{\mathbf{A}}
\newcommand{\bB}{\mathbf{B}}
\DeclareMathOperator{\supp}{supp}
\DeclareMathOperator*{\esssup}{ess\,sup}
\newcommand{\ev}{\textnormal{ev}}
\newcommand{\im}{\textnormal{im}}
\newcommand{\vol}{\textnormal{vol}}
\newcommand{\vc}{\vcentcolon =}
\newcommand{\cv}{= \vcentcolon}
\DeclareMathOperator{\id}{id}
\DeclareMathOperator{\dyw}{div} 
\newcommand{\wv}{\widecheck}
\begin{document}


\title{Causal evolution of probability measures\\ and continuity equation}
\author{Tomasz Miller${}^{1}$\thanks{tomasz.miller@uj.edu.pl}}

\affil{\small ${}^1$ Copernicus Center for Interdisciplinary Studies, Jagiellonian University,
\\Szczepa\'nska 1/5, 31-011 Krak\'ow, Poland}
\date{\today}


\maketitle

\begin{abstract}
We study the notion of a causal time-evolution of a conserved nonlocal physical quantity in a globally hyperbolic spacetime $\M$. The role of the `global time' is played by a chosen Cauchy temporal function $\T$, whereas the instantaneous configurations of the nonlocal quantity are modeled by probability measures $\mu_t$ supported on the corresponding time slices $\T^{-1}(t)$. We show that the causality of such an evolution can be expressed in three equivalent ways: (i) via the causal precedence relation $\preceq$ extended to probability measures, (ii) with the help of a probability measure $\sigma$ on the space of future-directed continuous causal curves endowed with the compact-open topology and (iii) through a causal vector field $X$ of $L^\infty_\loc$-regularity, with which the map $t \mapsto \mu_t$ satisfies the continuity equation in the distributional sense. In the course of the proof we find that the compact-open topology is sensitive to the differential properties of the causal curves, being equal to the topology induced from a suitable $H^1_\loc$-Sobolev space. This enables us to construct $X$ as a vector field in a sense `tangent' to $\sigma$. In addition, we discuss the general covariance of descriptions (i)--(iii), unraveling the geometrical, observer-independent notions behind them.
\end{abstract}

MSC classes: 53C50, 53C80, 28E99, 60B05


\section{Introduction and main results}
\label{sec::intro}

Lorentzian optimal transport theory is a new fast-developing field of mathematical physics. Starting with the work of Brenier \cite{Brenier2003}, who was the first to study the Monge--Kantorovich problem with relativistic cost functions, several authors successfully adapted the tools of standard, Riemannian optimal transport theory to the setting of Lorentzian (or even Lorentz--Finsler) manifolds \cite{Bertrand2013, Bertrand2018, Kell2018, Suhr2016}. It is readily motivated by physical considerations, including the so-called `early universe reconstruction problem' \cite{BrenierFrisch2003, Frisch2002, Frisch2006, Frisch2011}, relationships between general relativity and the second law of thermodynamics \cite{McCann18, Suhr18}, and the notion of causality and causal evolution of nonlocal objects \cite{AHP2017, PRA2017, Miller17a}. 

The causality requirement, understood intuitively as the impossibility of superluminal motion or --- in the current context --- of superluminal transportation of probability measures, is arguably a minimal one that must be met by any Lorentzian optimal transport plan. It can be ensured simply by demanding the cost function to be finite only for causally connected pairs of events (as was done already in \cite{Brenier2003}, cf.  \cite{Suhr2016}). This naturally leads to the following extension of the standard causal precedence relation $\preceq$ on a spacetime\footnote{We adopt the metric signature $(-+ \ldots +)$ and set the speed of light equal to 1. For the sake of concreteness, we assume that the spacetime dimension is $1+3$ unless explicitly stated otherwise. See Appendix \hyperref[sec::AppB]{B} for the basic definitions and facts from causality theory.} $\M$ onto the space $\Pf(\M)$ of Borel probability measures on $\M$ \cite{AHP2017, Suhr2016}.
\begin{defn}
\label{causal_rel_def}
For any $\nu_1, \nu_2 \in \Pf(\M)$ we say that $\nu_1$ causally precedes $\nu_2$ (in symbols $\nu_1 \preceq \nu_2$) if there exists a joint probability measure $\omega \in \Pf(\M^2)$ such that
\begin{itemize}
\item $\pi^i_\sharp\omega = \nu_i$, where $\pi^i: \M^2 \rightarrow \M$ is the canonical projection on $i$-th argument, $i=1,2$.
\item $\omega(J^+) = 1$, where $J^+ \vc \{ (p,q) \in \M^2 \, | \, p \preceq q \}$. 
\end{itemize}
Every such $\omega$ is called a \emph{causal coupling}\footnote{Recall that in optimal transport theory a coupling \cite{Villani2008} (or a transport plan \cite{UsersGuide,garling2017polish}) is a joint probability measure satisfiying the first of the two above conditions.} of $\nu_1$ and $\nu_2$. The set of such couplings is denoted by $\itPi_\preceq(\nu_1,\nu_2)$.
\end{defn}

Extended in this manner, relation $\preceq$ can be employed to model a causal time-evolution of a spatially distributed physical entity, e.g. a dust cloud or energy density. In order to motivate the definition, notice first that since the entity in question might be highly nonlocal in space, the standard prescription invoking a local coordinate chart 
 is insufficient. Instead, the description of such an evolution requires some notion of a `global time' and of `global simultaneity hypersurfaces', much like in the FLRW cosmological models \cite{HawkingEllis,BN83} or in fact like in the above-mentioned early work of Brenier \cite{Brenier2003}. This, in turn, sets restrictions on the causal structure of the underlying spacetime manifold. Concretely, the so-called \emph{temporal} functions --- i.e. smooth maps with past-directed timelike gradient serving as the `global time' --- exist precisely in stably causal spacetimes \cite{BS04}. If, on top of it, we demand the `simultaneity hypersurfaces' to be Cauchy hypersurfaces, which guarantees the well-posedness of the Cauchy problem, one arrives at the notion of a globally hyperbolic spacetime \cite{HR09}. The outstanding feature of such spacetimes, first discovered by Geroch \cite{GerochSplitting} and refined by Bernal \& S\'anchez \cite{BS04,BS04a} is that they admit global smooth splittings into the time and space parts, and each such splitting arises from a \emph{Cauchy temporal function}, i.e. a temporal function whose all level sets are Cauchy hypersurafces.

\begin{thm}(Geroch--Bernal--S\'anchez)
\label{GBS}
Spacetime $(\M,g)$ is globally hyperbolic iff it admits Cau\-chy temporal functions. For any such function $\T$ there exists an isometry $\Xi: \M \rightarrow \sR \times \itSigma$, called the Geroch--Bernal--S\'anchez (GBS) splitting, such that $\itSigma \vc \T^{-1}(0)$ and $\T = \pi^0 \circ \Xi$ with $\pi^0$ being the canonical projection on the $\sR$ component. Moreover, the metric can be expressed as
\begin{align}
\label{GBSmetric}
g = -\alpha \, d\T \otimes d\T + \bar{g},
\end{align}
where $\alpha \vc |\nabla \T|^{-2} = -g(\nabla \T,\nabla \T)^{-1}$ is a positive smooth function and $\bar{g}$ is a 2-covariant symmetric tensor field on $\M$ whose restriction to the Cauchy hypersurface $\M_t \vc \T^{-1}(t)$ is a Riemannian metric for every $t \in \sR$ and whose radical at each $p\in\M$ is spanned by $\nabla \T_p$.
\end{thm}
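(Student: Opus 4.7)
The theorem has two parts: the iff characterization of global hyperbolicity and the explicit splitting. The ``if'' part is immediate, since a Cauchy temporal function by definition has Cauchy hypersurfaces as level sets, and the existence of any Cauchy hypersurface yields global hyperbolicity. The substance lies in the ``only if'' direction together with the construction of $\Xi$.

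For the ``only if'' direction, I would first reproduce Geroch's classical volume-function construction: given a finite Borel measure $m$ on $\M$ with $m(\M)=1$, the function $\tau_0(p) \vc \ln\bigl(m(I^-(p))/m(I^+(p))\bigr)$ is a continuous time function on any globally hyperbolic $\M$ whose level sets are topological Cauchy hypersurfaces. The crucial second step is the Bernal--S\'anchez smoothing procedure, which replaces $\tau_0$ with a smooth $\T$ having past-directed timelike gradient everywhere and preserving the Cauchy property of \emph{every} level set. This is carried out by an inductive construction on a locally finite cover adapted to $\tau_0$, carefully averaging with cut-off functions whose gradients remain inside the past causal cone.

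Given such a $\T$, the splitting is obtained by flowing $\itSigma \vc \T^{-1}(0)$ along a suitably normalized vector field. Setting $\alpha \vc -1/g(\nabla\T,\nabla\T) > 0$ (positivity follows from $\nabla\T$ being timelike) and
\begin{equation*}
N \vc \alpha \, \nabla\T,
\end{equation*}
one obtains a future-directed timelike field satisfying $d\T(N) = 1$, so that $\T$ grows at unit rate along the flow $\phi^N_t$ of $N$. Completeness of $N$ follows from the Cauchy property: an integral curve of $N$ is an inextendible future-directed causal curve, hence intersects every Cauchy slice, forcing $\T$ to attain every real value along it. Defining $\Xi^{-1}(t,\sigma) \vc \phi^N_t(\sigma)$ yields the desired diffeomorphism $\sR \x \itSigma \to \M$ satisfying $\T = \pi^0 \circ \Xi$.

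The metric formula is then a direct calculation. Since $\nabla\T$ is $g$-dual to $d\T$, the field $N$ is $g$-orthogonal to every $\M_t$; moreover $g(N,N) = \alpha^2 g(\nabla\T,\nabla\T) = -\alpha$. Defining $\bar{g}(X,Y) \vc g(X,Y) + \alpha \, d\T(X)\, d\T(Y)$, one verifies $\bar{g}(\nabla\T,\cdot) \equiv 0$ and $\bar{g}|_{\M_t} = g|_{\M_t}$, yielding $g = -\alpha \, d\T \otimes d\T + \bar{g}$ tensorially. The principal obstacle is unquestionably the Bernal--S\'anchez smoothing step: guaranteeing that \emph{every} level set of the smoothed $\T$ remains a spacelike Cauchy hypersurface is subtle, since naive convolution-type smoothings of $\tau_0$ can easily tilt the gradient out of the timelike cone or destroy the Cauchy property on individual slices; by contrast, the construction of $\Xi$ and the verification of the metric decomposition are routine once $\T$ is in hand.
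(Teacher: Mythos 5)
The paper offers no proof of this theorem: it is imported verbatim from Geroch and Bernal--S\'anchez (the references [GerochSplitting, BS04, BS04a]), so there is no internal argument to compare yours against, only the literature. Your outline follows exactly that standard route (Geroch's volume-function time function, Bernal--S\'anchez smoothing, flow of the normalized gradient of $\T$, algebraic decomposition of the metric), and the parts you actually carry out --- completeness of the flow via the Cauchy property of the level sets, the definition $\bar g := g + \alpha\, d\T\otimes d\T$, and the identification of its radical with $\mathrm{span}(\nabla\T)$ --- are correct, up to one sign: with the paper's convention that a temporal function has \emph{past-directed} timelike gradient, the field $N := \alpha\nabla\T$ is past-directed and satisfies $d\T(N)=\alpha\,g(\nabla\T,\nabla\T)=-1$, so you must take $N := -\alpha\nabla\T$ to obtain a future-directed field with $d\T(N)=1$ and $\T$ increasing at unit rate along its flow.

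The genuine gap --- which you yourself flag --- is the smoothing step. Saying that one ``replaces $\tau_0$ by a smooth $\T$ \ldots by an inductive construction on a locally finite cover, carefully averaging with cut-off functions'' describes the theorem rather than proving it: guaranteeing simultaneously that the gradient stays timelike everywhere \emph{and} that every level set remains a Cauchy hypersurface is precisely the content of the Bernal--S\'anchez papers, and their actual construction does not mollify Geroch's $\tau_0$ directly; it first produces a smooth spacelike Cauchy hypersurface and then assembles the Cauchy temporal function from locally defined ``time step'' functions organized by Geroch's topological splitting, with a delicate verification that the sum has timelike gradient and Cauchy level sets. As a self-contained proof your proposal therefore stops exactly where the real difficulty begins; since the paper itself invokes the result by citation, your sketch is a fair reconstruction of the standard strategy, but the smoothing step should be cited rather than presented as routine, and the sign of $N$ corrected.
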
 

Let us then fix a Cauchy temporal function $\T$ furnishing the `global time'. Consequently, the smooth spacelike Cauchy hypersurfaces $\M_t$ become the time slices and can be interpreted as the (curved) `hypersurfaces of simultaneity'\footnote{Alternatively, one might regard $\T$ as giving rise to a \emph{synchronizable observer field} $U = -\tfrac{\nabla \T}{|\nabla \T|}$ (cf. \cite[pp. 358--360]{BN83}), whose restspaces are all assumed Cauchy.}. Suppose we want to describe the time-evolution, over a time interval $I$, of some nonnegative physical quantity $Q$ distributed in space --- be it mass, energy or probability --- the total amount of which is conserved. To this end, after normalizing the total amount of $Q$ to one, we introduce the following notion.
\begin{defn}
A \emph{$\T$-evolution} (of a probability measure) is a map $\mu: I \rightarrow \Pf(\M_I)$, where $\M_I \vc \T^{-1}(I)$, such that each $\mu(t) \cv \mu_t$ lives on the corresponding time slice, i.e. $\supp \mu_t \subset \M_t$. 
In addition, we say that $\mu$ is \emph{causal} if it is $\preceq$-monotone, that is if
\begin{align}
\label{causal_measure_map0}
\forall \, s,t \in I \quad s \leq t \ \Rightarrow \ \mu_s \preceq \mu_t.
\end{align}
\end{defn}

Of course, in accordance with the tenets of general relativity, there is no canonical choice of a Cauchy temporal function and a GBS splitting. Different physicists (or `global observers') might use different splittings in their descriptions of dynamical phenomena. The crucial thing, however, is that no description should be fundamentally privileged. They must be `covariant', in the sense that there exist transformation rules allowing one to unambiguously pass from one description to another. This, in turn, is guaranteed by the existence of `invariant' geometrical objects, which do not depend on the choice of the GBS splitting. The standard example of such an invariant is a worldline, i.e. an unparametrized causal curve in spacetime. The choice of a splitting specifies a parametrization of the curve, enabling the interpretation in terms of a moving pointlike particle.

The aim of this paper is to find invariant, splitting-independent objects behind the notion of a causal $\T$-evolution, along with the transformation rules between descriptions involving different Cauchy temporal functions. 

To this end, we investigate the relationship between causally evolving probability measures and the low-regular version of \emph{continuity equation}, which we introduce below.

In the smooth setting, the continuity equation is simply $\dyw J = 0$, where $J$ is the 4-current describing the flux of $Q$. In the case of Minkowski spacetime $\sM$, by expressing $J$ in a chosen inertial frame as $(\rho, \rho \bv)$ with $\bv$ subluminal ($|\bv|^2 \leq 1$), the equation takes the form $\partial_t \rho + \nabla_\bx \cdot (\rho \bv) = 0$ known from elementary physics. This can be naturally extended beyond the smooth setting, wherein $\rho$ is replaced by a $\pi^0$-evolution $\mu: I \rightarrow \Pf(\sM_I)$ (with $\sM_I = I \times \sR^3$) and the equation is now understood in the distributional sense. Concretely\footnote{Cf. \cite[Definition 1.4.1]{Crippa2007PhD}, where $I = [0,T]$, whereas $\mu_t$'s are signed measures and, technically, they are regarded as living not on $\sM_I$, but rather on $\sR^d$ (for any $d$).}, we say that $I \ni t \mapsto \mu_t$ satisfies the continuity equation if
\begin{align}
\label{conteq_Minkowski}
\forall \Phi \in C^\infty_\cc(\sM_I^\circ) \qquad \int_I \int_{\sM_I} \left( \partial_t \Phi + \bv_t \cdot \nabla_\bx \Phi \right) d\mu_t dt = 0
\end{align}
for some Borel velocity field $\bv: \sM_I^\circ \rightarrow \sR^3$, $(t,\bx) \mapsto \bv_t(\bx)$, where $\sM_I^\circ = I^\circ \times \sR^3$ denotes the interior\footnote{See Appendix \hyperref[sec::AppA]{A} for more on the topological notation used.} of $\sM_I$. This distributional version of continuity equation (with $\bv$ not necessarily subluminal) has been extensively studied within the optimal transport theory and metric geometry \cite{ambrosio2008gradient,ambrosio2008continuity,AB08,BB2000,Bernard12}.

Moving now to the more general case of a globally hyperbolic spacetime $\M$, the analogue of (\ref{conteq_Minkowski}) can be derived as follows. Consider first a \emph{smooth} 4-current $\rho Y$, where $\rho \in C^\infty(\M_I^\circ)$ and $Y$ is a smooth 4-velocity field, which is by definition future-directed causal and normalized in accordance with the GBS splitting used, i.e. such that $g(Y, -\tfrac{\nabla \T}{|\nabla \T|}) = -1$ or, equivalently,
\begin{align}
\label{conteq-1}
Y\T = |\nabla \T|.
\end{align}
Notice that $Y$ can be expressed as
\begin{align*}
Y = - \tfrac{\nabla \T}{|\nabla \T|} + V,
\end{align*}
where $V$ is a (smooth) spacelike vector field satisfying $V\T = 0$, interpreted by the observer as the 3-velocity, and the causality of $Y$ amounts to the former's subluminality, $|V| = g(V,V)^{1/2} \leq 1$. 

The continuity equation reads
\begin{align}
\label{conteq0}
\dyw \rho Y = 0.
\end{align}
In order to pass to the distributional formulation, we integrate (\ref{conteq0}) against every test function analogously as in (\ref{conteq_Minkowski}) to obtain the equivalent condition\footnote{As a side remark, notice that $\M_I^\circ = \T^{-1}(I)^\circ = \T^{-1}(I^\circ)$, where the last equality holds because $\T$, being a composition of open maps $\pi^0 \circ \Xi$, is itself an open map and so it commutes with the operation of passing to the interior \cite[1.4.C]{Engelking}.}
\begin{align}
\label{conteq1}
\forall \Phi \in C_\cc^\infty(\M_I^\circ) \quad \int_{\M_I} \dyw(\rho Y) \Phi \, d\vol_g = 0,
\end{align}
where $\vol_g$ is the volume measure associated to $g$. Using the divergence theorem \cite[Lemma 10.8]{HR09}, we can rewrite (\ref{conteq1}) as
\begin{align}
\label{conteq2}
\forall \Phi \in C_\cc^\infty(\M_I^\circ) \quad \int_{\M_I} Y(\Phi) \rho \, d\vol_g = 0.
\end{align}
Notice now that by (\ref{GBSmetric}) we have $\vol_g = \int_I |\nabla \T|^{-1} \vol_{\bar{g}_t} \, dt$, where $\vol_{\bar{g}_t}$ denotes the volume measure supported on $\M_t$, associated to the Riemannian metric $\bar{g}|_{\M_t}$. Thus, the continuity equation becomes
\begin{align}
\label{conteq3}
\forall \Phi \in C_\cc^\infty(\M_I^\circ) \quad \int_I \int_{\M_I} |\nabla \T|^{-1} Y(\Phi) \rho \, d\vol_{\bar{g}_t} \, dt = 0.
\end{align}
Finally, in order to obtain the analogue of (\ref{conteq_Minkowski}) for a $\T$-evolution, we replace $\rho \, d\vol_{\bar{g}_t}$ with $d\mu_t$ and introduce $X \vc |\nabla \T|^{-1} Y$, which is a smooth future-directed causal vector field satisfying $X\T = 1$ by (\ref{conteq-1}). Now we are in the position to lower the regularity of $X$, and we arrive at the following definition.
\begin{defn}
\label{def_def_conteq}
For any $\T$-evolution of a probability measure $\mu: I \rightarrow \Pf(\M_I)$ introduce $\eta \vc \int_I \mu_t dt$, which is a Radon measure on $\M_I$. We say that $\mu$ satisfies the continuity equation (in the distributional sense) if
\begin{align}
\label{def_conteq}
\forall \Phi \in C^\infty_{\cc}(\M_I^\circ) \quad \quad \int_{\M_I} X\Phi \, d\eta = 0
\end{align}
for some Borel vector field $X$ which is $\eta$-a.e. future-directed causal and such that $X\T = 1$.
\end{defn}
\begin{rem}
\label{Linfrem}
Notice that $X$ appearing in the above definition is automatically an $L^\infty_\loc(\eta)$-vector field --- i.e. its local coordinate functions are locally $\eta$-essentially bounded (see Definition \ref{def_VF} below for details). Indeed, since it satisfies $X\T = 1$ $\eta$-a.e., it can be expressed as
\begin{align}
\label{Linfrem1}
X = - \tfrac{\nabla \T}{|\nabla \T|^2} + \tfrac{V}{|\nabla \T|},
\end{align}
where $V \vc |\nabla \T|X + \tfrac{\nabla \T}{|\nabla \T|}$ is a Borel vector field satisfying
\begin{align}
\label{Linfrem2}
V\T = 0 \quad \textnormal{and} \quad \bar{g}(V,V) = g(V,V) \leq 1 \quad \eta\textnormal{-a.e.}.
\end{align}
Observe that for any $\Psi \in C^\infty(\M)$ the map $V\Psi$ is locally $\eta$-essentially bounded, because 
\begin{align*}
|V\Psi| = |g(V,\nabla\Psi)| = |\bar{g}(V,\nabla\Psi)| \leq \bar{g}(V,V)^{1/2}\bar{g}(\nabla\Psi,\nabla\Psi)^{1/2} \leq \bar{g}(\nabla\Psi,\nabla\Psi)^{1/2}
\end{align*}
$\eta$-a.e., and the rightmost function is of course locally bounded. But this means that $V$ (and hence $X$) is an $L^\infty_\loc(\eta)$-vector field, because for $\Psi$ one can substitute the (suitably extended) coordinate maps.
\end{rem}


Intuitively, the vector field $X$ describes the `probability flux' as seen by the observer. Since $X$ is explicitly required to be causal, one might expect that the probability measure --- and the physical quantity $Q$ it models --- evolves \emph{causally}, i.e. that $\mu$ satisfies (\ref{causal_measure_map0}). This has been indeed shown in the case of Minkowski spacetime \cite{PRA2017}. Two immediate questions arise: Does this remain true in \emph{any} globally hyperbolic spacetime? Does the \emph{converse} implication hold as well? Here we show that the answer to both these questions is positive. In other words, we demonstrate that the low-regular continuity equation (\ref{conteq_Minkowski}) in fact \emph{characterizes} the causal $\T$-evolutions of probability measures.

In order to achieve this goal, we build upon the results of \cite{Miller17a}, where another characterization of causal $\T$-evolutions of measures has been established. It involves spaces $A_\T^I$ of continuous causal curves $\gamma: I \rightarrow \M$, parame\-trized in such a way that $\T(\gamma(t)) = t$ for all $t \in I$, endowed with the compact-open topology induced from $C(I,\M)$ (cf. Definition \ref{AIT_def} below). When topologized in this way, $A_\T^I$ becomes a locally compact Polish (i.e. separable and completely metrizable) space, and with the help of an apt compactness argument we find a way to `encapsulate' the given causal $\T$-evolution $\mu$ into a single probability measure $\sigma$ on $A_\T^I$. Crucially --- and somewhat surprisingly --- the said topology on $A_\T^I$ `feels' also the \emph{differential} properties of the curves, and we exploit this fact to construct a vector field $X$ in a sense `tangent to $\sigma$', with which $\mu$ is then shown to satisfy (\ref{def_conteq}).

Summing up, the main result of the paper is as follows.
\begin{thm}
\label{premain}
For a $\T$-evolution $\mu: I \rightarrow \Pf(\M_I)$ the following conditions are equivalent:
\begin{enumerate}[(\itshape i)]
\item $\mu$ is \emph{causal}, i.e. it satisfies (\ref{causal_measure_map0}).
\item There exists $\sigma \in \Pf(A_\T^I)$ such that $(\ev_t)_\sharp \sigma = \mu_t$ for every $t \in I$, where $\ev_t: A_\T^I \rightarrow \M$, $\ev_t(\gamma) \vc \gamma(t)$ denotes the evaluation map.
\item If we denote $\eta \vc \int_I \mu_t dt$, there exists an $L^{\infty}_\loc(\eta)$-vector field $X$, future-directed causal and satisfying $X\T = 1$ $\eta$-a.e., with which $\mu$ satisfies the continuity equation (\ref{def_conteq}).

What is more, for any $\Phi \in C^\infty_\cc(\M_I^\circ)$ the maps $\Lambda_\mu(\Phi)(t) \vc \int_{\M_I} \Phi \, d\mu_t$ and $\Lambda_\mu(X\Phi)(t) \vc \int_{\M_I} X \Phi \, d\mu_t$ are well defined for $t \in I$ a.e. and we have that $\Lambda_\mu(\Phi) \in W^{1,\infty}(I)$ along with $\Lambda_\mu(\Phi)' = \Lambda_\mu(X\Phi)$.
\end{enumerate}
\end{thm}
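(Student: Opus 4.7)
The plan is to establish the chain (i) $\Leftrightarrow$ (ii) $\Rightarrow$ (iii) $\Rightarrow$ (i). The equivalence (i) $\Leftrightarrow$ (ii), which encodes the `disintegration' of a causal evolution of measures into a measure on individual causal worldlines, is the main content of \cite{Miller17a}, so I would simply invoke it.

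The implication (ii) $\Rightarrow$ (iii) is the technical heart of the theorem. First, I would establish the announced `hidden regularity' of the compact-open topology on $A_\T^I$. Every $\gamma\in A_\T^I$ is causal and parametrized by $\T$, hence locally Lipschitz with velocity bounded in terms of $|\nabla \T|^{-1}$ and of the local Riemannian geometry on the slices; in particular $\gamma$ admits a weak derivative $\dot\gamma\in L^2_\loc(I,T\M)$ with $\dot\gamma(t)\T=1$ and $\dot\gamma(t)$ future-directed causal for a.e.\ $t$. The key lemma to prove is that $\gamma\mapsto\dot\gamma$ is continuous from $(A_\T^I,\text{compact-open})$ into $L^2_\loc$, so that the compact-open topology coincides with the one inherited from an $H^1_\loc$-Sobolev embedding. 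With this in hand, I would define $X$ by \emph{disintegrating $\sigma$ through the evaluation map}: the map $\widetilde{\ev}: I\times A_\T^I \to \M_I$, $(t,\gamma)\mapsto\gamma(t)$, pushes $dt\otimes\sigma$ forward to $\eta$ (by condition (ii)), and, writing $\{\pi^p\}_{p\in\M_I}$ for the resulting disintegration, I would set
\begin{align*}
X_p \vc \int \dot\gamma(t)\, d\pi^p(t,\gamma).
\end{align*}
The properties of $X$ follow almost mechanically: future-directed causality and $X\T=1$ come from the convexity of the future causal cone and the pointwise identity $\dot\gamma(t)\T=1$; the $L^2_\loc(\eta)$-bound is Jensen combined with the $H^1_\loc$-control. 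The continuity equation is verified by lifting everything to $A_\T^I$: for $\Phi\in C^\infty_\cc(\M_I^\circ)$, compact support in $I^\circ$ and the chain rule $\tfrac{d}{dt}\Phi(\gamma(t))=\dot\gamma(t)\Phi$ give $\int_I \tfrac{d}{dt}\Phi(\gamma(t))\,dt = 0$, and Fubini together with the defining property of $X$ turn this into $\int_{\M_I}X\Phi\,d\eta=0$. Restricting the same calculation to $I\cap(-\infty,t]$ yields the supplementary identity $\Lambda_\mu(\Phi)'=\Lambda_\mu(X\Phi)$ and the $H^1(I)$-regularity of $\Lambda_\mu(\Phi)$.

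For (iii) $\Rightarrow$ (i) I would follow a Lorentzian version of Ambrosio's \emph{superposition principle} (cf.\ \cite{ambrosio2008continuity,Bernard12}): a distributional solution of the continuity equation with $X\in L^2_\loc(\eta)$ can be lifted to a measure $\sigma\in\Pf(A_\T^I)$ concentrated on integral curves of $X$ with $\widetilde{\ev}_t$-marginals equal to $\mu_t$. Causality of $X$ transfers pointwise to the lifted curves, so $\sigma$ is automatically supported on $A_\T^I$; this yields (ii), whence (i). Local compactness of $(A_\T^I,\text{compact-open})$, obtained via the equi-Lipschitz bound and the $H^1_\loc$-identification above through Arzel\`a--Ascoli, together with a mollification argument inside a countable cover by GBS coordinate patches preserving $X\T=1$ and causality, provide the smooth approximation and tightness required by the lifting.

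The main obstacle throughout is the very first step of (ii) $\Rightarrow$ (iii): proving that the compact-open topology on $A_\T^I$ is in fact an $H^1_\loc$-topology. Every subsequent construction --- the disintegration defining $X$, the $L^2_\loc(\eta)$-bound, and the tightness that feeds the superposition principle in the reverse direction --- hinges on this identification. Once it is in place, the rest of the proof is a careful interplay between disintegration, Fubini and the pointwise chain rule for absolutely continuous curves in $\M$.
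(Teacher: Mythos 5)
Your plan for (i) $\Leftrightarrow$ (ii) and for (ii) $\Rightarrow$ (iii) is broadly consonant with the paper (the paper reproves (i) $\Leftrightarrow$ (ii) rather than citing \cite{Miller17a} verbatim, because the curve spaces there are parametrized differently, and it defines the vector field weakly through the Riesz representation formula (\ref{def_X0}) rather than by a pointwise disintegration $X_p=\int\dot\gamma\,d\pi^p$ --- your variant is workable but you would still owe a joint-measurability argument for $(t,\gamma)\mapsto\dot\gamma(t)$, which the weak formulation sidesteps). One caution within that part: your ``key lemma'' as stated, strong continuity of $\gamma\mapsto\dot\gamma$ from the compact-open topology into $L^2_\loc$, is false --- uniformly convergent zigzag causal curves in Minkowski space have derivatives converging only weakly. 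What is true, and what the paper proves (Theorem \ref{stronger_top} together with Lemma \ref{lem_diff_cont}), is that the compact-open topology equals the \emph{weak}-$H^1_\loc$-topology, so that $\gamma\mapsto(\Phi\circ\gamma)'$ is weakly continuous; all the subsequent estimates must be run with weak convergence, which they can be.

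The genuine gap is your step (iii) $\Rightarrow$ (i). You propose to invoke a ``Lorentzian version of Ambrosio's superposition principle'' to lift a solution of the continuity equation to a measure $\sigma\in\Pf(A_\T^I)$ concentrated on integral curves of $X$. No such result is available at the level of generality needed here: the superposition principle is known for Minkowski spacetime (via the Euclidean theorem of \cite{ambrosio2008continuity} and its application in \cite{PRA2017}), and the paper explicitly states that its validity in an arbitrary globally hyperbolic spacetime is an open question deferred to the follow-up work \cite{MillerSuperposition}. Using it here is therefore circular in spirit --- it assumes a theorem strictly stronger than the implication you are trying to prove (it would yield (ii), not merely (i)), and transplanting the $\sR^d$ statement through a Nash--M\"uller embedding is not routine, since the hypotheses of the Euclidean principle (global integrability of the velocity against $\int\mu_t\,dt$, growth conditions) are not guaranteed by the merely local $L^2_\loc(\eta)$ bound and the causal normalization $X\T=1$. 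The paper instead proves (iii) $\Rightarrow$ (i) directly: it reduces $\mu_a\preceq\mu_b$ to the inequalities $\mu_a(J^+(\K))\leq\mu_b(J^+(\K))$ via Proposition \ref{causalKchar}, further to chronological futures $I^+(\K)$, builds the Cauchy hypersurface $\cS=\partial I^+(\M_c\cup\K)$, represents it as the zero set of a smooth causal function $f$ (Bernal--S\'anchez \cite{BS06}), and then tests the continuity equation and the $\eta$-a.e.\ causality of $X$ against $\Omega_m\,(\varphi_n\circ f)$ with cutoffs $\Omega_m$ and monotone approximations $\varphi_n\to\chi_{(0,\infty)}$, using the locality of $X$ (Theorem \ref{thm_VF}) to kill the cutoff terms. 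Without either this direct argument or a proof of the superposition principle in globally hyperbolic spacetimes, your chain of implications does not close.
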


Several remarks are in order. Let us begin with a simple, single-curve illustration.

\begin{ex}
\label{ex1}
Consider the $\T$-evolution of a Dirac measure $\mu_t \vc \delta_{\gamma(t)}$ with $\gamma \in A_\T^I$ being a $C^1$ causal curve. Of course, this evolution is causal (just compare (\ref{causal_measure_map0}) with (\ref{AIT_def1}) below) and the only measure $\sigma \in \Pf(A_\T^I)$ satisfying $(\ev_t)_\sharp \sigma = \delta_{\gamma(t)}$ is, trivially, $\sigma \vc \delta_\gamma$. As for the $L^{\infty}_\loc(\eta)$-vector field $X$, in this case it can be defined via
\begin{align}
\label{premain1}
\forall \Psi \in C^\infty(\M_I^\circ) \qquad X\Psi(q) \vc \left\{ \begin{array}{ll} 
					 (\Psi \circ \gamma)'(t) & \textnormal{for } q = \gamma(t)
					 \\
					 0 & \textnormal{otherwise}.
\end{array} \right.
\end{align}
Notice that, since the measure $\eta \vc \int_I \delta_{\gamma(t)} dt$ is supported on $\gamma(I)$, the values $X\Psi(q)$ for $q \not\in \gamma(I)$ are irrelevant. Thus defined $X$ clearly satisfies $X\T = 1$ $\eta$-a.e., whereas the continuity equation is satisfied because, for any $\Phi \in C^\infty_{\cc}(\M_I^\circ)$,
\begin{align*}
\int_{\M_I} X\Phi \, d\eta = \int_I (\Phi \circ \gamma)'(t) dt = \int_{\gamma^{-1}(\supp \Phi)} (\Phi \circ \gamma)'(t) dt = 0,
\end{align*}
since one can always choose $[a,b] \subset I^\circ$ so that $\gamma^{-1}(\supp \Phi) \subset \T(\supp \Phi) \subset (a,b)$, what indeed yields $\int_I (\Phi \circ \gamma)'(t) dt = \Phi(\gamma(b)) - \Phi(\gamma(a)) = 0$.

The $L^\infty_\loc(\eta)$-regularity of $X$ follows now from Remark \ref{Linfrem1}. Finally, the equality $\Lambda_\mu(\Phi)' = \Lambda_\mu(X\Phi)$ appearing in (iii) in this case reduces to $(\Phi \circ \gamma)' = X\Phi \circ \gamma$, which follows trivially from (\ref{premain1}).
\end{ex}

What Theorem \ref{premain} says is that the above threefold description of a causal $\T$-evolution of a Dirac measure generalizes to any (Borel) probability measure.

Equivalence (i) $\Leftrightarrow$ (ii), as already mentioned, has been established in \cite{Miller17a}. However, the spaces of causal curves have been defined there somewhat broadlier (the curves $\gamma$ being parametrized according to a rule less restrictive than $\T(\gamma(t)) = t$) and the proof can be greatly simplified by exploiting the local compactness of $A_\T^I$. With all that in mind, in the current paper we present a new compact proof of (i) $\Leftrightarrow$ (ii).

In the proof of (ii) $\Rightarrow$ (iii), the announced construction of a vector field $X$ `tangent to $\sigma$' relies, loosely speaking, on `integrating (\ref{premain1}) with respect to $\sigma$', what is made precise by formula (\ref{def_X0}) below. One might naturally ask whether \emph{every} vector field $X$ satisfying (iii) arises in this matter, i.e. whether there always exists $\sigma \in \Pf(A_\T^I)$, to which $X$ is `tangent'. The answer is known to be positive for Minkowski spacetime, what is a direct consequence of the so-called \emph{superposition principle}\footnote{Not to be confused with the superposition principle in linear systems theory!}, which is a result on the verge of the PDE theory and optimal transport theory (cf. \cite[Theorem 3]{Bernard12}, \cite[Theorem 3.2]{ambrosio2008continuity} or \cite[Theorem 6.2.2]{Crippa2007PhD} for various formulations of the principle and \cite[Theorem 3]{PRA2017} for its application to causality in Minkowski spacetime). The question whether analogous result holds for every globally hyperbolic spacetime, however, lies beyond the scope of the present work and will be addressed in the follow-up paper \cite{MillerSuperposition}. Here, instead, we complete the chain of equivalences by showing that (iii) $\Rightarrow$ (i).

Let us emphasize that, in general, neither $\sigma$ nor $X$ are uniquely determined by a given causal $\T$-evolution $\mu$, as the following example shows.

\begin{ex}
Consider the cylindrical spacetime $\M \vc \sR \times S^1$ with the flat metric $g \vc -dt^2 + d\vartheta^2$ and the canonical projection $\pi^0$ as the Cauchy temporal function. Let also $\mu_t \vc \delta_t \times \lambda$ for any $t \in \sR$, where $\lambda$ is the normalized Haar measure on $S^1$. Then in fact $\eta = \cL \times \lambda$, where $\cL$ is the Lebesgue measure on $\sR$, and the smooth vector field $X \vc \partial_t + a \partial_\vartheta$ satisfies (iii) for any fixed $|a| \leq 1$. Moreover, for any $\vartheta \in S^1$ the curve $\gamma_{\vartheta}(t) \vc (t, \vartheta + at)$ is clearly an integral curve of $X$ and it is not difficult to convince oneself that the map $\itGamma: S^1 \rightarrow A_{\pi^0}^\sR$, $\itGamma(\vartheta) \vc \gamma_\vartheta$ is well-defined and continuous. The pushforward measure $\sigma \vc \itGamma_\sharp \lambda$ satisfies (ii), which can be easiest seen by integrating any $f \in C_\cb(\M)$ with respect to $(\ev_t)_\sharp \sigma$ for any $t \in \sR$. Indeed, one has that
\begin{align*}
& \int_\M f d(\ev_t)_\sharp \sigma = \int_\M f d(\ev_t \circ \itGamma)_\sharp \lambda = \int_{S^1} f(t, \vartheta + at) d\lambda(\vartheta) = \int_{S^1} f(t, \vartheta) d\lambda(\vartheta) = \int_\M f d\mu_t,
\end{align*}
where we have used the translational invariance of $\lambda$. To summarize, we have found a \emph{continuum} of measures $\sigma$ and vector fields $X$ (parametrized by $a \in [-1,1]$), all leading to the `constant' evolution $\mu_t = \delta_t \times \lambda$. Although this is in stark contrast with the single-worldline scenario above, it is not very surprising --- intuitively, both $X$ and $\sigma$ contain the information about the motion of the `infinitesimal portions' of probability, and that information might be lost when passing to the `bare' collection of instantaneous measures $\{\mu_t\}_t$ with no relation between them (other than the causal precedence). In other words, each of descriptions (ii) and (iii) gives the fuller kinematical picture of the $\T$-evolution than (i) alone.
\end{ex}

Although Theorem \ref{premain} manifestly employs the notion of a `global time', it gives us the means to find the desired invariant, splitting-independent objects behind the notion of a causal $\T$-evolution along with the transformation rules for passing between different `global observers'. More concretely, we demonstrate that, if $\eta_\bA$, $X_\bA$ and $\eta_\bB$, $X_\bB$ are the measures and vector fields appearing in description (iii) obtained through the use of Cauchy temporal functions $\T_\bA$ and $\T_\bB$, respectively, then
\begin{align*}
\eta_\bB = X_\bA \T_\bB \, \eta_\bA, \qquad \textrm{whereas} \qquad X_\bB = \tfrac{1}{X_\bA \T_\bB} X_\bA,
\end{align*}
and so their product (suitably understood) is the desired invariant, $\eta_\bA X_\bA = \eta_\bB X_\bB$. This is in fact quite expected in light of $\eta X$ being the low-regular counterpart of $\vol_g J$ appearing in the derivation of Definition \ref{def_def_conteq} above. 

Another invariant object is obtained from description (ii), because the measure $\sigma \in \Pf(A_\T^\sR)$ and the space $A_\T^\sR$ itself can be `deparametrized', what leaves is with a well defined probability measure $\upsilon$ on the space of worldlines. In other words, just like a single worldline is an observer-independent, geometrical object modeling the time-evolution of a pointlike particle, so does a probability measure on the space of worldlines in case of a time-evolving nonlocal entity (see \cite[Sec. 2]{Miller17a} for a more detailed discussion).
\\

The plan of the paper is as follows. In Sec. \ref{sec::Polish_spaces} we (re)introduce Polish spaces of causal curves $A_\T^I$ endowed with the compact-open topology. We do so not only to make the paper self-contained, but, more importantly, to unravel the announced sensitivity of the employed topology to the curves' differential properties. This is realized first by introducing the ``weak-$H^1_\loc$-topology'' on $A_\T^I$, obtained through embedding the latter into a suitable $H^1_\loc$-type Sobolev space, and then by carefully proving that the weak topology pulled back on $A_\T^I$ from that Sobolev space is in fact \emph{equal} to the compact-open topology. In Sec. \ref{sec::vector_fields} we look more closely into the notion of a $L^p_\loc$-vector field. Most importantly, we provide its alternative, equivalent definition that is much better suited for proving Theorem \ref{premain}. The latter is done in Sec. \ref{sec::proof} with the necessary tools and lemmas introduced or recalled when needed. Finally, Sec. \ref{sec::discussion} discusses how the descriptions (i)--(iii) transform between various observers and unveils the generally covariant objects behind them. The article is supplemented by three appendices. Appendix \hyperref[sec::AppA]{A} contains several notational remarks. Appendix \hyperref[sec::AppB]{B} summarizes the definitions and facts from Lorentzian causality theory needed in the current investigations. Finally, Appendix \hyperref[sec::AppC]{C} provides a detailed exposition of the theory of Sobolev spaces of univariate functions, along with some technical lemmas used in the proofs throughout the paper.


\section{Polish spaces of continuous causal curves}
\label{sec::Polish_spaces}

Let $(\M,g)$ be a globally hyperbolic spacetime and let $\T$ be a fixed Cauchy temporal function. We shall be considering the following spaces of continuous future-directed causal curves parametrized in accordance with $\T$.
\begin{defn}
\label{AIT_def}
For any interval $I \subset \sR$ we define $A_\T^I$ to be the space of continuous curves $\gamma: I \rightarrow \M$ such that $\T \circ \gamma = \id_I$ and
\begin{align}
\label{AIT_def1}
\forall \, s,t \in I \quad s \leq t \ \Rightarrow \ \gamma(s) \preceq \gamma(t),
\end{align}
endowed with the compact-open topology induced from $C(I,\M)$.
\end{defn} 

Condition (\ref{AIT_def1}) extends the notion of a (future-directed) causal curve beyond the piecewise $C^1$ ones (cf. Appendix \hyperref[sec::AppB]{B}). When itself extended onto $\T$-evolutions of probability measures, it gives rise to condition (\ref{causal_measure_map0}) above.

The compact-open topology on $C(I,\M)$ is exactly the topology of $h$-uniform convergence on compact subsets, where $h$ is any complete Riemannian metric on $\M$, which in particular means that $C(I,\M)$ is a Polish space \cite[Example A.10]{KerrLi}.
\begin{prop}
\label{AIT_Polish}
$A_\T^I$ is a closed subspace of $C(I,\M)$ and hence a Polish space itself.
\end{prop}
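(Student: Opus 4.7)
The plan is to verify closedness of $A_\T^I$ inside $C(I,\M)$ and then invoke the fact that a closed subspace of a Polish space is Polish. Concretely, I would take a sequence $(\gamma_n) \subset A_\T^I$ converging to some $\gamma \in C(I,\M)$ in the compact-open topology (equivalently, $h$-uniformly on compact subsets of $I$ for a fixed complete Riemannian background metric $h$ on $\M$) and show that $\gamma$ itself lies in $A_\T^I$. This amounts to checking the two defining properties: the reparametrization identity $\T \circ \gamma = \id_I$ and the causality condition (\ref{AIT_def1}).

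First I would verify the reparametrization identity. Compact-open convergence implies pointwise convergence, so for every $t \in I$ we have $\gamma_n(t) \to \gamma(t)$ in $\M$, and continuity of $\T$ gives $\T(\gamma(t)) = \lim_n \T(\gamma_n(t)) = \lim_n t = t$. Thus $\T \circ \gamma = \id_I$.

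Second, for the causality condition I would exploit the fact that $\M$ is globally hyperbolic and therefore causally simple, which in particular means that the causal relation $J^+ \subset \M^2$ is a closed subset (this is recorded in Appendix~\hyperref[sec::AppB]{B}). Fix $s \leq t$ in $I$; by hypothesis $(\gamma_n(s),\gamma_n(t)) \in J^+$ for every $n$, and the pair converges to $(\gamma(s),\gamma(t))$ in $\M^2$, so closedness of $J^+$ yields $\gamma(s) \preceq \gamma(t)$. This shows $\gamma \in A_\T^I$ and hence closedness. Finally, closed subspaces of Polish spaces are Polish (a closed subset of a completely metrizable space is completely metrizable in the restricted metric, and separability is inherited), which concludes the proof.

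The step that does the real work is the second one, and the only subtlety worth flagging is that its validity genuinely rests on the \emph{globally hyperbolic} (equivalently here, causally simple) character of $\M$: in a merely causal or strongly causal spacetime, $J^+$ need not be closed, and pointwise limits of causal curves need not remain causal in the sense of (\ref{AIT_def1}). Everything else is a soft application of compact-open convergence plus continuity of $\T$.
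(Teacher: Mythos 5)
Your proposal is correct and follows essentially the same route as the paper's proof: pointwise convergence (implied by compact-open convergence) gives $\T \circ \gamma = \id_I$, and the topological closedness of $J^+$ in the globally hyperbolic (hence causally simple) spacetime --- Proposition \ref{PropCS} --- gives condition (\ref{AIT_def1}), after which closedness in the Polish space $C(I,\M)$ yields Polishness. No gaps to report.
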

\begin{proof}
Let $(\gamma_n) \subset A_\T^I$ converge $h$-uniformly on compact subsets to $\gamma \in C(I,\M)$. Since this implies pointwise convergence, one can write that, for any $t \in I$,
\begin{align*}
\T(\gamma(t)) = \lim_{n \rightarrow +\infty} \T(\gamma_n(t)) = t
\end{align*}
along with, for any $s,t \in I$, $s \leq t$,
\begin{align*}
\gamma(s) = \lim_{n \rightarrow +\infty} \gamma_n(s) \preceq \lim_{n \rightarrow +\infty} \gamma_n(t) = \gamma(t),
\end{align*}
where we have used the topological closedness of $J^+$ (Proposition \ref{PropCS}).
\end{proof}

The compact-open topology is by no means the only one the sets of causal curves can be endowed with. Standard choices include the so-called $C^0$-topology \cite{GerochSplitting,HawkingEllis,Penrose1972} defined for \emph{compact} causal curves (i.e. those with both endpoints) and the compact-open topology on the space of curves param\-e\-trized by their arc-length \cite{YCB67,LimitCurves} (see also \cite{Samann16,SanchezProgress} for more details). In fact, one can show that the sequence $(\gamma_n) \subset A_\T^{[a,b]}$ converges in the compact-open topology iff it converges in the $C^0$-topology \cite[Theorem 3]{Miller17a}. One can thus regard the former as a way of extending the latter to the case of noncompact causal curves --- at least when a (not necessarily Cauchy) temporal function is available.

However, the compact-open topology seems to be too weak for our current purposes. Causal curves, when suitably parametrized, are well known to be locally Lipschitz-continuous (cf. \cite[2.26]{Penrose1972}, \cite[Remark 3.18]{MS08} or Proposition \ref{Liploc} below) and hence a.e. differentiable --- a property not taken into account by the uniform convergence. On the other hand, as already signalized in Sec. \ref{sec::intro}, proving (iii) in Theorem \ref{premain} evidently requires tools sensitive to the curves' differential properties.

Motivated by these observations, we shall endow $A_\T^I$ with the weak topology pulled back from the local Sobolev space $H^1_\loc(I,\sR^N)$ and study its properties. The very idea of studying Sobolev-type regularity of causal curves is by no means new, see e.g. \cite{Candela}, where the Sobolev space structure was defined in terms of local charts (drawing from \cite{Palais}). However, although this approach works well for the strong topology, it becomes cumbersome for the weak one. Here we propose another approach which, instead of local charts, employs a `nice' global embedding provided by the following variant of the celebrated Nash theorem \cite{Nash56}, considered and proven by M\"uller \cite{NashMuller}.
\begin{thm}(Nash--M\"uller)
\label{NM}
If $(\M,h)$ is a complete Riemannian manifold, then there exists $N \in \sN$ and a closed isometric $C^\infty$-embedding $i: \M \hookrightarrow \sR^N$, where $(\sR^N, \cdot)$ is the $N$-dimensional Euclidean space.
\end{thm}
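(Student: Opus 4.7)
The plan is to upgrade the classical Nash $C^{\infty}$ isometric embedding theorem (which for complete Riemannian manifolds yields an isometric $C^\infty$ immersion $\M \hookrightarrow \sR^{N_0}$ that need not be proper) by adding one extra coordinate carrying a proper exhaustion function. The trick is to first deform the metric so that when the exhaustion function is appended, the Euclidean pullback recovers $h$ exactly.

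Step 1: Construct a smooth proper function $\phi \in C^{\infty}(\M,[0,\infty))$ with $|d\phi|_h < \tfrac{1}{2}$ pointwise. Completeness of $(\M,h)$ ensures that the distance $\rho_0(x) \vc d_h(p_0, x)$ from a basepoint is proper (Hopf--Rinow) and $1$-Lipschitz. Smooth $\rho_0$ to a $C^\infty$ function $\tilde\rho$ with $|\tilde\rho - \rho_0|<1$ and $|d\tilde\rho|_h$ bounded (standard partition-of-unity argument), then set $\phi \vc \tfrac{1}{4}\tilde\rho$. The resulting $\phi$ is proper and satisfies $|d\phi|_h < \tfrac12$.

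Step 2: Define the modified symmetric $(0,2)$-tensor $h' \vc h - d\phi \otimes d\phi$. For any tangent vector $v \neq 0$, Cauchy--Schwarz gives $(d\phi(v))^{2} \leq |d\phi|_h^{2}\,|v|_h^{2} < \tfrac14 |v|_h^{2}$, hence
\[
\tfrac34 |v|_h^2 \;\leq\; h'(v,v) \;\leq\; |v|_h^2,
\]
so $h'$ is a smooth Riemannian metric equivalent to $h$. From these two-sided bounds one reads off $\tfrac{\sqrt 3}{2}d_h \leq d_{h'} \leq d_h$, which transfers completeness of $h$ to completeness of $h'$.

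Step 3: Apply Nash's $C^\infty$ isometric embedding theorem to the complete Riemannian manifold $(\M,h')$, obtaining a $C^\infty$ isometric embedding $j: \M \hookrightarrow \sR^{N_1}$. Define
\[
i \vc (j,\phi) : \M \longrightarrow \sR^{N_1+1}.
\]
Then $i^{*}(\,\cdot\,) = j^{*}(\,\cdot\,) + d\phi\otimes d\phi = h' + d\phi\otimes d\phi = h$, so $i$ is isometric (in particular an immersion). Injectivity of $i$ follows from injectivity of $j$. Properness is immediate from $|i(x)|^{2} \geq \phi(x)^{2} \to \infty$ as $x$ leaves compacta, and a proper continuous injective immersion between manifolds is automatically a closed topological embedding, giving the desired closed isometric $C^\infty$-embedding.

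The main obstacle is the existence of a proper $\phi$ with a pointwise bound on $|d\phi|_h$ strictly below one; I expect the crucial input to be completeness of $(\M,h)$ via the distance function, which furnishes a $1$-Lipschitz proper function that can be smoothed and then rescaled. Everything else is a careful bookkeeping of how the Euclidean metric pulls back under $(j,\phi)$ and a standard topological argument that proper injective immersions are closed embeddings.
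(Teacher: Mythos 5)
The paper does not prove this statement at all---it is quoted as Theorem \ref{NM} with a citation to M\"uller's paper \cite{NashMuller}---so there is no internal proof to compare against; your argument is, however, essentially M\"uller's own: subtract $d\phi\otimes d\phi$ of a proper smooth function with $|d\phi|_h<1$ from the metric, apply Nash's (non-compact, $C^\infty$) isometric embedding theorem to the modified metric, and append $\phi$ as an extra coordinate to restore the metric and force properness, which is exactly why $N=\tfrac12 m(m+1)(3m+11)+1$ is one more than Nash's dimension. The proof is correct as sketched; the only bookkeeping caveat is in Step 1, where the smoothing gives some gradient bound $C$ (say $C\le 2$ via a Greene--Wu type approximation of the $1$-Lipschitz distance function), and you should rescale by a constant adapted to $C$ rather than fixing $\tfrac14$ a priori---properness survives any positive rescaling, so nothing else changes.
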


Any such an embedding $i$ will be called a \emph{Nash--M\"uller embedding}. M\"uller even specifies that $N = \tfrac{1}{2} m (m+1)(3m + 11) + 1$, where $m \vc \dim \M$, i.e. one more than in the original result by Nash, who neither assumed the Riemannian metric to be complete nor required the embedding to be closed.

In applying the above theorem, it will be very convenient to fix a concrete Riemannian metric on $\M$ associated to the spacetime metric (\ref{GBSmetric}). Concretely, consider the Riemannian metric $h_0 \vc \alpha \, d\T \otimes d\T + \bar{g}$. By the Nomizu--Ozeki theorem \cite{NomizuOzeki}, it is conformally equivalent to a \emph{complete} Riemannian metric. In other words, there exists a smooth positive function $\theta \in C^\infty(\M)$ such that
\begin{align}
\label{R_metric}
h \vc \theta \alpha \, d\T \otimes d\T + \theta \bar{g} = 2 \theta \alpha \, d\T \otimes d\T + \theta g
\end{align}
is a complete metric. From now on, it will be our Riemannian metric of choice on $\M$.

In terms of $h$ and its associated distance function $d_h$ it is not hard to demonstrate the local Lipschitz-continuity of the causal curves parametrized in accordance with $\T$.
\begin{prop}
\label{Liploc}
Every $\gamma \in A_\T^I$ is locally Lipschitz-continuous.
\end{prop}
\begin{proof}
Fix any $(a,b) \Subset I$ and pick any $t_1,t_2 \in (a,b)$, $t_1 < t_2$. By assumption $\gamma(t_1) \preceq \gamma(t_2)$, i.e., there exists a piecewise-smooth future-directed causal curve $\rho: [0,1] \rightarrow \M$ such that $\rho(0) = \gamma(s)$ and $\rho(1) = \gamma(t)$. We claim that $\rho$ can be reparametrized so as to become a piecewise-smooth element of $A_\T^{[t_1,t_2]}$.

Indeed, begin by noticing that $\T \circ \rho$ is a piecewise-smooth map satisfying $(\T \circ \rho)'(t) = \rho'(t)(\T) > 0$ whenever $\rho'(t)$ exists (cf. Lemma \ref{lem_CT}). This in particular means that it is strictly increasing and maps the interval $[0,1]$ onto $[t_1,t_2]$. All this implies, in turn, that the inverse map $(\T \circ \rho)^{-1}: [t_1,t_2] \rightarrow [0,1]$ is well defined and piecewise-smooth by the inverse function theorem, with its derivative existing at every $\tau = (\T \circ \rho)(t)$ for which $\rho'(t)$ exists and given by $[(\T \circ \rho)^{-1}]'(\tau) = \tfrac{1}{(\T \circ \rho)'(t)} > 0$.

We thus conclude that $\hat{\rho} \vc \rho \circ (\T \circ \rho)^{-1}$ is a piecewise-smooth future-directed causal curve that clearly satisfies $\T \circ \hat{\rho} = \id_{[t_1,t_2]}$ and connects $\gamma(t_1)$ with $\gamma(t_2)$. All this allows us to write
\begin{align}
\nonumber
& d_h(\gamma(t_2),\gamma(t_1)) \leq \int_{t_1}^{t_2} \sqrt{h(\hat{\rho}',\hat{\rho}')}\, d\tau = \int_{t_1}^{t_2} \sqrt{\theta(\hat{\rho})} \sqrt{2 \alpha(\hat{\rho})\smash[b]{\underbrace{|d\T(\hat{\rho}')|^2}_{=\,1} + \underbrace{g(\hat{\rho}',\hat{\rho}')}_{\leq \, 0}}} \, d\tau
\\[4pt]
\label{locLip1}
& \leq \int_{t_1}^{t_2} \sqrt{2 \theta(\hat{\rho}) \alpha(\hat{\rho})} \, d\tau \leq |t_2-t_1| \max\nolimits_{q \in J^+(\gamma(a)) \cap J^-(\gamma(b))} \sqrt{2 \theta(q) \alpha(q)},
\end{align}
where we have suppressed the argument $\tau$ under the integrals. Notice that we rely here on the global hyperbolicity of $\M$, as the compactness of $J^+(\gamma(a)) \cap J^-(\gamma(b))$ ensures the existence of the maximum.
\end{proof}

The above proposition has three important corollaries.
\begin{cor}
\label{diffae}
For any $\gamma \in A_\T^I$ the tangent vector $\gamma'(t)$ exists a.e. and is future-directed causal.
\end{cor}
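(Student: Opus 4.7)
My plan is to extract both assertions of the corollary from the local Lipschitz estimate (\ref{locLip1}) together with the Nash--M\"uller embedding of Theorem \ref{NM}.

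For the existence of $\gamma'(t)$ almost everywhere, I fix a Nash--M\"uller embedding $i: \M \hookrightarrow \sR^N$. Since $i$ is an isometric embedding into Euclidean space, the Euclidean distance is bounded by $d_h$, so (\ref{locLip1}) forces $i \circ \gamma: I \to \sR^N$ to be locally Lipschitz in the usual sense. The classical one-variable result that Lipschitz functions on an interval are differentiable almost everywhere then gives differentiability of $i \circ \gamma$ a.e. Because $i$ is a smooth immersion and $i \circ \gamma$ takes values in the submanifold $i(\M)$, the Euclidean derivative $(i \circ \gamma)'(t)$ lies in the range of $di_{\gamma(t)}$, hence $\gamma'(t) \vc di_{\gamma(t)}^{-1}((i \circ \gamma)'(t))$ is unambiguously defined at almost every $t \in I$.

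The more delicate step is to show this tangent vector is future-directed causal. I would argue locally: at a differentiability point $t \in I^\circ$, choose a convex normal neighborhood $U$ of $\gamma(t)$ and normal coordinates centered at $\gamma(t)$, identifying $T_{\gamma(t)}\M$ with the coordinate space so that $\exp_{\gamma(t)}^{-1}$ becomes the identity in coordinates. By continuity of $\gamma$, there exists $\epsilon > 0$ with $\gamma([t, t+\epsilon]) \subset U$, and within $U$ the restriction $\gamma|_{[t,t+\epsilon]}$ is a continuous future-directed causal curve. Invoking the local characterization of causality inside a convex normal neighborhood --- a continuous future-directed causal curve in $U$ from $p$ to $q$ yields $\exp_p^{-1}(q)$ in the closed future causal cone at $p$ --- one concludes that for every $0 < h < \epsilon$ the coordinate vector of $\gamma(t+h)$ lies in this cone. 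Dividing by $h$ and passing to the limit $h \to 0^+$ exhibits $\gamma'(t)$ as a limit of future-directed causal vectors, and closedness of the cone gives causality of $\gamma'(t)$. Differentiating $\T \circ \gamma = \id_I$ yields $d\T_{\gamma(t)}(\gamma'(t)) = 1 > 0$, which simultaneously rules out $\gamma'(t) = 0$ and selects the future time orientation.

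The main obstacle is the second step: bridging the purely global relation (\ref{AIT_def1}) to the infinitesimal causal cone at $\gamma(t)$. The delicate piece is the local characterization of causality inside a convex normal neighborhood, which typically either relies on approximating the continuous causal curve $\gamma|_{[t, t+h]}$ by piecewise smooth ones in $U$, or directly invokes a standard result from Lorentzian causality theory (e.g.\ from O'Neill or Beem--Ehrlich--Easley) guaranteeing that a continuous causal curve confined to a convex normal neighborhood does correspond to a future-directed causal tangent direction under $\exp^{-1}$.
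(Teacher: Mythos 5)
Your argument is correct in substance but follows a genuinely different route from the paper, and one step needs to be tightened. For a.e.\ existence of $\gamma'(t)$ the paper simply applies the Rademacher theorem to the locally Lipschitz curve coming from (\ref{locLip1}); your detour through the Nash--M\"uller embedding of Theorem \ref{NM} is fine (the bound $|i(p)-i(q)|\leq d_h(p,q)$ and the fact that the Euclidean derivative of a curve lying in the embedded submanifold $i(\M)$ is tangent to it are both standard), it just front-loads machinery the paper only needs later. For causality of $\gamma'(t)$ the paper's proof is a one-liner via Lemma \ref{lem_CT}: for every smooth (bounded) causal function $f$ the composition $f\circ\gamma$ is nondecreasing by (\ref{AIT_def1}), so $\gamma'(t)(f)=(f\circ\gamma)'(t)\geq 0$, and $\gamma'(t)(\T)=1$ rules out the zero vector; this works directly with the global relation $\preceq$, which is exactly what (\ref{AIT_def1}) provides. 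Your convex-normal-neighborhood difference-quotient argument is the classical alternative and buys independence from the causal-function machinery, but it has the gap you yourself flag: (\ref{AIT_def1}) only says $\gamma(t)\preceq\gamma(t+h)$ in $\M$, and a causal curve realizing this relation need not stay inside the convex normal neighborhood $U$, so convexity of $U$ alone does not put $\exp_{\gamma(t)}^{-1}(\gamma(t+h))$ into the future cone. To close this you must either shrink $U$ to be also \emph{causally convex} (strong causality of the globally hyperbolic $\M$ provides a basis of such neighborhoods, and then the intrinsic relation of $U$ coincides with the restriction of $\preceq$), or first invoke the equivalence, recorded in Appendix \hyperref[sec::AppB]{B} via \cite[Proposition 3.19]{MS08}, between condition (\ref{AIT_def1}) and the standard local definition of a continuous causal curve, after which the local cone characterization applies. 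With that repair, your limit-of-causal-vectors argument, together with $d\T_{\gamma(t)}(\gamma'(t))=1$ excluding the zero vector and fixing the time orientation, is a valid proof; it is simply longer and needs more causality-theory input at exactly the point where the paper's use of Lemma \ref{lem_CT} makes the global-to-infinitesimal passage automatic.
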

\begin{proof}
Diffrentiability a.e. follows from the local Lipschitz-continuity of $\gamma$ by the Rademacher theorem. To show that $\gamma'(t)$ is future-directed causal (whenever it exists) we employ Lemma \ref{lem_CT}. Indeed, for any smooth causal function $f$ one has $\gamma'(t)(f) = (f \circ \gamma)'(t) \geq 0$ by the very definition of causal function.
\end{proof}
\begin{cor}
\label{locLip}
Let $i: \M \hookrightarrow \sR^N$ be a Nash--M\"uller embedding. Then the map $i_\ast: A^I_\T \rightarrow H^1_\loc(I,\sR^N)$ defined as $i_\ast(\gamma) \vc i \circ \gamma$ is a well-defined injection.
\end{cor}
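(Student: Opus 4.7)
The plan is to check two things: (a) that $i \circ \gamma$ really belongs to $H^1_\loc(I,\sR^N)$ for every $\gamma \in A^I_\T$, and (b) that the resulting map is injective.

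For (a), I would first transport the local Lipschitz estimate (\ref{locLip1}) from $\M$ into $\sR^N$ through $i$. Since $i$ is isometric, the length of any piecewise smooth curve in $\M$ equals the length of its $i$-image in $\sR^N$; in particular, for any $p,q \in \M$ the Euclidean chord is no longer than the length of any curve joining them, and taking the infimum gives $|i(p)-i(q)| \leq d_h(p,q)$. Combining this with (\ref{locLip1}), for every $[a,b] \Subset I$ and all $s,t \in [a,b]$ we obtain
\begin{align*}
|i(\gamma(t)) - i(\gamma(s))| \;\leq\; d_h(\gamma(t),\gamma(s)) \;\leq\; C_{[a,b]}\,|t-s|,
\end{align*}
with $C_{[a,b]} \vc \max_{q \in J^+(\gamma(a)) \cap J^-(\gamma(b))} \sqrt{2\theta(q)\alpha(q)}$, finite by compactness of the causal diamond in a globally hyperbolic spacetime. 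Hence $i \circ \gamma : I \to \sR^N$ is locally Lipschitz.

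Next, I would invoke the standard one-dimensional theory recalled in Appendix \hyperref[sec::AppC]{C}: a locally Lipschitz map on an interval is absolutely continuous on every compact subinterval, differentiable almost everywhere with essentially locally bounded derivative, and its distributional derivative coincides with the a.e.\ derivative. Consequently $(i \circ \gamma)|_{[a,b]} \in W^{1,\infty}([a,b],\sR^N) \subset H^1([a,b],\sR^N)$ for every $[a,b] \Subset I$, which is precisely the definition of $H^1_\loc(I,\sR^N)$. This proves well-definedness of $i_\ast$.

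For (b), suppose $i_\ast(\gamma_1) = i_\ast(\gamma_2)$ in $H^1_\loc(I,\sR^N)$, i.e.\ $i \circ \gamma_1 = i \circ \gamma_2$ almost everywhere on $I$. Both sides are continuous, so they must agree everywhere on $I$, and the injectivity of the embedding $i$ then forces $\gamma_1 = \gamma_2$. The only real content of the corollary is the length-preservation inequality $|i(p)-i(q)| \leq d_h(p,q)$ together with the one-dimensional Sobolev identification of locally Lipschitz maps with $H^1_\loc$; neither step presents a serious obstacle.
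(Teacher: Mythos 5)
Your proposal is correct and follows essentially the same route as the paper: local Lipschitz continuity of $\gamma$ from (\ref{locLip1}) transfers to $i \circ \gamma$, which then lies in $W^{1,\infty}_\loc(I,\sR^N) \subset H^1_\loc(I,\sR^N)$, and injectivity of $i_\ast$ reduces to that of $i$. You merely make explicit two details the paper leaves implicit --- the distance-nonincreasing property $|i(p)-i(q)| \leq d_h(p,q)$ of the isometric embedding and the upgrade from a.e.\ equality to everywhere equality via continuity --- which is fine.
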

\begin{proof}
Since every $\gamma \in A_\T^I$ is locally Lipschitz-continuous, so is $i \circ \gamma$. The latter is equivalent to saying that $i_\ast(\gamma) \in W^{1,\infty}_\loc(I,\sR^N)$ (cf. \cite[Section 5.8, Theorem 4]{Evans}), which by (\ref{Holder}) implies that $i_\ast(\gamma) \in H^1_\loc(I,\sR^N)$. The injectivity of $i_\ast$ follows immediately from the injectivity of $i$.
\end{proof}

Whenever $\gamma'(t)$ exists, there also exists the strong derivative $(i \circ \gamma)'(t) = di_{\gamma(t)}\gamma'(t)$. Since the former exists a.e., the latter can be regarded as the weak derivative of $i_\ast(\gamma)$. Moreover, by the isometricity of $i$ and by Corollary \ref{diffae} we can write that
\begin{align}
\label{derbound}
& |i_\ast(\gamma)'(t)| = |di_{\gamma(t)}\gamma'(t)| = \sqrt{h(\gamma'(t),\gamma'(t))} \leq \sqrt{2 \theta(\gamma(t))\alpha(\gamma(t))} \quad \textnormal{for } t \in I \textnormal{ a.e.}
\end{align}
Denoting the strong and weak derivatives with the same symbol $'$ should not lead to any confusion.

\begin{cor}
\label{C0viaNM}
Let $i: \M \hookrightarrow \sR^N$ be a Nash--M\"uller embedding. Then $i_\ast(A_\T^I)$ is a closed subset of $C(I,\sR^N)$ endowed with the compact-open topology.
\end{cor}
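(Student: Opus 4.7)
The plan is to exploit two features of the Nash--M\"uller embedding $i$: it is a \emph{closed} embedding, and it is in particular a topological embedding, so that $i(\M)$ is closed in $\sR^N$ and $i^{-1}: i(\M) \to \M$ is continuous. Combined with Proposition \ref{AIT_Polish}, this should reduce the closedness of $i_\ast(A^I_\T)$ to the closedness of $A^I_\T$ in $C(I,\M)$.

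Concretely, first I would take a sequence $(\gamma_n) \subset A_\T^I$ such that $i_\ast(\gamma_n) = i \circ \gamma_n$ converges in the compact-open topology of $C(I,\sR^N)$ to some $f \in C(I,\sR^N)$, and show that $f$ is of the form $i \circ \gamma$ for some $\gamma \in A_\T^I$. Since compact-open convergence implies pointwise convergence, for each $t \in I$ we have $i(\gamma_n(t)) \to f(t)$ in $\sR^N$. As $i(\M)$ is closed in $\sR^N$ (by Theorem \ref{NM}), necessarily $f(t) \in i(\M)$, and we may define $\gamma(t) \vc i^{-1}(f(t))$.

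Next, I would check that $\gamma \in A_\T^I$. Continuity of $\gamma = i^{-1} \circ f$ follows at once from the continuity of $f$ and of $i^{-1}: i(\M) \to \M$ (the latter being a homeomorphism onto its image). Pointwise convergence $\gamma_n(t) \to \gamma(t)$ in $\M$ then follows from $i(\gamma_n(t)) \to i(\gamma(t))$ in $\sR^N$ and the continuity of $i^{-1}$. With this in hand, the two defining conditions of $A^I_\T$ are verified exactly as in Proposition \ref{AIT_Polish}: for every $t \in I$, continuity of $\T$ yields
\[
\T(\gamma(t)) = \lim_{n\to\infty} \T(\gamma_n(t)) = t,
\]
and for $s \leq t$ in $I$, the topological closedness of $J^+$ (Proposition \ref{PropCS}) applied to the convergent sequence $(\gamma_n(s), \gamma_n(t)) \to (\gamma(s),\gamma(t))$ in $\M^2$ yields $\gamma(s) \preceq \gamma(t)$. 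Hence $\gamma \in A_\T^I$, and by construction $i_\ast(\gamma) = f$, so $f \in i_\ast(A^I_\T)$, proving closedness.

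There is no substantial obstacle here; the only place where some care is warranted is the use of \emph{closedness} of the Nash--M\"uller embedding, which is precisely what secures that the pointwise limits $f(t)$ remain inside $i(\M)$ and thus admit pullback to points of $\M$. Without this, a limit $f$ could a priori escape to $\sR^N \setminus i(\M)$, and the argument would collapse. This highlights why the refinement of Nash's theorem due to M\"uller (Theorem \ref{NM}) is essential for the present approach.
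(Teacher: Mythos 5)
Your proof is correct and takes essentially the same route as the paper: the closedness of the Nash--M\"uller embedding forces the limit to stay inside $i(\M)$, after which the pulled-back limit is shown to lie in $A_\T^I$ by the closedness argument of Proposition \ref{AIT_Polish}. The only cosmetic difference is that the paper additionally invokes the isometry of $i$ to upgrade to $h$-uniform convergence on compacta and then cites Proposition \ref{AIT_Polish} as a black box, whereas you work directly with pointwise convergence and repeat its short verification --- both are fine.
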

\begin{proof}
Suppose $(i \circ \gamma_n)$ converges to some $u \in C(I,\sR^N)$ uniformly on compact sets. Since $i$ is by definition a closed embedding, we must have $u = i \circ \gamma$ for some $\gamma \in \M^I$. But since $i$ is also an isometry, we obtain that $\gamma_n \rightarrow \gamma$ uniformly on compact sets, and so $\gamma \in A_\T^I$ by Proposition \ref{AIT_Polish}.
\end{proof}

We are finally ready to define the announced topology on $A_\T^I$.

\begin{defn}
\label{weakH1loctop}
Let $i: \M \hookrightarrow \sR^N$ be a Nash--M\"uller embedding. By the \emph{weak-$H^1_\loc$-topology} on $A^I_\T$ we shall understand the coarsest topology under which $i_\ast$ is weakly continuous. In other words, a net $(\gamma_\lambda) \subset A^I_\T$ converges to $\gamma$ in this topology (in symbols $\gamma_\lambda \xrightharpoonup{i} \gamma$) if $i_\ast(\gamma_\lambda) \rightharpoonup i_\ast(\gamma)$ in $H^1_\loc(I,\sR^N)$ or, equivalently, if $\langle i \circ \gamma_\lambda, v \rangle_{H^1[a,b]} \rightarrow \langle i \circ \gamma , v \rangle_{H^1[a,b]}$ for every $[a,b] \subset I^\circ$ and $v \in H^1([a,b],\sR^N)$.
\end{defn}
\begin{rem}
\label{rem_weakH1loctop}
Thanks to its injectivity (Corollary \ref{locLip}), $i_\ast: A^I_\T \hookrightarrow H^1_\loc(I,\sR^N)$ becomes a topological embedding. Observe also that $\gamma_\lambda \xrightharpoonup{i} \gamma$ in $A^I_\T$ iff $\gamma_\lambda|_{[a,b]} \xrightharpoonup{i} \gamma|_{[a,b]}$ in $A^{[a,b]}_\T$ for every $[a,b] \subset I^\circ$.
\end{rem}

Clearly, nothing prohibits us from analogously defining the \emph{strong-$H^1_\loc$-topology} on $A_\T^I$, but here we shall focus exclusively on the weak-$H^1_\loc$-topology, eventually unraveling that it is actually \emph{equal} to the compact-open topology. Note that this in particular means that it does not depend on the specific choice of the Nash--M\"uller embedding. 

First, however, we need a couple of lemmas.
\begin{lem}
\label{lem_ev_contA}
For any $t \in I^\circ$ the evaluation map $\ev_t$ is weakly-$H^1_\loc$-continuous.
\end{lem}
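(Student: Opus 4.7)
The plan is to reduce the weak-$H^1_\loc$-continuity of $\ev_t$ to the elementary fact that pointwise evaluation is a bounded linear functional on the Hilbert space $H^1([a,b],\sR^N)$, and then to exploit that bounded linear maps are automatically weak-to-weak continuous even on nets. First I would fix $t \in I^\circ$ and choose some compact interval $[a,b] \subset I^\circ$ containing $t$. Given a net $\gamma_\lambda \xrightharpoonup{i} \gamma$ in $A_\T^I$, Remark \ref{rem_weakH1loctop} gives the restricted weak convergence $(i \circ \gamma_\lambda)|_{[a,b]} \rightharpoonup (i \circ \gamma)|_{[a,b]}$ in $H^1([a,b],\sR^N)$, which is where all the work will take place.

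The key input is the one-dimensional Sobolev embedding $H^1([a,b]) \hookrightarrow C([a,b])$ (to be recorded in Appendix \hyperref[sec::AppC]{C}): it makes the componentwise evaluation $\delta_t \colon u \mapsto u(t)$ a bounded linear map from $H^1([a,b],\sR^N)$ to $\sR^N$. Since any bounded linear map is weakly continuous on nets, and since weak and norm topologies coincide on the finite-dimensional target $\sR^N$, the weak convergence above transfers to norm convergence $i(\gamma_\lambda(t)) = \delta_t((i \circ \gamma_\lambda)|_{[a,b]}) \to \delta_t((i \circ \gamma)|_{[a,b]}) = i(\gamma(t))$ in Euclidean norm. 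Finally, because $i$ is a \emph{closed} topological embedding by Theorem \ref{NM}, it is a homeomorphism onto its image, so this lifts to $\gamma_\lambda(t) \to \gamma(t)$ in $\M$, i.e. $\ev_t(\gamma_\lambda) \to \ev_t(\gamma)$, proving the claim.

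There is no substantial obstacle here; the only minor care-point is that we are working with nets rather than sequences. This precludes the slickest argument (the compact embedding $H^1 \hookrightarrow C^0$ given by Rellich--Kondrachov plus metrizability of weak convergence on bounded sets of a separable Hilbert space), but the direct route via weak continuity of bounded linear functionals works verbatim for nets and suffices.
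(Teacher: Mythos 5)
Your argument is correct and is essentially the paper's own proof: the paper simply packages your key step (evaluation at $t$ is a bounded, hence weak-to-weak continuous, linear functional on $H^1([a,b],\sR^N)$ via the Sobolev/Morrey embedding, composed with restriction to $[a,b]$) as Lemma \ref{lem_ev_cont} in Appendix \hyperref[sec::AppC]{C}, and then, exactly as you do, uses that the closed Nash--M\"uller embedding $i$ is a homeomorphism onto $i(\M)$ to pull the convergence $i(\gamma_\lambda(t)) \to i(\gamma(t))$ back to $\M$. No gap to report.
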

\begin{proof}
By Lemma \ref{lem_ev_cont} and the continuity of $i_\ast: A^I_\T \rightarrow H^1_\loc(I,\sR^N)$, we have that the map $A^I_\T \ni \gamma \mapsto i(\gamma(t)) \in i(\M)$ is weakly-$H^1_\loc$-continuous. The fact that $i(\M)$ is homeomorphic with $\M$ finishes the proof.
\end{proof}

For the next results we need to consider certain specific subsets of causal curves. Namely, for any subsets $\U,\W \subset \M$ let us define
\begin{align*}
A^{[a,b]}_\T(\U,\W) \vc A^{[a,b]}_\T \cap \ev_a^{-1}(\U) \cap \ev_b^{-1}(\W) = \{ \gamma \in A^{[a,b]}_\T \ | \ \gamma(a) \in \U, \gamma(b) \in \W \}.
\end{align*}
\begin{lem}
\label{lem_compact}
For any compact subsets $\K_1,\K_2 \subset \M$ the set $A^{[a,b]}_\T(\K_1,\K_2)$ is metrizable and compact in the weak-$H^1_\loc$-topology.
\end{lem}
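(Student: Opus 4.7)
The plan is to realize $A_\T^{[a,b]}(\K_1,\K_2)$, via the injection $i_\ast$ of Corollary \ref{locLip}, as a uniformly $H^1([a,b],\sR^N)$-bounded and weakly closed subset of that Hilbert space, and then invoke reflexivity and separability to harvest both weak compactness and weak metrizability. The last step will be transferring these properties from the weak-$H^1([a,b])$ to the (a priori coarser) weak-$H^1_\loc$ topology by a classical compact-to-Hausdorff rigidity argument.

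\emph{Uniform bound.} Any $\gamma \in A_\T^{[a,b]}(\K_1,\K_2)$ has image inside the causal diamond $K \vc J^+(\K_1) \cap J^-(\K_2)$, which is compact by global hyperbolicity. Consequently $i \circ \gamma$ takes values in the compact set $i(K) \subset \sR^N$ (giving a uniform $L^\infty$-, hence $L^2$-bound on $i_\ast(\gamma)$), while (\ref{derbound}) yields $|i_\ast(\gamma)'(t)| \leq M \vc \max_{q \in K} \sqrt{2\theta(q)\alpha(q)} < +\infty$ for a.e.\ $t \in [a,b]$. Together, these give the desired uniform $H^1([a,b],\sR^N)$-bound on $i_\ast(A_\T^{[a,b]}(\K_1,\K_2))$.

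\emph{Weak closedness and compactness.} Given $(\gamma_n) \subset A_\T^{[a,b]}(\K_1,\K_2)$ with $i_\ast(\gamma_n) \rightharpoonup u$ weakly in $H^1([a,b],\sR^N)$, the compact Sobolev embedding $H^1([a,b]) \hookrightarrow C([a,b])$ upgrades the convergence to uniform. Corollary \ref{C0viaNM} then identifies $u = i \circ \gamma$ for some $\gamma \in A_\T^{[a,b]}$, and the uniform convergence together with the closedness of $\K_1, \K_2$ forces $\gamma(a) \in \K_1$ and $\gamma(b) \in \K_2$. Hence $i_\ast(A_\T^{[a,b]}(\K_1,\K_2))$ is weakly closed; being also bounded in the reflexive space $H^1([a,b],\sR^N)$, it is weakly compact (by Banach--Alaoglu and Kakutani), and the separability of $H^1([a,b],\sR^N)$ renders the weak topology on this bounded set metrizable.

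\emph{Passage to weak-$H^1_\loc$ and main obstacle.} Weak convergence in $H^1([a,b],\sR^N)$ restricts to weak convergence in $L^2([c,d],\sR^N)$ of both the functions and their derivatives for each $[c,d] \subset (a,b)$, and therefore implies weak convergence in $H^1([c,d],\sR^N)$; consequently, on $A_\T^{[a,b]}(\K_1,\K_2)$ the weak-$H^1([a,b])$ topology is finer than the weak-$H^1_\loc$ one. The latter is Hausdorff, being pulled back by the injection $i_\ast$ from the Hausdorff weak topology on $H^1_\loc((a,b),\sR^N)$, so the identity map from the compact space $(A_\T^{[a,b]}(\K_1,\K_2), \text{weak-}H^1([a,b]))$ to the Hausdorff space $(A_\T^{[a,b]}(\K_1,\K_2), \text{weak-}H^1_\loc)$ is a continuous bijection, hence a homeomorphism. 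The two topologies therefore coincide on $A_\T^{[a,b]}(\K_1,\K_2)$, which is thus both compact and metrizable in the weak-$H^1_\loc$ topology, as claimed. I expect the principal obstacle to be the uniform $H^1$-bound, because it is the only step genuinely using Lorentzian geometry: it combines compactness of causal diamonds in globally hyperbolic spacetimes, the pointwise Sobolev-derivative estimate (\ref{derbound}) inherited from the Nash--M\"uller embedding, and the explicit form (\ref{R_metric}) of the Riemannian metric. Once these geometric ingredients are in place, the remainder is a standard combination of reflexivity, compact Sobolev embedding, and compact-Hausdorff rigidity.
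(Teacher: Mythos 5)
Your proposal is correct and follows essentially the same route as the paper's proof: the uniform $\|.\|_{H^1[a,b]}$-bound via the compact causal diamond $J^+(\K_1)\cap J^-(\K_2)$ and (\ref{derbound}), weak closedness by upgrading weak $H^1$-convergence to uniform convergence and invoking Corollary \ref{C0viaNM}, then Alaoglu/reflexivity together with separability for weak compactness and metrizability. The only minor divergence is the final passage to the weak-$H^1_\loc$-topology, where you use a compact-to-Hausdorff rigidity argument in place of the paper's Lemma \ref{lem_bounded} (coincidence of the two weak topologies on norm-bounded sets); both are valid.
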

\begin{proof}
Firstly, observe that $i_\ast(A^{[a,b]}_\T(\K_1,\K_2))$ is $\|.\|_{H^1[a,b]}$-bounded. Indeed, since the set $J^+(\K_1) \cap J^-(\K_2)$ is compact (Proposition \ref{PropCCC}), we have that, for any $\gamma \in A^{[a,b]}_\T(\K_1,\K_2)$,
\begin{align*}
\int_a^b |i(\gamma(t))|^2 dt & \leq (b-a) \max\nolimits_{q \in J^+(\K_1) \cap J^-(\K_2)} |i(q)|^2 < \infty
\\
\textnormal{and} \quad \int_a^b |i(\gamma(t))'|^2 dt & \leq 2 (b-a) \max\nolimits_{q \in J^+(\K_1) \cap J^-(\K_2)} \theta(q)\alpha(q) < \infty,
\end{align*}
where in the second inequality we have used (\ref{derbound}).

We have thus proven that $i_\ast(A^{[a,b]}_\T(\K_1,\K_2))$ is contained in a closed ball of the normed space $H^1([a,b],\sR^N)$. Since the latter space is separable \cite[Theorem 3.6]{Adams}, that ball is weakly metrizable \cite[Theorem 6.30]{Aliprantis} and, on the strength of Alaoglu's theorem \cite[Theorem 6.21]{Aliprantis}, weakly compact. Hence, also $i_\ast(A^{[a,b]}_\T(\K_1,\K_2))$ is weakly metrizable, whereas proving its weak compactness amounts to showing that for any sequence $(\gamma_n) \subset A^{[a,b]}_\T(\K_1,\K_2)$ such that $i_\ast(\gamma_n) \rightharpoonup u$ in $H^1([a,b],\sR^N)$ one has $u = i_\ast(\gamma)$ for some $\gamma \in A^{[a,b]}_\T(\K_1,\K_2)$. Indeed, by Lemma \ref{lem_cont_rep} i) $i_\ast(\gamma_n) \rightarrow u$ in $C([a,b],\sR^N)$, and from Corollary \ref{C0viaNM} we obtain that $u = i_\ast(\gamma)$ for some $\gamma \in A^{[a,b]}_\T$. Of course, uniform convergence yields also that $\gamma(a) \in \K_1$ and $\gamma(b) \in \K_2$.

Finally, thanks to $i_\ast$ being an embedding (Remark \ref{rem_weakH1loctop}), we obtain the desired result.
\end{proof}

\begin{lem}
\label{lem_metrizable}
$A^I_\T$ is metrizable in the weak-$H^1_\loc$-topology.
\end{lem}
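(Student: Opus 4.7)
\textbf{Proof plan for Lemma \ref{lem_metrizable}.}

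The strategy is to reduce to the case of a compact interval via a countable product embedding, and then to apply Urysohn's metrization theorem. First, I would pick a compact exhaustion $I_n = [a_n, b_n]$ of $I^\circ$ with $I_n \subset I_{n+1}$ and $\bigcup_n I_n = I^\circ$, and consider
\[
\Phi : A^I_\T \to \prod_{n=1}^\infty A^{I_n}_\T, \qquad \Phi(\gamma) := (\gamma|_{I_n})_n.
\]
Injectivity of $\Phi$ follows from the density of $\bigcup_n I_n$ in $I$ together with the continuity of curves in $A^I_\T$; bi-continuity is immediate from Remark \ref{rem_weakH1loctop} combined with the obvious continuity of the restrictions $A^{I_{n+1}}_\T \to A^{I_n}_\T$. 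Since a countable product of metrizable spaces is metrizable and subspaces of metrizable spaces are metrizable, the problem reduces to showing that $A^{[a,b]}_\T$ is metrizable for every compact $[a,b]$.

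Fixing such $[a,b]$, I would take a compact exhaustion $\K_m$ of $\M$ with $\K_m \subset \mathrm{int}_\M \K_{m+1}$, and set $Y_m := A^{[a,b]}_\T(\K_m, \K_m)$. By Lemma \ref{lem_compact} each $Y_m$ is weakly-$H^1_\loc$-compact and metrizable, and clearly $A^{[a,b]}_\T = \bigcup_m Y_m$. The crucial intermediate step is to establish the weak-$H^1_\loc$-continuity of the endpoint evaluations $\ev_a, \ev_b : A^{[a,b]}_\T \to \M$ -- a fact \emph{not} covered by Lemma \ref{lem_ev_contA}, which only treats $t \in I^\circ = (a,b)$. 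For this I would argue via causality: if $\gamma_\lambda \xrightharpoonup{i} \gamma_0$ then Lemma \ref{lem_ev_contA} yields pointwise convergence on $(a,b)$, so by $\gamma_\lambda(a) \preceq \gamma_\lambda(t)$ for any fixed $t \in (a,b)$ and the compactness of $J^-(\K) \cap \M_a$ for $\K \subset \M$ compact, the net $\gamma_\lambda(a)$ is eventually confined to a compact subset of $\M_a$. Any cluster point $p$ of $\gamma_\lambda(a)$ then satisfies $p \preceq \gamma_0(t)$ for every $t > a$ by the closedness of $J^+$ (Proposition \ref{PropCS}); letting $t \searrow a$ and using the continuity of $\gamma_0$ together with the acausality of the Cauchy hypersurface $\M_a$ forces $p = \gamma_0(a)$. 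Hence $\gamma_\lambda(a) \to \gamma_0(a)$, and symmetrically for $\ev_b$.

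With endpoint continuity in hand, the sets $U_m := \ev_a^{-1}(\mathrm{int}_\M \K_m) \cap \ev_b^{-1}(\mathrm{int}_\M \K_m)$ are open in $A^{[a,b]}_\T$, contained in the compact metrizable space $Y_m$, and exhaust $A^{[a,b]}_\T$. Each $U_m$, being open in a second countable space, is itself second countable; hence so is $A^{[a,b]}_\T$ as a countable union of second countable open subspaces. It is also regular, being a subspace of the Hausdorff locally convex space $H^1_\loc((a,b), \sR^N)$ equipped with the weak topology. Urysohn's metrization theorem then completes the proof.

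The main obstacle is precisely the endpoint continuity step: without it the $U_m$ are not visibly open and the covering argument collapses, since a generic weakly-$H^1_\loc$-convergent net need not converge at the endpoints of $(a,b)$. It is here that the causal structure of $\M$ -- closedness of $J^+$, acausality of Cauchy hypersurfaces, and compactness of $J^\pm(\K) \cap \M_a$ -- enters the proof in an essential way.
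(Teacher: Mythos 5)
Your argument is correct, but it follows a genuinely different route from the paper's in the compact case. The paper never needs continuity of the endpoint evaluations $\ev_a,\ev_b$: for an interior $t$ and a precompact open $\U \ni \gamma(t)$ it encloses the open set $\ev_t^{-1}(\U)$ (open by Lemma \ref{lem_ev_contA}) inside $A^{[a,b]}_\T\bigl(J^-(\overline{\U})\cap\M_a,\,J^+(\overline{\U})\cap\M_b\bigr)$, which is compact and metrizable by Lemma \ref{lem_compact} and Proposition \ref{PropCCC}; this gives local compactness and local metrizability, and metrizability then follows from the chain ``locally compact Hausdorff $\Rightarrow$ regular, $\sigma$-compact $\Rightarrow$ Lindel\"of, regular Lindel\"of $\Rightarrow$ paracompact, paracompact Hausdorff locally metrizable $\Rightarrow$ metrizable''; the passage to a general interval is done by summing metrics $d_m$ on the $A^{[a_m,b_m]}_\T$ in a weighted series rather than by your embedding into the countable product $\prod_n A^{I_n}_\T$ (the two devices are equivalent, though in your version you should take $I_n\subset I_{n+1}^\circ$ so that every $[a,b]\subset I^\circ$ sits in some $I_n^\circ$, since the weak-$H^1_\loc$-topology on $A^{I_n}_\T$ only controls pairings over compact subintervals of $I_n^\circ$). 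You instead prove weak-$H^1_\loc$-continuity of $\ev_a,\ev_b$ directly via the causality argument (closedness of $J^+$, compactness of $J^-(\K)\cap\M_a$, acausality of the level set $\M_a$), use it to produce the open exhaustion $U_m$ sitting inside the compact metrizable sets $A^{[a,b]}_\T(\K_m,\K_m)$, conclude second countability, and invoke Urysohn's metrization theorem together with regularity inherited from the weak topology of the ambient locally convex space. Your causality step is sound and in fact anticipates part of Proposition \ref{prop_ev_contB}, which in the paper is obtained only \emph{after} the equality of topologies (Theorem \ref{stronger_top}); so your route costs an extra causal argument but buys endpoint continuity as a by-product and replaces the paracompactness/Smirnov machinery by the more elementary second-countability-plus-Urysohn argument, while the paper's route gets by with only the interior-evaluation lemma and the compactness results already at hand.
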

\begin{proof}
Assume first that $I \vc [a,b]$. Since $A^{[a,b]}_\T = \bigcup_{n,m} A^{[a,b]}_\T(\K_n,\K_m)$, where $\{\K_n\}$ is an exhaustion of $\M$ by compact sets, by Lemma \ref{lem_compact} we obtain that $A^{[a,b]}_\T$ is $\sigma$-compact.

Moreover, $A^{[a,b]}_\T$ is also locally compact and locally metrizable. Indeed, fix any $\gamma \in A^{[a,b]}_\T$ along with $t \in (a,b)$ and let $\U$ be an open precompact neighborhood of $\gamma(t)$ in $\M$. By Lemma \ref{lem_ev_contA} the set $\ev_t^{-1}(\U)$ is an open neighborhood of $\gamma$ in $A^{[a,b]}_\T$. Let us define the sets
\begin{align}
\label{KaKb}
\K_a \vc J^-(\overline{\U}) \cap \T^{-1}(a) \quad \textnormal \quad \K_b \vc J^+(\overline{\U}) \cap \T^{-1}(b),
\end{align}
which are compact by Proposition \ref{PropCCC}. Observe that $\ev_t^{-1}(\U) \subset A^{[a,b]}_\T(\K_a,\K_b)$. By Lemma \ref{lem_compact}, the latter is thus a compact and metrizable neighborhood of $\gamma$.

Of course $A^{[a,b]}_\T$ is a Hausdorff space (because $H^1_\loc([a,b],\sR^N)$ with the weak topology is). Now enter some general topological facts \cite{Willard}.
\begin{itemize}
\item Every locally compact Hausdorff space is regular.
\item Every $\sigma$-compact space is Lindel\"of.
\item Every regular Lindel\"of space is paracompact.
\item Every Hausdorff paracompact locally metrizable space is metrizable.
\end{itemize}
Hence $A^{[a,b]}_\T$ is metrizable in the weak-$H^1_\loc$-topology.

Let now $I$ be any interval and let $\{[a_m,b_m]\}$ be its exhaustion by compact subintervals. For every $m \in \sN$ let $d_m$ be a metric on $A^{[a_m,b_m]}_\T$ inducing the weak-$H^1_\loc$-topology, existing by the previous part of the proof. It is easy to convince oneself that the map $d: A^I_\T \times A^I_\T \rightarrow \sR$ defined via
\begin{align*}
d(\gamma,\rho) \vc \sum_{m \in \sN} \frac{1}{2^m} \frac{d_m(\gamma|_{[a_m,b_m]},\rho|_{[a_m,b_m]})}{1 + d_m(\gamma|_{[a_m,b_m]},\rho|_{[a_m,b_m]})}
\end{align*}
is a metric inducing the weak-$H^1_\loc$-topology on $A^I_\T$ (cf. Remark \ref{rem_weakH1loctop}), what concludes the proof.
\end{proof}

The fact that $A^I_\T$ is metrizable in the weak-$H^1_\loc$-topology may seem surprising --- after all, the spaces $H^1_\loc(I, \sR^N)$ and $H^1(I, \sR^N)$ are \emph{not} weakly metrizable or even weakly first countable (cf. e.g. \cite[Theorem 6.26]{Aliprantis}). This is extremely convenient, because it allows us to use sequences instead of cumbersome nets, which in turn enables us to show the announced equality of the compact-open topology and the weak-$H^1_\loc$-topology without much difficulty.

\begin{thm}
\label{stronger_top}
The weak-$H^1_\loc$-topology on $A^I_\T$ is equal to the compact-open topology.
\end{thm}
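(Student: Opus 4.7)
The plan is to prove equality by checking agreement of convergent sequences, which is legitimate because both topologies are metrizable: the compact-open one by Proposition~\ref{AIT_Polish}, the weak-$H^1_\loc$ one by Lemma~\ref{lem_metrizable}. Two metrizable topologies on the same underlying set coincide as soon as they share the same convergent sequences, so the task splits into two sequential implications.

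For the implication ``weak-$H^1_\loc$ $\Rightarrow$ compact-open'', suppose $\gamma_n \xrightharpoonup{i} \gamma$. By Remark~\ref{rem_weakH1loctop} this is equivalent to $i_\ast(\gamma_n)|_{[a,b]} \rightharpoonup i_\ast(\gamma)|_{[a,b]}$ in $H^1([a,b],\sR^N)$ for every $[a,b] \subset I^\circ$. The one-dimensional compact Sobolev embedding $H^1([a,b],\sR^N) \hookrightarrow C([a,b],\sR^N)$---packaged in Lemma~\ref{lem_cont_rep}~i) cited earlier---upgrades weak $H^1$-convergence to uniform convergence of $i\circ\gamma_n$ to $i\circ\gamma$ on $[a,b]$. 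Since $i$ is a closed isometric embedding of $(\M,h)$ into Euclidean space, this is precisely uniform $h$-convergence of $\gamma_n$ to $\gamma$ on every compact subinterval of $I^\circ$, i.e.\ the compact-open convergence.

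For the converse, assume $\gamma_n \to \gamma$ in the compact-open topology and fix an arbitrary $[a,b] \subset I^\circ$; by Remark~\ref{rem_weakH1loctop} it suffices to show $\gamma_n|_{[a,b]} \xrightharpoonup{i} \gamma|_{[a,b]}$. Pointwise convergence at $a$ and $b$ makes $\K_1 \vc \{\gamma(a)\} \cup \{\gamma_n(a) : n \in \sN\}$ and $\K_2 \vc \{\gamma(b)\} \cup \{\gamma_n(b) : n \in \sN\}$ compact in $\M$, so the sequence $(\gamma_n|_{[a,b]})$ sits inside the weak-$H^1_\loc$-compact and metrizable set $A^{[a,b]}_\T(\K_1,\K_2)$ supplied by Lemma~\ref{lem_compact}. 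Given any subsequence $(\gamma_{n_k}|_{[a,b]})$, compactness extracts a further subsequence weak-$H^1_\loc$-converging to some $\tilde\gamma \in A^{[a,b]}_\T(\K_1,\K_2)$. The forward direction just established then yields uniform convergence of this further subsequence to $\tilde\gamma$ on $[a,b]$; but by hypothesis it already converges uniformly to $\gamma|_{[a,b]}$, forcing $\tilde\gamma = \gamma|_{[a,b]}$. Thus every subsequence of $(\gamma_n|_{[a,b]})$ admits a further weak-$H^1_\loc$-convergent subsequence with the common limit $\gamma|_{[a,b]}$, and the metrizability of the ambient compactum closes the argument.

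The main obstacle is the second direction, since one does not \emph{a priori} expect weak $H^1_\loc$-convergence to follow from mere uniform convergence. The resolution is precisely the compactness supplied by Lemma~\ref{lem_compact}: the compact-open hypothesis confines the restricted curves to a set that is \emph{already} weak-$H^1_\loc$-compact, and then the uniqueness of uniform limits---granted by the forward direction---pins down the weak-$H^1_\loc$ limit without any direct identification. A pleasant by-product is that the weak-$H^1_\loc$-topology, despite being defined through a Nash--M\"uller embedding, is in fact independent of that choice.
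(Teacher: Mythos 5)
Your overall strategy (metrizability of both topologies via Proposition~\ref{AIT_Polish} and Lemma~\ref{lem_metrizable}, then comparison of convergent sequences) is the paper's, and your backward implication is essentially correct --- using the compactness of $A^{[a,b]}_\T(\K_1,\K_2)$ from Lemma~\ref{lem_compact} plus a subsequence--subsubsequence argument is a legitimate variant of the paper's route via norm-boundedness, Alaoglu and distributional identification of the limit. The problem is the forward implication. By Definition~\ref{weakH1loctop} (and Remark~\ref{rem_weakH1loctop}), weak-$H^1_\loc$-convergence only tests the restrictions $i_\ast(\gamma_n)|_{[a,b]}$ for $[a,b]\subset I^\circ$, so your argument delivers uniform convergence of $\gamma_n$ to $\gamma$ only on compact subsets of $I^\circ$. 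That is strictly weaker than convergence in the compact-open topology of $C(I,\M)$ whenever $I$ contains an endpoint --- in particular in the central case $I=[a,b]$, where compact-open convergence means uniform convergence on the whole closed interval, endpoint values included. Nothing in your argument controls $\gamma_n$ at or near $\partial I$; your closing phrase ``uniform $h$-convergence on every compact subinterval of $I^\circ$, i.e.\ the compact-open convergence'' is exactly the unproved step.

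This endpoint control is where the paper's proof does its real work: interior pointwise convergence (Lemma~\ref{lem_ev_contA}) plus causality --- the compact sets $\K_a=J^-(\overline{\U})\cap\T^{-1}(a)$, $\K_b=J^+(\overline{\U})\cap\T^{-1}(b)$ of (\ref{KaKb}) and Proposition~\ref{PropCCC} --- confines the whole sequence to $A^{[a,b]}_\T(\K_a,\K_b)$, whose $i_\ast$-image is $\|.\|_{H^1[a,b]}$-bounded; Lemma~\ref{lem_bounded} then upgrades weak $H^1_\loc$-convergence to weak convergence in $H^1([a,b],\sR^N)$ on the \emph{closed} interval, and only then does Lemma~\ref{lem_cont_rep}~i) yield uniform convergence up to the endpoints. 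Some step of this kind (equivalently: a uniform Lipschitz/equicontinuity bound as in (\ref{locLip1}) on the causally confined compact set, together with an argument that the endpoint values actually converge) is indispensable and is missing from your proposal. Your backward direction happens to survive this defect, since identifying the weak limit $\tilde\gamma$ with $\gamma|_{[a,b]}$ only requires agreement on the dense interior, which extends to the endpoints by continuity; but as written the forward direction, and hence the equality of the two topologies, is not established for non-open $I$.
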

\begin{proof}
By the metrizability of both topologies, we want to show that $\gamma_n \xrightharpoonup{i} \gamma$ in $A^I_\T$ iff $\gamma_n \rightarrow \gamma$ $h$-uniformly on compact subsets. In light of Remark \ref{rem_weakH1loctop}, it suffices to consider the compact case $I = [a,b]$. Moreover, by the very definition of the weak-$H^1_\loc$-topology and the isometricity of the Nash--M\"uller embedding $i$, we can rewrite the desired equivalence as
\begin{align}
\label{stronger_top1}
i_\ast(\gamma_n) \rightharpoonup i_\ast(\gamma) \textnormal{ in } H^1_\loc([a,b],\sR^N) \quad \Longleftrightarrow \quad i_\ast(\gamma_n) \rightarrow i_\ast(\gamma) \textnormal{ uniformly on } [a,b]
\end{align}

In order to show `$\Rightarrow$', we claim first that the set $i_\ast( \bigcup_{n \in \sN} \{ \gamma_n \} \cup \{ \gamma \} )$ is $\|.\|_{H^1[a,b]}$-bounded. Indeed, fix $t \in (a,b)$ and recall that $\gamma_n(t) \rightarrow \gamma(t)$ in $\M$ by Lemma \ref{lem_ev_contA}. This means that $\bigcup_{n \in \sN} \{\gamma_n(t)\} \cup \{\gamma(t)\}$ is a compact set, and let $\U \subset \M$ be its open precompact superset. Defining now $\K_a, \K_b$ as in (\ref{KaKb}), we obtain that $\bigcup_{n \in \sN} \{\gamma_n\} \cup \{\gamma\} \subset A^{[a,b]}_\T(\K_a,\K_b)$, and so $i_\ast(\bigcup_{n \in \sN} \{\gamma_n\} \cup \{\gamma\}) \subset i_\ast(A^{[a,b]}_\T(\K_a,\K_b))$, which we know to be $\|.\|_{H^1[a,b]}$-bounded (cf. the beginning of the proof of Lemma \ref{lem_compact}).

Invoking now Lemma \ref{lem_bounded}, we deduce that $i_\ast(\gamma_n) \rightharpoonup i_\ast(\gamma)$ in $H^1([a,b],\sR^N)$, which by Lemma \ref{lem_cont_rep} i) means that $i_\ast(\gamma_n) \rightarrow i_\ast(\gamma)$ uniformly on $[a,b]$.

To show `$\Leftarrow$', notice that the set $i_\ast( \bigcup_{n \in \sN} \{ \gamma_n \} \cup \{ \gamma \} )$ is again $\|.\|_{H^1[a,b]}$-bounded by almost the same reasoning --- the only difference is that instead of Lemma \ref{lem_ev_contA} we simply infer the pointwise convergence from the uniform convergence. Therefore, every subsequence of $(i_\ast(\gamma_n))$ possesses a weakly convergent subsubsequence, whose limit must be $i_\ast(\gamma)$, i.e. the assumed uniform limit (cf. the distributional convergence argument in the proof of Lemma \ref{lem_cont_rep}).
\end{proof}

\begin{cor}
\label{prop_independent_NM}
The weak-$H^1_\loc$-topology on $A^I_\T$, being equal to the compact-open topology, does not depend on the choice of the Nash--M\"uller embedding.
\end{cor}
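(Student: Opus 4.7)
The plan is almost trivial, and is driven entirely by what has just been proved. Theorem \ref{stronger_top} establishes, for \emph{any} Nash--M\"uller embedding $i: \M \hookrightarrow \sR^N$, that the weak-$H^1_\loc$-topology on $A_\T^I$ pulled back via $i_\ast$ coincides with the compact-open topology. The latter, however, is defined intrinsically from the topology of $\M$ --- equivalently, via $h$-uniform convergence on compact subsets for any complete Riemannian metric $h$ --- and makes no reference whatsoever to an embedding into Euclidean space.

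Consequently, given two Nash--M\"uller embeddings $i_1: \M \hookrightarrow \sR^{N_1}$ and $i_2: \M \hookrightarrow \sR^{N_2}$ (possibly with $N_1 \neq N_2$), both associated weak-$H^1_\loc$-topologies on $A_\T^I$ equal one and the same topology, namely the compact-open topology, and therefore equal each other. This is exactly what the corollary asserts.

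I do not anticipate any obstacle: the entire content of the statement has already been absorbed into Theorem \ref{stronger_top}, and the corollary merely records the observation that the right-hand side of the equivalence proved there is manifestly $i$-independent. The only thing worth flagging in the write-up is that the argument works uniformly across embeddings of different ambient dimension, so no compatibility between $i_1$ and $i_2$ needs to be invoked.
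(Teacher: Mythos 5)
Your proposal is correct and coincides with the paper's own reasoning: the corollary is stated there without proof precisely because, as you observe, Theorem \ref{stronger_top} identifies each embedding-dependent weak-$H^1_\loc$-topology with the intrinsically defined compact-open topology, whence any two such topologies agree. Your remark that no compatibility between embeddings (or equality of ambient dimensions) is needed is a fair, if optional, clarification.
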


Since the compact-open topology and the weak-$H^1_\loc$-topology on $A_\T^I$ are the same, from now on we shall simply write ``$\gamma_n \rightarrow \gamma$ in $A_\T^I$'' instead of referring explicitly to any of them.

To finish this section, let us strengthen Lemma \ref{lem_ev_contA}.
\begin{prop}
\label{prop_ev_contB}
For any $t \in I$ the evaluation map $\ev_t: A^I_\T \rightarrow \M$ is continuous and proper.
\end{prop}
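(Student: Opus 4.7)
The plan splits into two independent parts: upgrading the continuity of $\ev_t$ from $t\in I^\circ$ (already handled by Lemma \ref{lem_ev_contA}) to all of $I$, and then establishing properness.

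For continuity, I would simply appeal to Theorem \ref{stronger_top}. Now that the weak-$H^1_\loc$-topology on $A^I_\T$ has been identified with the compact-open topology, and since the compact-open topology on $C(I,\M)$ is at least as fine as the topology of pointwise convergence, any convergent sequence $\gamma_n \rightarrow \gamma$ in $A^I_\T$ automatically satisfies $\gamma_n(t) \rightarrow \gamma(t)$ in $\M$ for every $t \in I$, endpoints of $I$ included. This yields continuity of $\ev_t$ without additional work.

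For properness, fix a compact $\K \subset \M$ and $t \in I$; I want to show $\ev_t^{-1}(\K)$ is compact. Since $A^I_\T$ is metrizable (Lemma \ref{lem_metrizable}), it suffices to extract a convergent subsequence from an arbitrary $(\gamma_n) \subset \ev_t^{-1}(\K)$. Choose an exhaustion $\{[a_m,b_m]\}_{m \in \sN}$ of $I$ by compact subintervals with $a_m \leq t \leq b_m$. The causality condition (\ref{AIT_def1}), together with $\gamma_n(t) \in \K$, forces
\begin{align*}
\gamma_n(a_m) \in \K_{a_m} \vc J^-(\K) \cap \M_{a_m}, \qquad \gamma_n(b_m) \in \K_{b_m} \vc J^+(\K) \cap \M_{b_m},
\end{align*}
and both sets are compact by Proposition \ref{PropCCC}. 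Hence $\gamma_n|_{[a_m,b_m]} \in A^{[a_m,b_m]}_\T(\K_{a_m},\K_{b_m})$, which is compact by Lemma \ref{lem_compact}. A standard diagonal extraction then produces a subsequence $(\gamma_{n_k})$ whose restriction to each $[a_m,b_m]$ converges; the limits glue into a single curve $\gamma \in A^I_\T$, and convergence uniformly on every $[a_m,b_m]$ is, by Theorem \ref{stronger_top} and the exhaustion property, exactly convergence $\gamma_{n_k} \rightarrow \gamma$ in $A^I_\T$. Closedness of $\K$ together with $\gamma_{n_k}(t) \rightarrow \gamma(t)$ yields $\gamma(t) \in \K$, so $\gamma \in \ev_t^{-1}(\K)$.

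There is no real obstacle — all the heavy lifting (compactness of the $A^{[a_m,b_m]}_\T(\K_{a_m},\K_{b_m})$, metrizability of the global space, and the identification of the two topologies) has already been carried out in the preceding lemmas. The only mild point of care is verifying that a diagonally extracted subsequence, converging on each $[a_m,b_m]$, actually converges in the \emph{global} topology on $A^I_\T$; this follows immediately once one remembers that the compact-open topology reduces exactly to uniform convergence on each member of an exhaustion by compact subintervals.
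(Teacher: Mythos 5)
Your proposal is correct and follows essentially the same route as the paper: continuity via the identification of the weak-$H^1_\loc$-topology with the compact-open topology (hence pointwise convergence), and properness by restricting to an exhaustion $\{[a_m,b_m]\}$, noting $\gamma_n|_{[a_m,b_m]} \in A^{[a_m,b_m]}_\T(J^-(\K)\cap\M_{a_m}, J^+(\K)\cap\M_{b_m})$, which is compact by Lemma \ref{lem_compact} and Proposition \ref{PropCCC}, and extracting a diagonal subsequence. Your explicit insistence that $a_m \leq t \leq b_m$ and the final check $\gamma(t) \in \K$ are small touches the paper leaves implicit (it instead notes $\ev_t^{-1}(\K)$ is closed at the outset), but the argument is the same.
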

\begin{proof}
Continuity of $\ev_t$ is obvious. To see properness, take any compact $\K \subset \M$ and consider the closed set $\ev_t^{-1}(\K)$. In order to prove its compactness, let us take any sequence $(\gamma_n) \subset \ev_t^{-1}(\K)$ and find its subsequence convergent in $A^I_\T$.

To this end, let $\{[a_m,b_m]\}$ be an exhaustion of $I$ by compact subintervals. It is now crucial to observe that, for any fixed $m \in \sN$,
\begin{align*}
(\gamma_n|_{[a_m,b_m]})_n \subset A^{[a_m,b_m]}_\T\left( J^-(\K) \cap \T^{-1}(a_m), J^+(\K) \cap \T^{-1}(b_m) \right),
\end{align*}
where the latter set is a compact subset of $A^{[a_m,b_m]}_\T$ by Lemma \ref{lem_compact} \& Proposition \ref{PropCCC}. Bearing that in mind, we can find a convergent subsequence of $(\gamma_n)$ using the following version of the standard diagonal argument.

Firstly, let $(\gamma^{(1)}_n)_n$ be a subsequence of $(\gamma_n)$ such that $(\gamma^{(1)}_n|_{[a_1,b_1]})_n$ converges in $A^{[a_1,b_1]}_\T$. Then, inductively for $m = 2,3,\ldots$, let $(\gamma^{(m)}_n)_n$ be a subsequence of $(\gamma^{(m-1)}_n)_n$ such that $(\gamma^{(m)}_n|_{[a_m,b_m]})_n$ converges in $A^{[a_m,b_m]}_\T$. One now simply notices that the sequence $(\gamma^{(n)}_n)_n$, i.e. the ``diagonal'' subsequence of the sequence $(\gamma_n)$, has the property that $(\gamma^{(n)}_n|_{[a,b]})_n$ converges in $A^{[a,b]}_\T$ for any $[a,b] \subset I$. But this is equivalent to saying that $(\gamma^{(n)}_n)_n$ converges in $A^I_\T$, as desired.
\end{proof}

\begin{cor}
\label{prop_LCP}
$A^I_\T$ is a locally compact Polish space.
\end{cor}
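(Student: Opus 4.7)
The Polish property is already in hand: by Proposition \ref{AIT_Polish}, $A^I_\T$ is a closed subspace of the Polish space $C(I,\M)$, hence Polish itself. So the only thing to establish is local compactness, and the plan is to read it off directly from the properness of the evaluation map proven in Proposition \ref{prop_ev_contB}.

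Fix an arbitrary $\gamma \in A^I_\T$ and pick any $t \in I$. Since $\M$ is a smooth manifold, hence locally compact, choose an open neighborhood $\U$ of $\gamma(t)$ in $\M$ with compact closure $\overline{\U}$. I would then form the set $\V \vc \ev_t^{-1}(\U)$, which is an open neighborhood of $\gamma$ in $A^I_\T$ by continuity of $\ev_t$. The closure $\overline{\V}$ is contained in the closed set $\ev_t^{-1}(\overline{\U})$, which is compact by the properness half of Proposition \ref{prop_ev_contB}. Therefore $\overline{\V}$ is a closed subset of a compact set, hence compact, and $\gamma$ admits the compact neighborhood $\overline{\V}$.

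Since $\gamma$ was arbitrary, $A^I_\T$ is locally compact. Combined with the Polish property, this gives the claim. The argument is essentially routine once Proposition \ref{prop_ev_contB} is available, so there is no real obstacle here; all the technical work was carried out earlier in establishing the weak-$H^1_\loc$ compactness of the sets $A^{[a,b]}_\T(\K_1,\K_2)$ and the properness of $\ev_t$ via the diagonal subsequence extraction.
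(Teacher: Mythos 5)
Your proposal is correct and matches the paper's own argument: local compactness is obtained by pulling back a compact neighborhood of $\gamma(t)$ through the proper continuous map $\ev_t$ (Proposition \ref{prop_ev_contB}), and Polishness is simply quoted from Proposition \ref{AIT_Polish}. The paper additionally remarks that Polishness can be rederived purely topologically (metrizable, locally compact, $\sigma$-compact), but that is only an aside and your route is the same in substance.
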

\begin{proof}
To see the local compactness, let $\gamma \in A^I_\T$, $t \in I$ and let $\K \subset \M$ be a compact neighborhood of $\gamma(t)$. By Proposition \ref{prop_ev_contB}, $\ev_t^{-1}(\K)$ is a compact neighborhood of $\gamma$ in $A^I_\T$.

As for Polishness, it was already established by Proposition \ref{AIT_Polish}. However, one can also demonstrate it topologically. Indeed, notice first that by Proposition \ref{prop_ev_contB} $A^I_\T$ is $\sigma$-compact, because if $\{\K_n\}$ is an exhaustion of $\M$ by compact sets, then for any fixed $t \in I$ one has $A_\T^I = \bigcup_n \ev_t^{-1}(\K_n)$. Invoking Lemma \ref{lem_metrizable} we thus obtain that $A_\T^I$ is a metrizable, locally compact and $\sigma$-compact space, what already implies that it is Polish \cite[Theorem 3.4.1]{garling2017polish}.
\end{proof}


\section{On $L^p_\loc$-vector fields}
\label{sec::vector_fields}

In this section we investigate the notion of a $L^p_\loc$-vector field more closely. After providing the standard definition, we present an alternative, equivalent one, which does not involve local charts and will be later used in the proof of Theorem \ref{premain}. Let us start, however, by recalling the notion of $L^p_\loc$-spaces.

Let $\N$ be a topological space and let $\nu$ be a Borel measure on $\N$. For any $p \in [1,\infty]$, a Borel function $f: \N \rightarrow \sR$ is said to be locally $p$-$\nu$-integrable (or, in case $p=\infty$, locally $\nu$-essentially bounded) if for any open $\V \Subset \N$ $f|_\V \in L^p(\V,\nu)$. The spaces of such functions\footnote{Formally, the elements of $L^p_\loc(\N,\nu)$ are equivalence classes of functions equal $\nu$-a.e., but it is more convenient to regard them as maps defined up to $\nu$-null sets.} are denoted $L^p_\loc(\N,\nu)$ Notice that, by H\"older's inequality, for any $p_1, p_2 \in [1,\infty]$ with $p_1 \leq p_2$ one has $L^{p_1}_\loc(\N,\nu) \supset L^{p_2}_\loc(\N,\nu)$.

From now on, let $\N$ be a smooth manifold of dimension $D$.
\begin{defn}
\label{def_VF}
An $L^p_\loc(\nu)$-vector field on $\N$ is a map $X: C^\infty(\N) \rightarrow L^p_\loc(\N,\nu)$ such that, for any $\Psi \in C^\infty(\N)$, any chart $(\U, x)$ and any open $\V \Subset \U$
\begin{align}
\label{def_VF1}
(X \Psi)|_\V = \sum_{i=1}^D X^i \tfrac{\partial}{\partial x^i}\Psi|_\V
\end{align}
for some $X^i \in L^p(\V,\nu)$, $i = 1,\ldots,D$, where $\tfrac{\partial}{\partial x^i}: C^\infty(\V) \rightarrow C^\infty(\V)$ are the basis smooth vector fields associated with the chart $(\U, x)$.
\end{defn}

Notice that condition (\ref{def_VF1}) asserts that $X$ is linear and that $X \Psi$ indeed belongs to $L^p_\loc(\N,\nu)$. To see the latter, let $\{\V_j\}$ be a cover of $\N$ by open subsets, with each $\V_j$ compactly embedded in some chart domain, and let $\{\varphi_j\}$ be a partition of unity subordinate to that cover. For any open $\W \Subset \N$ one can pick a finite set of indices $J_\W$ such that $\sum_{j \in J_\W} \varphi_j|_\W = 1$ and write
\begin{align*}
& \|X \Psi\|_{L^p(\W)} = \left\|\sum_{j \in J_\W} \varphi_j X \Psi \right\|_{L^p(\W)} \leq \sum_{j \in J_\W} \|\varphi_j X \Psi\|_{L^p(\V_j)} 
\\
& \leq \sum_{j \in J_\W} \sum_{i=1}^D \|\varphi_j X^i \tfrac{\partial}{\partial x^i} \Psi\|_{L^p(\V_j)} \leq \sum_{j \in J_\W} \sum_{i=1}^D \left\| \varphi_j \tfrac{\partial}{\partial x^i} \Psi \right\|_{C(\overline{\V}_j)} \|X^i\|_{L^p(\V_j)} < +\infty,
\end{align*}
where $\left\| f \right\|_{C(\K)} \vc \sup_{y \in \K} |f(y)|$. Thus $(X \Psi)|_\W \in L^p(\W,\nu)$, and hence $X\Psi \in L^p_\loc(\N,\nu)$ by the arbitrariness of $\W$.

Condition (\ref{def_VF1}) has also another important consequence --- locality.
\begin{prop}
\label{prop_VF}
Let $X$ be a $L^p_\loc(\nu)$-vector field on $\N$. For any $\Psi \in C^\infty(\N)$, if $\Psi|_\W = 0$ for some open $\W \subset \N$, then $(X\Psi)|_\W = 0$, what can be expressed more succinctly as the inclusion $\supp X\Psi \subset \supp \Psi$.
\end{prop}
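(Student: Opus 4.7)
The plan is to reduce the claim to a local, chart-based computation and then invoke second-countability of $\N$ to glue the local conclusions.

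First I would fix any point $p \in \W$ and choose a chart $(\U_p, x_p)$ around $p$. Since $\U_p \cap \W$ is an open neighborhood of $p$, I can pick an open $\V_p$ with $p \in \V_p$ and $\overline{\V_p} \subset \U_p \cap \W$, so in particular $\V_p \Subset \U_p$. Because $\Psi$ vanishes identically on $\W \supset \V_p$, every coordinate partial derivative vanishes pointwise on $\V_p$: $\tfrac{\partial}{\partial x_p^i}\Psi\big|_{\V_p} \equiv 0$ for $i=1,\ldots,D$. Substituting into the defining relation (\ref{def_VF1}) applied to the chart $(\U_p, x_p)$ and the open set $\V_p$, we get
\begin{align*}
(X\Psi)|_{\V_p} \;=\; \sum_{i=1}^D X^i\, \tfrac{\partial}{\partial x_p^i}\Psi\big|_{\V_p} \;=\; 0 \qquad \nu\text{-a.e.}
\end{align*}

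Second, I would cover $\W$ by such $\V_p$'s. Because $\N$ is a smooth manifold, it is second-countable, hence so is the open subspace $\W$, and so is Lindel\"of. Therefore the open cover $\{\V_p\}_{p \in \W}$ admits a countable subcover $\{\V_{p_k}\}_{k \in \sN}$ of $\W$. On each $\V_{p_k}$ the representative $X\Psi$ vanishes outside a $\nu$-null set $N_k$; setting $N \vc \bigcup_k N_k$ yields a $\nu$-null subset of $\W$ outside of which $X\Psi$ is zero. Thus $(X\Psi)|_\W = 0$ as an element of $L^p_\loc(\N,\nu)$, i.e.\ $\supp X\Psi \subset \N \setminus \W$. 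Taking the union over all open $\W$ on which $\Psi$ vanishes gives $\supp X\Psi \subset \supp \Psi$.

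There is no real obstacle here — the only subtlety worth flagging is the $\nu$-a.e.\ nature of the identity $(X\Psi)|_\V = 0$, which is why we must assemble the local null sets into a single null set via a countable (Lindel\"of) subcover rather than merely arguing pointwise. Everything else is an immediate consequence of the chart-wise formula (\ref{def_VF1}) and the fact that smooth partial derivatives inherit the vanishing of the function on an open set.
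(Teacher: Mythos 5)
Your proof is correct and follows essentially the same route as the paper's: localize at an arbitrary point of $\W$ to a precompact open set inside a chart domain contained in $\W$, observe that the coordinate partials of $\Psi$ vanish there, and apply (\ref{def_VF1}). Your additional Lindel\"of/countable-subcover step to assemble the local $\nu$-null sets into one is a careful touch that the paper leaves implicit.
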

\begin{proof}
Take any $q \in \W$ and let $\V \ni q$ be an open precompact subset of $\W \cap \U$, where $(\U,x)$ is a chart. Then by (\ref{def_VF1})
$(X\Psi)|_\V = \sum_i X^i \tfrac{\partial}{\partial x^i}\Psi|_\V = 0$.
\end{proof}

\begin{cor}
\label{cor_VF}
For any $\Psi_1, \Psi_2 \in C^\infty(\N)$, if $\Psi_1|_\W = \Psi_2|_\W$ for some open $\W \subset \N$, then $(X\Psi_1)|_\W = (X\Psi_2)|_\W$.
\end{cor}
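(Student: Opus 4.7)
The plan is to reduce the corollary to Proposition \ref{prop_VF} via the linearity of $X$. Recall that, as noted in the paragraph following Definition \ref{def_VF}, condition (\ref{def_VF1}) forces $X$ to be linear: indeed, the local expression $\sum_i X^i \tfrac{\partial}{\partial x^i}$ is linear in $\Psi$, and this pins down $X\Psi$ uniquely on every open $\V$ compactly contained in a chart, and hence (by a partition of unity argument of the kind already used in the remark after Definition \ref{def_VF}) $\nu$-a.e. on all of $\N$.

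First, I set $\Psi \vc \Psi_1 - \Psi_2 \in C^\infty(\N)$. By hypothesis, $\Psi|_\W = 0$, so Proposition \ref{prop_VF} applies to $\Psi$ and yields $(X\Psi)|_\W = 0$. Next, invoking linearity of $X$ as an operator $C^\infty(\N) \to L^p_\loc(\N,\nu)$, I have $X\Psi = X\Psi_1 - X\Psi_2$ in $L^p_\loc(\N,\nu)$, i.e.\ up to $\nu$-null sets. Restricting both sides to $\W$ gives $(X\Psi_1)|_\W - (X\Psi_2)|_\W = 0$, which is the claim.

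There is essentially no obstacle here; the only subtle point is that the equalities in $L^p_\loc(\N,\nu)$ hold only $\nu$-a.e., so that the identity $(X\Psi_1)|_\W = (X\Psi_2)|_\W$ is meant in the $L^p_\loc$-sense, consistent with the convention adopted at the beginning of Section \ref{sec::vector_fields} that elements of $L^p_\loc(\N,\nu)$ are regarded as functions defined up to $\nu$-null sets. In particular, the proof does not require any additional regularity of $X$ beyond what is guaranteed by Definition \ref{def_VF}.
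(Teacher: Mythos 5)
Your proof is correct and follows exactly the paper's argument: rewrite the hypothesis as $(\Psi_1-\Psi_2)|_\W = 0$, apply Proposition \ref{prop_VF}, and conclude via the linearity of $X$. The extra remarks on where linearity comes from and on the $\nu$-a.e. interpretation are fine but not needed beyond what the paper already assumes.
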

\begin{proof}
Rewrite the assumption as $(\Psi_1-\Psi_2)|_\W = 0$ and invoke Proposition \ref{prop_VF} along with the linearity of $X$.
\end{proof}

Let us now give an alternative, equivalent definition of a $L^p_\loc(\nu)$-vector field in the form of a theorem.
\begin{thm}
\label{thm_VF}
Let $\wv{X}: C^\infty_\cc(\N) \rightarrow L^p(\N,\nu)$ be a map satisfying the \emph{chain rule}
\begin{align}
\label{thm_VF1}
\wv{X}(\Theta(\Phi_1, \ldots, \Phi_L)) = \sum_{l=1}^L \partial_l \Theta(\Phi_1, \ldots, \Phi_L) \wv{X}\Phi_l
\end{align}
for every $\Theta \in C^\infty(\sR^L)$, $\Theta(0,\ldots,0)=0$ and every $\Phi_1, \ldots, \Phi_L \in C^\infty_{\cc}(\N)$. Then it can be uniquely extended to an $L^p_\loc(\nu)$-vector field $X$. 

Conversely, given an $L^p_\loc(\nu)$-vector field $X$ on $\N$, we have $X(C^\infty_\cc(\N)) \subset L^p(\N,\nu)$ and the restriction $\wv{X} \vc X|_{C^\infty_\cc(\N)}$ satisfies (\ref{thm_VF1}).
\end{thm}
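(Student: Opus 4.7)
First, I would dispatch the easy converse direction: given an $L^p_\loc(\nu)$-vector field $X$, Proposition \ref{prop_VF} forces $\supp X\Phi \subset \supp \Phi$, so $X\Phi$ is compactly supported for $\Phi \in C^\infty_\cc(\N)$ and thus lies in $L^p(\N, \nu)$. The chain rule (\ref{thm_VF1}) then follows pointwise a.e.\ in any chart by combining (\ref{def_VF1}) with the classical smooth chain rule on $\partial_{x^i}[\Theta(\Phi_1, \ldots, \Phi_L)]$, and passes globally because both sides lie in $L^p_\loc(\N, \nu)$.

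For the forward direction, the opening move is to extract two structural consequences of the chain rule that encode the local, derivation-like character of $\wv{X}$. \emph{Locality}: if $\Phi \in C^\infty_\cc(\N)$ vanishes on an open $\W$, then $\wv{X}\Phi|_\W = 0$. For any $q \in \W$ one chooses, by standard bump calculus, $\psi \in C^\infty_\cc(\N)$ with $\psi \equiv 1$ on $\supp \Phi$ and $\psi \equiv 0$ on some neighborhood $U_q \subset \W$ of $q$; since $\Phi = \psi\Phi$ on $\N$, the Leibniz rule (chain rule with $\Theta(y_1, y_2) = y_1 y_2$) yields $\wv{X}\Phi = \Phi \wv{X}\psi + \psi \wv{X}\Phi$, forcing $\wv{X}\Phi|_{U_q} = 0$. \emph{Plateau vanishing}: if $\varphi \in C^\infty_\cc(\N)$ equals $1$ on an open $\W$, then $\wv{X}\varphi|_\W = 0$, because $\varphi^2 = \varphi$ on $\W$ gives $\wv{X}(\varphi^2)|_\W = \wv{X}\varphi|_\W$ via locality, while the Leibniz rule computes $\wv{X}(\varphi^2) = 2\varphi \wv{X}\varphi$, which collapses to $2\wv{X}\varphi|_\W$ on $\W$.

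With these in hand, the extension is defined by localization: for $\Psi \in C^\infty(\N)$ and open $\W \Subset \N$, choose $\varphi \in C^\infty_\cc(\N)$ with $\varphi \equiv 1$ on an open neighborhood of $\overline\W$ and set $X\Psi|_\W := \wv{X}(\varphi\Psi)|_\W$; locality renders this independent of $\varphi$ and coherent across overlapping precompact opens, so $X\Psi \in L^p_\loc(\N, \nu)$. Uniqueness is immediate — any other $L^p_\loc(\nu)$-vector field $X'$ extending $\wv{X}$ satisfies $X'\Psi|_\W = X'(\varphi\Psi)|_\W = \wv{X}(\varphi\Psi)|_\W$ by its own locality (Corollary \ref{cor_VF}). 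The genuine work is in verifying the local form (\ref{def_VF1}). Fix a chart $(\U, x)$, an open $\V \Subset \U$, and $\varphi \in C^\infty_\cc(\U)$ with $\varphi \equiv 1$ near $\overline\V$; define $X^i := \wv{X}(\varphi x^i)|_\V \in L^p(\V, \nu)$ (the $\varphi x^i$ extended by zero to $\N$) and pick $\tilde\rho \in C^\infty_\cc(x(\U))$ with $\tilde\rho \equiv 1$ near $x(\overline\V)$. The crux is the auxiliary function
\[
\Theta(y^0, y^1, \ldots, y^D) := y^0 \, \tilde\rho(y^1, \ldots, y^D) \, (\Psi \circ x^{-1})(y^1, \ldots, y^D),
\]
understood as $0$ off $\supp\tilde\rho$: one verifies $\Theta \in C^\infty(\sR^{D+1})$, $\Theta(0) = 0$, and $\Theta(\varphi, \varphi x^1, \ldots, \varphi x^D) \in C^\infty_\cc(\N)$ (supported in $\supp\varphi$), and checks that this composition coincides with $\varphi\Psi$ on $\V$ (the two may differ on $\{0<\varphi<1\}$, but that region is harmless). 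Locality then upgrades this local agreement to $\wv{X}(\varphi\Psi)|_\V = \wv{X}(\Theta(\varphi, \varphi x))|_\V$; expanding the right-hand side via the chain rule (\ref{thm_VF1}) kills the $\partial_0\Theta \cdot \wv{X}\varphi$ summand on $\V$ by plateau vanishing, and for $q \in \V$ a direct computation gives $\partial_i\Theta(1, x(q)) = \partial_{x^i}\Psi(q)$ (since $\tilde\rho$ is locally constant near $x(\V)$), producing $X\Psi|_\V = \sum_i X^i \partial_{x^i}\Psi|_\V$, as required. The principal obstacle is precisely the engineering of $\Theta$: one needs a smooth function on $\sR^{D+1}$ vanishing at the origin that reproduces $\varphi\Psi$ only on the target $\V$, not globally on $\N$; the extra variable $y^0$ absorbing the cutoff $\varphi$ is the trick that simultaneously enforces $\Theta(0)=0$ and keeps $\Theta(\varphi, \varphi x)$ compactly supported, irrespective of whether $0 \in x(\V)$.
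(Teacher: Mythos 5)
Your proposal is correct, and its overall architecture matches the paper's: extract linearity and the Leibniz rule from the chain rule, deduce locality, extend $\wv{X}$ by multiplying with cutoffs, kill the cutoff terms by a ``$\wv{X}$ vanishes on plateaus'' identity, and finally obtain the local form (\ref{def_VF1}) by feeding a suitable auxiliary $\Theta$ composed with cutoff coordinate functions into the chain rule; the converse direction is the same in both texts. The differences are in the technical devices. First, you define the extension by localization, $X\Psi|_\W \vc \wv{X}(\varphi\Psi)|_\W$ with $\varphi \equiv 1$ near $\overline{\W}$, and use the localized plateau identity, whereas the paper builds $X\Psi$ as a global partition-of-unity sum (\ref{thm_VF4}) and proves the global identity $X(1)=0$ (\ref{thm_VF5}); the two are equivalent (your version needs the routine remark that countably many precompact opens glue the local definitions into one class in $L^p_\loc$), but the paper's explicit sum formula is reused later, e.g.\ in the proof of Proposition \ref{property1_VF}, while your route is leaner and makes the uniqueness claim explicit via Corollary \ref{cor_VF}, a point the paper leaves implicit. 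Second, and more substantively, in verifying (\ref{def_VF1}) the paper rescales toward a base point, $\widetilde{\Psi}(q) = \Psi(x^{-1}(\Omega(q)x(q))) - \Psi(x^{-1}(0,\ldots,0))$ with $\Theta = \Psi\circ x^{-1} - \textnormal{const}$, which tacitly requires $(0,\ldots,0)\in x(\V)$, that the rescaled points $\Omega(q)x(q)$ stay inside $x(\U)$ (star-shapedness of the chart image, or an appropriate choice of chart), and a smooth extension of $\Psi\circ x^{-1}$ to all of $\sR^D$; your $\Theta$ on $\sR^{D+1}$, with the extra slot $y^0$ absorbing the cutoff and the factor $\tilde\rho$ making the extension by zero smooth, sidesteps all of these caveats and enforces $\Theta(0,\ldots,0)=0$ automatically, at the modest cost of checking that the composition agrees with $\varphi\Psi$ only on $\V$ and that $\partial_i\Theta(1,x(q)) = \tfrac{\partial}{\partial x^i}\Psi(q)$ there --- which you do. So the argument goes through; it is a slightly cleaner handling of the chart-level construction than the paper's.
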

\begin{proof}
First, notice that (\ref{thm_VF1}) in particular implies that $\wv{X}$ is linear and satisfies the Leibniz rule
\begin{align}
\label{thm_VF2}
\forall \Phi_1, \Phi_2 \in C^\infty_\cc(\N) \quad \wv{X}(\Phi_1 \Phi_2) = \wv{X}(\Phi_1) \Phi_2 + \Phi_1 \wv{X}(\Phi_2).
\end{align}
We claim that this already implies the locality of $\wv{X}$, that is
\begin{align}
\label{thm_VF3}
 \forall \Phi_1, \Phi_2 \in C^\infty_\cc(\N) \ \ \forall \, \textnormal{open } \W \subset \N \quad \ \Phi_1|_\W = \Phi_2|_\W \ \Rightarrow \ (\wv{X}\Phi_1)|_\W = (\wv{X}\Phi_2)|_\W.
\end{align}
Indeed, similarly as in Proposition \ref{prop_VF} and Corollary \ref{cor_VF}, it suffices to show that $\Phi|_\W = 0 \Rightarrow (\wv{X}\Phi)|_\W = 0$. To this end, pick any $q \in \W$ and let $\V \ni q$ be an open precompact subset of $\W$. Take any $\Omega \in C^\infty_\cc(\N)$ such that $0 \leq \Omega \leq 1$, $\Omega|_\V = 1$, $\Omega|_{\W^c} = 0$. The product $\Phi \Omega$ is the zero map and so, by the Leibniz rule,
\begin{align*}
\wv{X}(\Phi) \Omega + \Phi \wv{X}(\Omega) = 0, \quad \textnormal{and so} \quad (\wv{X}\Phi)|_\V = 0.
\end{align*}
By the arbitrariness of $q \in \W$, we thus obtain that $(\wv{X}\Phi)|_\W = 0$.

In order to extend the map $\wv{X}$ onto $C^\infty(\N)$, fix a cover $\{\V_j\}$ of $\N$ by open precompact sets along with a smooth partition of unity $\{\varphi_j\}$ subordinate to that cover (which in particular means that $\{\varphi_j\} \subset C^\infty_\cc(\N)$), and for any $\Psi \in C^\infty(\N)$ define
\begin{align}
\label{thm_VF4}
X\Psi \vc \sum_j \wv{X}(\varphi_j \Psi).
\end{align}
There are no problems with convergence, because the family of supports $\{ \supp \wv{X}(\varphi_j \Psi) \}_j$ is locally finite. Indeed, any point $q \in \N$ has an open precompact neighborhood $\V$ such that $\varphi_j|_\V = 0$ for all but finitely many $j$, hence also $(\varphi_j\Psi)|_\V = 0$ and, by (\ref{thm_VF3}), $\wv{X}(\varphi_j\Psi)|_\V = 0$ for all but finitely many $j$.

Notice, moreover, that the above definition of $X\Psi$ does not depend on the chosen partition of unity (or the cover). Indeed, if $\{\psi_k\} \subset C^\infty_\cc(\N)$ is another smooth partition of unity, then for any open $\W \Subset \N$
\begin{align*}
\big(\sum_{j \in J_\W} \varphi_j \Psi \big)\big|_\W = \Psi|_\W = \big(\sum_{k \in K_\W} \psi_k \Psi\big)\big|_\W,
\end{align*}
where $J_\W, K_\W$ are finite sets of indices such that $\sum_{j \in J_\W} \varphi_j|_\W = \sum_{k \in K_\W} \psi_k|_\W$ $= 1$. On the strength of (\ref{thm_VF3}), we thus obtain
\begin{align*}
\sum_{j \in J_\W} \wv{X}(\varphi_j \Psi)\big|_\W = \big(\wv{X}\big( \sum_{j \in J_\W} \varphi_j \Psi \big)\big)\big|_\W = \big(\wv{X}\big( \sum_{k \in K_\W} \psi_k \Psi \big)\big)\big|_\W = \sum_{k \in K_\W} \wv{X}(\psi_k \Psi)\big|_\W.
\end{align*}
Thus defined $X\Psi$ belongs to $L^p_\loc(\N,\nu)$, because for any open $\W \Subset \N$
\begin{align*}
\|X\Psi\|_{L^p(\W)} = \big\| \sum_j \wv{X}(\varphi_j \Psi) \big\|_{L^p(\W)} = \big\| \sum_{j \in J_\W} \wv{X}(\varphi_j \Psi) \big\|_{L^p(\W)} \leq \sum_{j \in J_\W} \| \wv{X}(\varphi_j \Psi) \|_{L^p(\W)} < +\infty.
\end{align*}

All in all, we have a well-defined map $X: C^\infty(\N) \rightarrow L^p_\loc(\N,\nu)$ extending $\wv{X}$. We claim that the Leibniz rule still holds for the extension. To demonstrate that, observe first that
\begin{align}
\label{thm_VF5}
X(1) = \sum_j \wv{X}(\varphi_j) = 0.
\end{align}
Indeed, since for any open $\W \Subset \N$ we have $\sum_j \varphi_j|_\W = \sum_{j \in J_\W} \varphi_j|_\W = 1$, therefore one can write that
\begin{align*}
\big( \sum_{j \in J_\W} \varphi_j \big)^2 \big|_\W = \sum_{j \in J_\W} \varphi_j|_\W,
\end{align*}
which, by (\ref{thm_VF3}), implies that
\begin{align*}
\wv{X}\big( \sum_{j \in J_\W} \varphi_j \big)^2 \big|_\W = \wv{X}\big(\sum_{j \in J_\W} \varphi_j\big)\big|_\W.
\end{align*}
Using now (\ref{thm_VF2}) and the fact that $\sum_{j \in J_\W} \varphi_j|_\W = 1$, one gets
\begin{align*}
2 \wv{X}\big( \sum_{j \in J_\W} \varphi_j \big) \big|_\W = \wv{X}\big(\sum_{j \in J_\W} \varphi_j\big)\big|_\W,
\end{align*}
and so 
\begin{align*}
\sum_{j} \wv{X}(\varphi_j)|_\W = \sum_{j \in J_\W} \wv{X}(\varphi_j)|_\W = \wv{X}\big( \sum_{j \in J_\W} \varphi_j \big) \big|_\W = 0.
\end{align*}
With thus proven (\ref{thm_VF5}) in mind, for any $\Psi_1, \Psi_2 \in C^\infty(\N)$ one obtains that
\begin{align*}
X(\Psi_1 \Psi_2) & = \sum_j \wv{X}(\varphi_j \Psi_1 \Psi_2) = \sum_{j,k} \varphi_k \wv{X}(\varphi_j \Psi_1 \Psi_2)
\\
& = \sum_{j,k} \wv{X}(\varphi_j \Psi_1 \varphi_k \Psi_2) - \sum_{j,k} \wv{X}(\varphi_k) \varphi_j \Psi_1 \Psi_2
\\
& = \sum_{j,k} \wv{X}(\varphi_j \Psi_1) \varphi_k \Psi_2 + \sum_{j,k} \varphi_j \Psi_1 \wv{X}(\varphi_k \Psi_2) - \Psi_1 \Psi_2 \sum_k \wv{X}(\varphi_k)
\\
& =  \Psi_2 \sum_j \wv{X}(\varphi_j \Psi_1) + \Psi_1 \sum_j \wv{X}(\varphi_k \Psi_2) = X(\Psi_1) \Psi_2 + \Psi_1 X(\Psi_2).
\end{align*}

Since the extension $X$ obeys the Leibniz rule, it is local in the sense of Corollary \ref{cor_VF} --- the proof goes along the same lines as the one of (\ref{thm_VF3}).

We are finally ready to prove that $X$ satisfies (\ref{def_VF1}). Fix $\Psi \in C^\infty(\N)$, a chart $(\U,x)$ and an open $\V \Subset \U$, where we assume (without loss of generality) that $(0,\ldots,0) \in x(\V)$. Let $\Omega \in C^\infty_\cc(\U)$ be such that $0 \leq \Omega \leq 1$, $\Omega|_\V = 1$ and consider the map $\widetilde{\Psi} \in C^\infty_\cc(\N)$ defined via
\begin{align*}
\widetilde{\Psi}(q) \vc \left\{ \begin{array}{ll} 
								\Psi(x^{-1}(\Omega(q) x(q))) - \Psi(x^{-1}(0,\ldots,0)) & \textnormal{for } q \in \U
								\\
								0 & \textnormal{otherwise.}
								\end{array} \right.
\end{align*}
Notice that $\Psi|_\V = (\widetilde{\Psi} + \Psi(x^{-1}(0,\ldots,0))|_\V$, and so, by the locality of $X$ and (\ref{thm_VF5}), we can write that
\begin{align*}
X(\Psi)|_\V = X(\widetilde{\Psi})|_\V = \wv{X}(\widetilde{\Psi})|_\V = \wv{X}(\Theta(\Omega x^1, \ldots, \Omega x^D))|_\V,
\end{align*}
where $\Theta \in C^\infty(\sR^D)$ is defined as 
\begin{align*}
\Theta(a^1,\ldots,a^D) \vc \Psi(x^{-1}(a^1,\ldots,a^D)) - \Psi(x^{-1}(0,\ldots,0)).
\end{align*}
Using now the chain rule (\ref{thm_VF1}), we obtain that
\begin{align*}
X(\Psi)|_\V = \big( \sum_{i=1}^D \partial_i \Theta(\Omega x^1, \ldots, \Omega x^D) \wv{X}(\Omega x^i) \big)\big|_\V,
\end{align*}
and so by defining $X^i \vc \wv{X}(\Omega x^i)|_\V \in L^p(\V,\nu)$, $i=1,\ldots,D$ and noticing that
\begin{align*}
(\partial_i \Theta(\Omega x^1, \ldots, \Omega x^D))|_\V = (\partial_i(\Psi \circ x^{-1})(x^1, \ldots, x^D))|_\V = \tfrac{\partial}{\partial x^i} \Psi|_\V,
\end{align*}
we indeed obtain (\ref{def_VF1}).

Conversely, assume that an $L^p_\loc(\nu)$-vector field $X$ is given. The inclusion $X(C^\infty_\cc(\N)) \subset L^p(\N,\nu)$ is a direct consequence of Proposition \ref{prop_VF}. Thus we only need to check that $\wv{X} \vc X|_{C^\infty_\cc(\N)}$ satisfies the chain rule (\ref{thm_VF1}).

It suffices to check it locally and actually one does not even have to restrict oneself to $C^\infty_\cc(\N)$. Indeed, fix any chart $(\U,x)$ and any open $\V \Subset \U$. For any $\Theta \in C^\infty(\sR^L)$ and any $\Psi_1, \ldots, \Psi_L \in C^\infty(\N)$ one has
\begin{align*}
& X(\Theta(\Psi_1, \ldots, \Psi_L))|_\V = \sum_{i=1}^D X^i \tfrac{\partial}{\partial x^i}\Theta(\Psi_1, \ldots, \Psi_L)|_\V
\\
& = \sum_{i=1}^D X^i \sum_{l=1}^L \partial_l \Theta(\Psi_1, \ldots, \Psi_L)|_\V \tfrac{\partial}{\partial x^i} \Psi_l|_\V
\\
& = \sum_{l=1}^L \partial_l \Theta(\Psi_1, \ldots, \Psi_L)|_\V \sum_{i=1}^D X^i \tfrac{\partial}{\partial x^i} \Psi_l|_\V = \big( \sum_{l=1}^L \partial_l \Theta(\Psi_1, \ldots, \Psi_L) X\Psi_l \big)\big|_\V,
\end{align*}
where we have used the fact that the basis vector fields do obey the chain rule.
\end{proof}

We have thus established that an $L^p_\loc(\nu)$-vector field can be regarded either as a map $X: C^\infty(\N) \rightarrow L^p_\loc(\N,\nu)$ satisfying (\ref{def_VF1}) or as a map $\wv{X}: C^\infty_\cc(\N) \rightarrow L^p(\N,\nu)$ satisfying (\ref{thm_VF1}). For the sake of proving Theorem \ref{premain} it is important to understand how the properties of $X$ listed in (iii) translate to the language of $\wv{X}$. The answer is given by the following two propositions.

\begin{prop}
\label{property1_VF}
Let $X$, $\wv{X}$ be the two equivalent descriptions of an $L^p_\loc(\nu)$-vector field on $\N$. Then for any $F \in C^\infty(\N)$ and $G \in L^p_\loc(\N,\nu)$
\begin{align*}
X(F) = G \quad \nu\textnormal{-a.e.} \quad \Longleftrightarrow \quad \forall \Phi \in C^\infty_\cc(\N) \quad \wv{X}(\Phi F) - \wv{X}(\Phi)F = \Phi G \quad \nu\textrm{-a.e.}
\end{align*}
\end{prop}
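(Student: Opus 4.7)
The statement is essentially a Leibniz rule manipulation combined with a trivial localisation argument, so the strategy is to exploit the Leibniz rule for the extension $X$ that was established in the proof of Theorem \ref{thm_VF}. Recall that there we showed $X(\Psi_1 \Psi_2) = X(\Psi_1)\Psi_2 + \Psi_1 X(\Psi_2)$ for any $\Psi_1, \Psi_2 \in C^\infty(\N)$, and that $\wv{X} = X|_{C^\infty_\cc(\N)}$.

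\emph{Forward direction.} Suppose $X(F) = G$ $\nu$-a.e. For any $\Phi \in C^\infty_\cc(\N)$, the product $\Phi F$ lies in $C^\infty_\cc(\N)$, so the Leibniz rule gives
\begin{align*}
\wv{X}(\Phi F) = X(\Phi F) = X(\Phi) F + \Phi X(F) = \wv{X}(\Phi) F + \Phi G \quad \nu\text{-a.e.},
\end{align*}
which is exactly the required identity.

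\emph{Reverse direction.} Assume the right-hand side holds for every $\Phi \in C^\infty_\cc(\N)$. The same Leibniz computation as above yields $\wv{X}(\Phi F) = \wv{X}(\Phi) F + \Phi X(F)$ $\nu$-a.e.; subtracting the assumed identity gives $\Phi X(F) = \Phi G$ $\nu$-a.e. for every $\Phi \in C^\infty_\cc(\N)$. To conclude $X(F) = G$ $\nu$-a.e. on all of $\N$, cover $\N$ by countably many open precompact sets $\{\V_j\}$ and, for each $j$, choose a bump function $\Phi_j \in C^\infty_\cc(\N)$ with $\Phi_j|_{\V_j} = 1$ (equivalently, invoke the partition of unity used to define $X$ in Theorem \ref{thm_VF}). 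Then $(X(F) - G)|_{\V_j} = 0$ $\nu$-a.e. for each $j$, hence $X(F) = G$ $\nu$-a.e. on $\N$ by countable subadditivity of $\nu$.

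\emph{Remarks on difficulty.} There is no real obstacle here: both implications are one-line consequences of the Leibniz rule once it is in hand, and the only thing to be a little careful about is that the Leibniz rule must be invoked for the extended map $X$ acting on $\Phi \cdot F$ with $\Phi \in C^\infty_\cc(\N)$ and $F \in C^\infty(\N)$ — precisely the case covered at the end of the proof of Theorem \ref{thm_VF}. The localisation step in the reverse direction is standard and uses only the existence of compactly supported cutoffs equal to $1$ on any prescribed precompact open set.
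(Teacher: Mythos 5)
Your proof is correct, and the forward direction coincides with the paper's (both are the Leibniz rule applied to $\Phi F$ with $\Phi \in C^\infty_\cc(\N)$, $F \in C^\infty(\N)$). In the converse direction you take a slightly different route: you invoke the Leibniz rule for the extended map $X$ once more to rewrite $\wv{X}(\Phi F) - \wv{X}(\Phi)F = \Phi\, X(F)$, deduce $\Phi\, X(F) = \Phi G$ $\nu$-a.e.\ for every $\Phi \in C^\infty_\cc(\N)$, and then localize with cutoffs equal to $1$ on a countable cover by precompact open sets, concluding by countable subadditivity of $\nu$. The paper instead avoids any further use of the Leibniz rule at this stage: it plugs the test functions $\varphi_j$ of the partition of unity directly into the assumed identity and sums, using the defining formula (\ref{thm_VF4}) for the extension together with $\sum_j \wv{X}(\varphi_j) = 0$ (equation (\ref{thm_VF5})), so that $X(F) = \sum_j \wv{X}(\varphi_j F) = F\sum_j \wv{X}(\varphi_j) + G \sum_j \varphi_j = G$ in one line. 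The two arguments rest on the same machinery from Theorem \ref{thm_VF}; yours is marginally more self-contained in that it only uses the Leibniz rule and standard cutoffs (plus second countability of $\N$ for the countable cover), while the paper's is shorter because the partition-of-unity bookkeeping has already been done in constructing the extension. No gap in either direction.
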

\begin{proof}
The implication ,,$\Rightarrow$'' follows trivially from the Leibniz rule. As for the converse implication, one can write, by formula (\ref{thm_VF4}) defining the extension, that
\begin{align}
\label{property1_VF1}
X(F) = \sum_j \wv{X}(\varphi_j F) = \sum_j (\wv{X}(\varphi_j)F + \varphi_j G) = F \sum_j \wv{X}(\varphi_j) + G \sum_j \varphi_j = G,
\end{align}
where in the last equality we have used (\ref{thm_VF5}).
\end{proof}

In particular, the above proposition says that the condition $X\T = 1$ can be equivalently expressed via $\wv{X}(\Phi \T) - \wv{X}(\Phi)\T = \Phi$ for all $\Phi \in C^\infty_\cc(\N)$.

\begin{prop}
\label{property2_VF}
Let $\N$ be a stably causal spacetime, and let $X$, $\wv{X}$ be the two equivalent descriptions of an $L^p_\loc(\nu)$-vector field on $\N$. The following conditions are equivalent:
\begin{enumerate}[i)]
\item $X$ is future-directed causal $\nu$-a.e.
\item For any temporal function $\T \quad X\T > 0 \quad \nu$-a.e.
\item For any smooth bounded causal function $f \quad Xf \geq 0 \quad \nu$-a.e.
\item For any smooth bounded causal function $f$ and any $\Phi \in C^\infty_\cc(\N)$, $\Phi \geq 0$ \\ $\wv{X}(\Phi f) - \wv{X}(\Phi)f \geq 0 \quad \nu$-a.e.
\end{enumerate}
\end{prop}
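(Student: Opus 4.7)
The plan is to run the cycle (i) $\Rightarrow$ (ii) $\Rightarrow$ (iii) $\Rightarrow$ (i), and to obtain (iii) $\Leftrightarrow$ (iv) as a direct consequence of the Leibniz rule built into Theorem \ref{thm_VF}.

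For (i) $\Rightarrow$ (ii), I would argue pointwise: at $\nu$-a.e.\ $p$, the vector $X(p)$ is (nonzero) future-directed causal, while $\nabla\T|_p$ is past-directed timelike by definition of a temporal function. The reverse Cauchy--Schwarz inequality for Lorentzian inner products then gives $X\T|_p = g(\nabla\T, X)|_p > 0$. For (ii) $\Rightarrow$ (iii), the key geometric input is that for any smooth bounded causal $f$ and any temporal $\T$, the function $f + \epsilon \T$ is again temporal for every $\epsilon > 0$: indeed, $\nabla f$ is past-directed causal (by the very definition of causal function, see Lemma~\ref{lem_CT}) and the sum of a past-directed causal vector and a past-directed timelike vector is again past-directed timelike. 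Applying (ii) to $f + \epsilon\T$ yields $Xf + \epsilon X\T > 0$ $\nu$-a.e. To pass to the limit $\epsilon \to 0^+$ while keeping control of the $\epsilon$-dependent null sets, I would restrict $\epsilon$ to a countable dense subset of $(0,\infty)$ and intersect the corresponding countable family of exceptional sets, obtaining $Xf \geq 0$ $\nu$-a.e.

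I expect (iii) $\Rightarrow$ (i) to be the main obstacle. The underlying idea is the pointwise duality between the future and past causal cones: a vector $X(p)$ is (nonzero) future-directed causal precisely when $g(w, X(p)) > 0$ for every past-directed timelike $w \in T_p^*\N$, the strict future cone being obtained by closure. Condition (iii) gives $g(\nabla f|_p, X(p)) \geq 0$ $\nu$-a.e.\ for every smooth bounded causal $f$. To realize every past-directed causal direction at $p$ through such test functions, I would fix a background temporal function $\T_0$ (which exists by stable causality) and exhibit a countable family of smooth bounded causal perturbations of $\T_0$ whose gradients at $p$ are dense in the past-timelike cone at $p$; this is achieved by combining compactly supported modifications in a chart around $p$ with the addition of a sufficiently large multiple of $\T_0$ to keep the perturbed function globally temporal, and then using continuity in the perturbation parameter. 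Intersecting the countably many $\nu$-null exceptional sets and invoking continuity in $w$ of the bilinear form $g(w, \cdot)$ gives $X(p)$ in the closed future causal cone at $\nu$-a.e.\ $p$.

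Finally, (iii) $\Leftrightarrow$ (iv) is the cleanest step. By the Leibniz rule, which is a consequence of the chain rule (\ref{thm_VF1}) in Theorem~\ref{thm_VF}, and the fact that $\Phi, \Phi f \in C^\infty_\cc(\N)$ whenever $\Phi \in C^\infty_\cc(\N)$, we have
\begin{align*}
\wv{X}(\Phi f) - \wv{X}(\Phi)\, f = \Phi\, X f \qquad \nu\text{-a.e.}
\end{align*}
Consequently, (iv) asserts $\Phi\, X f \geq 0$ $\nu$-a.e.\ for every non-negative $\Phi \in C^\infty_\cc(\N)$ and every smooth bounded causal $f$. The implication (iii) $\Rightarrow$ (iv) is then immediate since $\Phi \geq 0$, while for (iv) $\Rightarrow$ (iii) I would pick a smooth non-negative partition of unity $\{\varphi_j\}$ subordinate to an exhaustion of $\N$ by precompact open sets, apply (iv) with $\Phi = \varphi_j$ to obtain $\varphi_j X f \geq 0$ $\nu$-a.e., and sum (locally finitely) to recover $Xf \geq 0$ $\nu$-a.e.
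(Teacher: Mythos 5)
Your steps (i)$\Rightarrow$(ii), (ii)$\Rightarrow$(iii) and (iii)$\Leftrightarrow$(iv) are correct and essentially coincide with the paper's route: the paper settles (i)$\Leftrightarrow$(ii)$\Leftrightarrow$(iii) by citing the pointwise Lemma~\ref{lem_CT} (whose proof contains exactly your two geometric inputs: positivity of $g(\nabla\T,\cdot)$ on future-directed causal vectors, and temporality of $f+\varepsilon\T$), and settles (iii)$\Leftrightarrow$(iv) via the identity $\wv{X}(\Phi f)-\wv{X}(\Phi)f=\Phi\,Xf$ of Proposition~\ref{property1_VF} --- the same identity you use, though note that since $f\notin C^\infty_\cc(\N)$ you need the Leibniz rule for the \emph{extension} $X$, i.e.\ Proposition~\ref{property1_VF}, not just (\ref{thm_VF2}). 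Your explicit handling of the $f$- and $\varepsilon$-dependent null sets (countable $\varepsilon$, partition of unity) is a refinement of what the paper leaves implicit, and it is the right concern.

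The weak point is your (iii)$\Rightarrow$(i). First, (iii) only quantifies over \emph{bounded} causal functions, while ``a sufficiently large multiple of $\T_0$'' is in general unbounded; this is repaired by post-composing with $\arctan$, exactly as in the proof of Lemma~\ref{lem_CT}. Second, and more seriously, the recipe ``compactly supported modification $v$ in a chart plus a large multiple $a_0\T_0$ to restore global temporality'' does not obviously yield gradients at $p$ dense in the past cone: the admissible $a_0$ scales linearly with the size of $dv$ over its support, so if you fix $v$ and then crank up $a_0$, the realized covector $dv_p+a_0\,d\T_0|_p$ is dragged toward the direction of $d\T_0|_p$; making this work requires scaling $dv_p$ together with $a_0$ and a delicate small-support cutoff estimate, and even then it only reaches directions strictly inside the cone. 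The clean way to realize an arbitrary past-timelike covector $w$ at $p$ is the device already used in Lemma~\ref{lem_CT}: a conformal embedding $i$ of $\N$ into a Minkowski spacetime \cite{LNembeddings} lets you take $f=\arctan\circ L\circ i$ with $L$ a linear Minkowski time function chosen so that $df_p$ is parallel to $w$, and restricting to rational coefficients gives one countable family of bounded temporal functions that works at every $p$ simultaneously. Finally, as you yourself observe, cone duality only places $X(p)$ in the \emph{closed} future cone, i.e.\ gives ``future-directed causal or zero''; the zero case cannot be excluded from (iii) or (iv) alone (the zero field satisfies both), a caveat shared by the paper's one-line appeal to Lemma~\ref{lem_CT} (which assumes $v\neq0$) and harmless in the intended application, where $X\T=1$ rules it out.
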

\begin{proof}
The equivalences $(i) \Leftrightarrow (ii) \Leftrightarrow (iii)$ are guaranteed by Lemma \ref{lem_CT}. As for the equivalence $(iii) \Leftrightarrow (iv)$, it becomes evident as soon as one writes, on the strength of Proposition \ref{property1_VF}, the identity $\wv{X}(\Phi f) - \wv{X}(\Phi)f = \Phi Xf$, valid for any $\Phi \in C^\infty_\cc(\N)$.
\end{proof}


\section{Proof of the main result}
\label{sec::proof}

Having acquainted ourselves with the $H^1_\loc$-side of the compact-open topology on $A^I_\T$ and with the $L^p_\loc$-vector fields, we are finally ready to prove Theorem \ref{premain}. Let us restate it here, recalling the notation and employing the alternative, chart-independent definition of an $L^{\infty}_\loc$-vector field.
\begin{thm}
\label{main}
Let $\M$ be a globally hyperbolic spacetime and let $\T: \M \rightarrow \sR$ be a Cauchy temporal function. Let $I$ be an interval and let us denote $\M_I \vc \T^{-1}(I)$ and $\M_t \vc \T^{-1}(t)$ for $t \in I$. For any map $\mu: I \rightarrow \Pf(\M_I)$, $t \mapsto \mu_t$ such that $\supp \, \mu_t \subset \M_t$ for every $t \in I$ (called a $\T$-evolution of a probability measure), the following conditions are equivalent:
\begin{enumerate}[(\itshape i)]
\item $\mu$ is \emph{causal} in the sense that
\begin{align}
\label{causal_measure_map}
\forall \, s,t \in I \quad s \leq t \ \Rightarrow \ \mu_s \preceq \mu_t.
\end{align}
\item There exists $\sigma \in \Pf(A_\T^I)$ such that $(\ev_t)_\sharp \sigma = \mu_t$ for every $t \in I$, where $\ev_t: A_\T^I \rightarrow \M$, $\ev_t(\gamma) \vc \gamma(t)$ denotes the evaluation map.
\item There exists a map $\wv{X}: C^\infty_{\cc}(\M_I^\circ) \rightarrow L^\infty(\M_I, \eta)$, where $\eta \vc \int_I\mu_t dt$, satisfying
\begin{itemize}
\item Chain rule: \ $\forall \Theta \in C^\infty(\sR^L), \, \Theta(0,\ldots,0) = 0 \ \ \forall \Phi_1, \ldots, \Phi_L \in C^\infty_{\cc}(\M_I^\circ)$ \begin{align*}
\wv{X}(\Theta(\Phi_1, \ldots, \Phi_L)) = \sum_{l=1}^L \partial_l \Theta(\Phi_1, \ldots, \Phi_L) \wv{X}\Phi_l.
\end{align*}
\item ``$X\T = 1 \ \eta\textit{-a.e.}$'': \quad \ $\forall \Phi \in C^\infty_\cc(\M_I^\circ) \quad \quad \wv{X}(\Phi \T) - \wv{X}(\Phi)\T = \Phi.$
\item Continuity equation: \ $\forall \Phi \in C^\infty_{\cc}(\M_I^\circ) \quad \quad \int_{\M_I} \wv{X}\Phi \, d\eta = 0$.
\item Causality $\eta$-a.e.: \quad \ $\forall \Phi \in C^\infty_{\cc}(\M_I^\circ), \, \Phi \geq 0 \quad \forall \, \textnormal{causal } f \in C^\infty_\cb(\M_I^\circ)$ 
\begin{align*}
\wv{X}(\Phi f) - \wv{X}(\Phi) f \geq 0.
\end{align*}
\end{itemize}
What is more, for any $\Phi \in C^\infty_\cc(\M_I^\circ)$ the maps $\Lambda_\mu(\Phi)(t) \vc \int_{\M_I} \Phi \, d\mu_t$ and $\Lambda_\mu(\wv{X}\Phi)(t) \vc \int_{\M_I} \wv{X}\Phi \, d\mu_t$ are well defined for $t \in I$ a.e. and we have that $\Lambda_\mu(\Phi) \in W^{1,\infty}(I)$ along with $\Lambda_\mu(\Phi)' = \Lambda_\mu(\wv{X}\Phi)$.
\end{enumerate}
\end{thm}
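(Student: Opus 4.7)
The strategy is to prove the cycle $(i) \Rightarrow (ii) \Rightarrow (iii) \Rightarrow (i)$. The auxiliary claim that $\Lambda_\mu(\Phi) \in H^1(I)$ with $\Lambda_\mu(\Phi)' = \Lambda_\mu(\wv{X}\Phi)$ will fall out of the step $(iii) \Rightarrow (i)$ as a by-product of applying the continuity equation to suitable separated test functions. I expect the main obstacle to be the construction of the vector field $\wv{X}$ in $(ii) \Rightarrow (iii)$, where one has to use the refined sensitivity of the compact-open topology on $A^I_\T$ to differential data, as developed in Section \ref{sec::Polish_spaces}.

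For $(i) \Rightarrow (ii)$, I would discretize and then pass to a limit. Pick a dense countable set $\{t_k\} \subset I$ and, for each finite initial segment $t_{k_1} < \ldots < t_{k_n}$, iteratively apply the standard gluing lemma to the causal couplings provided by Definition \ref{causal_rel_def} to obtain a joint measure on $\M^n$ with the correct one-dimensional marginals and supported on the set of $\preceq$-chains. A Borel selection of a piecewise-smooth causal interpolation (available by global hyperbolicity) lifts such a measure to $\sigma_n \in \Pf(A^I_\T)$. Tightness of $\{\sigma_n\}$ follows from the properness of $\ev_{t_k}$ (Proposition \ref{prop_ev_contB}) together with the individual tightness of each $\mu_{t_k}$. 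Prokhorov's theorem on the Polish space $A^I_\T$ (Corollary \ref{prop_LCP}) then yields a weak accumulation point $\sigma$, and the identity $(\ev_t)_\sharp \sigma = \mu_t$ is verified first on $\{t_k\}$ and extended to all of $I$ using the closedness of $J^+$ and the causality of $\mu$.

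For $(ii) \Rightarrow (iii)$, the idea is to build $\wv{X}$ ``tangent to $\sigma$''. Disintegrate the product $\sigma \otimes \cL_I$ (with $\cL_I$ Lebesgue measure on $I$) against the evaluation map $E: A^I_\T \times I \rightarrow \M_I$, $E(\gamma,t) \vc \gamma(t)$, which pushes forward onto $\eta$. This yields, for $\eta$-a.e.\ $q \in \M_I$, a probability measure $\tau_q$ supported on the fibre $\ev_{\T(q)}^{-1}(q)$, and I set
\begin{align}
\label{def_X0}
\wv{X}\Phi(q) \vc \int_{A^I_\T} (\Phi \circ \gamma)'(\T(q)) \, d\tau_q(\gamma),
\end{align}
the integrand being $\tau_q$-a.e.\ defined by Corollary \ref{diffae}. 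Local square-integrability of $\wv{X}\Phi$ follows from Fubini combined with the bound (\ref{derbound})---and this is precisely where the weak-$H^1_\loc$-topology of Section \ref{sec::Polish_spaces} is indispensable, since the plain compact-open formalism would give no uniform $L^2$-control over $(i \circ \gamma)'$. The chain rule descends pointwise from the classical one, the identity $\wv{X}(\Phi\T) - \wv{X}(\Phi)\T = \Phi$ follows from $\T(\gamma(t)) = t$, the causality clause is a direct consequence of $(f \circ \gamma)' \geq 0$ for $f$ causal (Lemma \ref{lem_CT}), and the continuity equation amounts to $\int_I (\Phi \circ \gamma)'(t) dt = 0$ for each $\gamma$ (valid because $\Phi \circ \gamma$ is compactly supported in $I^\circ$) after a Fubini swap.

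For $(iii) \Rightarrow (i)$, fix $\Phi \in C^\infty_\cc(\M_I^\circ)$ and $\varphi \in C^\infty_\cc(I^\circ)$ and plug $\Psi \vc \Phi \cdot (\varphi \circ \T)$ into the continuity equation. Using the Leibniz rule, the chain rule applied to $\varphi \circ \T$, and $X\T = 1$ $\eta$-a.e., one obtains
\begin{align*}
\int_I \varphi'(t) \Lambda_\mu(\Phi)(t) \, dt = -\int_I \varphi(t) \Lambda_\mu(\wv{X}\Phi)(t) \, dt,
\end{align*}
so $\Lambda_\mu(\Phi)' = \Lambda_\mu(\wv{X}\Phi)$ in the distributional sense; membership of both in $L^2(I)$ follows from $\|\Phi\|_\infty < \infty$, from Jensen's inequality, and from the compactness of $\supp \Phi$. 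Sobolev embedding then provides the promised continuous representative. To obtain (\ref{causal_measure_map}), apply the above to $\Phi = \Omega f$ with $f \in C^\infty_\cb(\M_I^\circ)$ a smooth bounded causal function and $\Omega \in C^\infty_\cc(\M_I^\circ)$, $\Omega \geq 0$, a space cut-off; expand $X(\Omega f) = X(\Omega)f + \Omega X(f)$ via the Leibniz rule and use Proposition \ref{property2_VF} to secure $X(f) \geq 0$ $\eta$-a.e. Letting $\Omega \nearrow 1$ through a carefully chosen exhaustion shows that $t \mapsto \int f d\mu_t$ is non-decreasing for every such $f$, and the known duality between $\preceq$ on $\Pf(\M)$ and monotonicity under bounded causal functions (cf.\ Lemma \ref{lem_CT}) then yields $\mu_s \preceq \mu_t$ for $s \leq t$. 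The most delicate point here will be controlling the cut-off term $\int X(\Omega) f \, d\mu_\tau$ uniformly in $\tau \in [s,t]$, which I intend to handle by exploiting tightness of $\{\mu_\tau\}_{\tau \in [s,t]}$ combined with the locality of $X$.
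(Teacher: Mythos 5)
Your steps (i)$\Rightarrow$(ii) and (ii)$\Rightarrow$(iii) are essentially sound and close to the paper's own route. The first is the same discretize--select--concatenate--compactness scheme (the paper uses dyadic partitions, the Borel right inverse of $(\ev_a,\ev_b)$ from Lemma \ref{lem_properonto} and narrow compactness of $\itPi_\preceq(\mu_a,\mu_b)$ rather than Prokhorov; note also that for noncompact $I$ the lifted curves must be extended to all of $I$, which the paper does by concatenating with ``at rest'' worldlines of the GBS splitting --- a detail you omit). Your disintegration formula for $\wv{X}\Phi$ is a legitimate pointwise variant of the paper's weak definition of $\wv{X}$ via the Riesz representation theorem; it works provided you address joint measurability of $(\gamma,t)\mapsto(\Phi\circ\gamma)'(t)$ (e.g.\ through difference quotients), and the bound (\ref{derbound}) then gives global, not merely local, square integrability. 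Also, when you expand $\wv{X}(\Phi\,(\varphi\circ\T))$ you must first pass to the extension $X$ on $C^\infty$ via Theorem \ref{thm_VF} (or use the paper's polynomial-approximation trick), since $\varphi\circ\T$ is not compactly supported; minor, but it should be said.

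The genuine gap is in (iii)$\Rightarrow$(i). First, the ``known duality between $\preceq$ on $\Pf(\M)$ and monotonicity under bounded causal functions'' is not Lemma \ref{lem_CT}, which characterizes causal \emph{tangent vectors}; the only quotable measure-level characterization is Proposition \ref{causalKchar}, phrased through the discontinuous indicators of $J^+(\K)$. Bridging from smooth bounded causal test functions to those indicators is precisely the hard part of this implication: the paper does it by fixing a compact $\K$, reducing $J^+(\K)$ to $\bigcap_n I^+(\overline{N_{1/n}(\K)})$, constructing the Cauchy hypersurface $\cS=\partial I^+(\M_c\cup\K)$, invoking Bernal--S\'anchez \cite[Theorem 5.15]{BS06} to get a smooth causal $f$ with $f^{-1}(0)=\cS$, and approximating $\chi_{I^+(\cS)}$ by $\varphi_n\circ f$. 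Your plan presupposes this as known, so the core of the step is missing. Second, your proposed control of the cut-off term $\int \wv{X}(\Omega_m)f\,d\mu_\tau$ ``by tightness of $\{\mu_\tau\}_{\tau\in[s,t]}$'' cannot work: under hypothesis (iii) alone uniform tightness of that family is not available (the bullet conditions constrain only $\eta=\int_I\mu_t\,dt$, and uniform tightness is essentially what one is trying to prove), and there is no $L^\infty$-control of $\wv{X}(\Omega_m)$ on the transition region either. The paper dissolves this problem instead of estimating it: because $\varphi_n\circ f$ vanishes outside $I^+(\cS)$ and $I^+(\K)\cap\M_{[a,b]}$ is compact, for large $m$ one has $\Omega_m\equiv 1$ on the whole region of integration, so $\wv{X}(\Omega_m)$ vanishes there $\eta$-a.e.\ by locality (Corollary \ref{cor_VF}) and the error term is exactly zero, with no uniform-in-$\tau$ estimate needed. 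For a general nowhere-vanishing bounded causal $f$ (e.g.\ $\arctan\circ\,\T$) this mechanism is unavailable, so the monotonicity of $t\mapsto\int f\,d\mu_t$ cannot be reached by your cut-off argument; you need the paper's compactly-localized causal test functions.
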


Before delving into the proof, we need to reintroduce certain tools from \cite{Miller17a} (cf. also \cite{NParticleMEHH}).

First we discuss curve concatenation. For any fixed $a,b,c \in \sR$, $a<b<c$ consider the set $A^{[a,b,c]}_\T \vc \{ (\gamma_1, \gamma_2) \in A^{[a,b]}_\T \times A^{[b,c]}_\T \, | \, \gamma_1(b) = \gamma_2(b) \}$ of concatenable causal curves, which is a closed (and hence Polish) subspace of $A^{[a,b]}_\T \times A^{[b,c]}_\T$. The concatenation map $\sqcup: A^{[a,b,c]}_\T \rightarrow A^{[a,c]}_\T$ is of course continuous, and thus it can be extended to measures on the spaces of causal curves in the following way (cf. \cite[Definition 25]{NParticleMEHH}).
\begin{defn}
\label{def_concat}
For any measures $\sigma_1 \in \Pf(A^{[a,b]}_\T)$ and $\sigma_2 \in \Pf(A^{[b,c]}_\T)$ we say they are \emph{concatenable} if $(\ev_b)_\sharp \sigma_1 = (\ev_b)_\sharp \sigma_2 \cv \nu$ and we define their \emph{concatenation} $\sigma_1 \sqcup \sigma_2 \in \Pf(A^{[a,c]}_\T)$ with the help of the Riesz--Markov representation theorem (cf. \cite[Theorem 14.12]{Aliprantis}) via
\begin{align*}
\int_{A^{[a,c]}_\T} F d(\sigma_1 \sqcup \sigma_2) \vc \int_{\M_b} \left( \int_{A^{[a,b,c]}_\T} F(\gamma_1 \sqcup \gamma_2) \, d(\sigma^q_1 \times \sigma^q_2) (\gamma_1,\gamma_2) \right) d\nu(q),
\end{align*}
for any $F \in C_\cc(A^{[a,c]}_\T)$, where $\{\sigma^q_i \}_{q \in \T^{-1}(b)}$ is the disintegration of $\sigma_i$ with respect to $\ev_b$, $i=1,2$.
\end{defn}

Clearly, the functional on $C_\cc(A^{[a,c]}_\T)$ defined by the above iterated integral is linear, positive and of norm one. In invoking the Riesz--Markov theorem we rely on the local compactness of the spaces $A^I_\T$ (Proposition \ref{prop_LCP}).

Of course, one can similarly define the concatenation of measures in the case where one or both spaces of curves involve noncompact intervals. One can also easily verify that
\begin{align}
\label{rem_concat}
(\ev_t)_\sharp (\sigma_1 \sqcup \sigma_2) = \left\{ \begin{array}{ll} (\ev_t)_\sharp \sigma_1 & \textrm{for } t < b
																	 \\ \nu & \textrm{for } t = b
																	 \\ (\ev_t)_\sharp \sigma_2 & \textrm{for } t > b
													  \end{array} \right.
\end{align}

We will also need the following lemma (cf. \cite[Lemma 24]{NParticleMEHH}).
\begin{lem}
\label{lem_properonto}
The map $(\ev_a, \ev_b): A^{[a,b]}_\T \rightarrow J^+ \cap (\M_a \times \M_b)$, defined via $(\ev_a, \ev_b)(\gamma) \vc (\gamma(a), \gamma(b))$ is a continuous proper surjection admitting a Borel right inverse.
\end{lem}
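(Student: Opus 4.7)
My plan is to verify the four assertions --- continuity, surjectivity, properness, and the existence of a Borel right inverse --- separately.

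Continuity of the pair map $(\ev_a, \ev_b)$ is immediate from the continuity of each $\ev_t$ (Proposition \ref{prop_ev_contB}). Surjectivity onto $J^+ \cap (\M_a \times \M_b)$ follows from the very definition of the causal precedence relation $\preceq$: given $(p,q)$ in this set one finds some piecewise smooth future-directed causal curve $\rho:[0,1]\to\M$ from $p$ to $q$, and the reparametrization $\gamma \vc \rho \circ (\T \circ \rho)^{-1}$ (well-defined thanks to the strict $\T$-monotonicity of causal curves, as exploited in the footnote preceding (\ref{locLip1})) yields an element of $A^{[a,b]}_\T$ with $\gamma(a)=p$ and $\gamma(b)=q$.

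For properness, take a compact subset $K \subset J^+ \cap (\M_a \times \M_b)$ and let $K_1 \vc \pi^1(K) \subset \M_a$, $K_2 \vc \pi^2(K) \subset \M_b$ be its (compact) coordinate projections. Then by definition
\begin{align*}
(\ev_a, \ev_b)^{-1}(K) \subset \ev_a^{-1}(K_1) \cap \ev_b^{-1}(K_2) = A^{[a,b]}_\T(K_1, K_2),
\end{align*}
and the right-hand side is compact by Lemma \ref{lem_compact}. Since the preimage $(\ev_a, \ev_b)^{-1}(K)$ is closed in $A^{[a,b]}_\T$ by continuity, it is a closed subset of a compact set and therefore compact itself.

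The main obstacle --- and the only place where a less elementary tool is needed --- is producing a Borel right inverse. Both the domain $A^{[a,b]}_\T$ and the codomain $J^+ \cap (\M_a \times \M_b)$ are Polish (the former by Proposition \ref{AIT_Polish} and Corollary \ref{prop_LCP}, the latter as a closed subset of the Polish space $\M \times \M$), and the map $(\ev_a, \ev_b)$ is a continuous surjection with compact (hence $\sigma$-compact) fibers by the properness established above. Under these hypotheses the Arsenin--Kunugui theorem on Borel uniformization of sets with $\sigma$-compact sections (applied to the graph of $(\ev_a, \ev_b)$, which is closed in $A^{[a,b]}_\T \times (J^+ \cap (\M_a \times \M_b))$) provides the desired Borel section $s: J^+ \cap (\M_a \times \M_b) \to A^{[a,b]}_\T$ satisfying $(\ev_a,\ev_b) \circ s = \id$, completing the proof.
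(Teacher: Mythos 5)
Your argument is correct, and the first three assertions are handled exactly as in the paper: continuity via Proposition \ref{prop_ev_contB}, surjectivity by reparametrizing a causal curve from $p$ to $q$ with $(\T\circ\rho)^{-1}$, and properness by trapping the preimage of $K$ inside $\ev_a^{-1}(\pi^1(K))\cap\ev_b^{-1}(\pi^2(K))=A^{[a,b]}_\T(\pi^1(K),\pi^2(K))$, which is compact (the paper quotes the properness of the individual $\ev_t$'s, you quote Lemma \ref{lem_compact}; these amount to the same thing, modulo the identification of the weak-$H^1_\loc$ and compact-open topologies from Theorem \ref{stronger_top}). The only genuine divergence is the selection step. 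The paper invokes a lighter measurable-selection fact: a continuous map from a $\sigma$-compact metrizable space onto a metrizable space admits a Borel right inverse, the needed $\sigma$-compactness and metrizability of $A^{[a,b]}_\T$ having been secured in Lemma \ref{lem_metrizable} and Proposition \ref{prop_LCP}; properness plays no role in that step. You instead apply the Arsenin--Kunugui uniformization theorem to the (flipped) graph $\{(z,\gamma):(\ev_a,\ev_b)(\gamma)=z\}\subset\bigl(J^+\cap(\M_a\times\M_b)\bigr)\times A^{[a,b]}_\T$, which is closed, sits in a product of Polish spaces (the codomain is closed in $\M^2$ by Proposition \ref{PropCS}, the domain is Polish by Proposition \ref{AIT_Polish}), and has compact, nonempty sections thanks to the properness and surjectivity you already proved --- so the hypotheses are indeed met and the resulting uniformizing function is the desired Borel section. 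Your route makes properness do double duty but calls on a substantially heavier descriptive-set-theoretic theorem; the paper's route trades that for the topological bookkeeping ($\sigma$-compactness, metrizability) it has already done. Do note that the uniformization must be applied to the subset of (codomain)$\times$(domain) rather than to the graph in the order you wrote it, but that is a purely cosmetic flip.
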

\begin{proof}
Continuity follows immediately from Proposition \ref{prop_ev_contB}. 

In order to show surjectiveness, take any $p,q \in \M$ such that $\T(p) = a$, $\T(q) = b$ and $p \preceq q$. Let $\rho: [0,1] \rightarrow \M$ be a piecewise-smooth future-directed causal curve connecting $p$ with $q$. Reasoning similarly as in the proof of Proposition \ref{Liploc}, one finds that the reparametrized curve $\hat{\rho} \vc \rho \circ (\T \circ \rho)^{-1}$ belongs to $A^{[a,b]}_\T$ and indeed satisfies $(\ev_a, \ev_b)(\hat{\rho}) = (p,q)$.

Finally, to show properness, take any compact $\K \subset J^+ \cap (\M_a \times \M_b)$ and notice that
\begin{align*}
(\ev_a, \ev_b)^{-1}(\K) \subset \ev_a^{-1}(\pi^1(\K)) \cap \ev_b^{-1}(\pi^2(\K)),
\end{align*}
where $\pi^1, \pi^2: \M^2 \rightarrow \M$ are the canonical projections on the first and second component, respectively. On the strength of Lemma \ref{prop_ev_contB}, this means that $(\ev_a, \ev_b)^{-1}(\K)$ is compact as a closed subset of a compact set.

To finish the proof, we invoke the standard measurable selection result, by which a continuous map from a $\sigma$-compact metrizable space onto a metrizable space admits a Borel right inverse \cite[Corollary I.8]{Fabec}. Metrizability and $\sigma$-compactness of $A^I_\T$ were proven in Lemma \ref{lem_metrizable} \& Corollary \ref{prop_LCP}.
\end{proof}

Finally, the following lemma will be useful more than once.
\begin{lem}
\label{useful}
For any function $\Psi: \M \rightarrow \sR$ and any $\gamma \in A_\T^I$
\begin{align}
\label{useful1}
\supp \Psi \circ \gamma \subset \T(\supp \Psi).
\end{align}
\end{lem}
\begin{proof}
Pick any $t \in \supp \Psi \circ \gamma$ and let $(t_n) \subset I$ be a sequence convergent to $t$ such that $\Psi(\gamma(t_n)) \neq 0$ for every $n \in \sN$. The latter means that $\gamma(t_n) \in \supp \Psi$, and passing to the limit yields $\gamma(t) \in \supp \Psi$. This, in turn, leads to $t = \T(\gamma(t)) \in \T(\supp \Psi)$.
\end{proof}

We now proceed to proving Theorem \ref{main}.
\begin{proof}[Proof of Theorem \ref{main}]
\emph{(ii) $\Rightarrow$ (i)}: Fix $s,t \in I$, $s < t$. Similarly as in Lemma \ref{lem_properonto}, consider the map $(\ev_s, \ev_t): A^I_\T \rightarrow \M^2$ and define the pushforward measure $\omega \vc (\ev_s, \ev_t)_\sharp \sigma$. We claim that $\omega$ is a causal coupling of $\mu_s$ and $\mu_t$. Indeed, one has that
\begin{align*}
(\pi^1)_\sharp \omega = [\pi^1 \circ (\ev_s, \ev_t)]_\sharp \sigma = (\ev_s)_\sharp \sigma = \mu_s
\end{align*}
and similarly $(\pi^2)_\sharp \omega = \mu_t$. Moreover,
\begin{align*}
\omega(J^+) = \sigma((\ev_s, \ev_t)^{-1}(J^+)) = \sigma(A^I_\T) = 1,
\end{align*}
where we have used the fact that the image of the map $(\ev_s, \ev_t)$ is a subset of $J^+$ (cf. the first part of the proof of Lemma \ref{lem_properonto}).
\\

\emph{(i) $\Rightarrow$ (ii)}: 
Assume first that $I = [a,b]$. We shall construct a sequence of measures $(\sigma_n) \subset \Pf(A^{[a,b]}_\T)$ such that $(\ev_t)_\sharp \sigma_n = \mu_t$ for all $t$ of the form $t^n_i \vc a + i(b-a)/2^n$, $i=0,1,2,3,\ldots,2^n$, which will then turn out to have a subsequence convergent to some $\sigma \in \Pf(A^{[a,b]}_\T)$. Since condition (\ref{causal_measure_map}) implies narrow continuity \cite[Proposition 11]{Miller17a}, such a $\sigma$ must in fact satisfy $(\ev_t)_\sharp \sigma = \mu_t$ for \emph{all} $t \in [a,b]$, as desired.

We construct the sequence $(\sigma_n)$ as follows. For any fixed $n$ and any $i=1,2,3,\ldots,2^n$, let $E^i: J^+ \cap (\M_{t^n_{i-1}} \times \M_{t^n_i}) \rightarrow A^{[t^n_{i-1},t^n_i]}_\T$ be the Borel inverse of the map $(\ev_{t^n_{i-1}}, \ev_{t^n_i})$, existing by Lemma \ref{lem_properonto}. Furthermore, let $\omega_i$ be a causal coupling of $\mu_{t^n_{i-1}}$ and $\mu_{t^n_i}$, that is $\omega_i \in \itPi_\preceq(\mu_{t^n_{i-1}}, \mu_{t^n_i})$. Notice that each $\omega_i$ can be regarded as an element of $\Pf(J^+ \cap (\M_{t^n_{i-1}} \times \M_{t^n_i}))$. Using the concatenation introduced in Definition \ref{def_concat}, we can thus define
\begin{align*}
\sigma_n \vc E^1_\sharp \omega_1 \sqcup E^2_\sharp \omega_2 \sqcup E^3_\sharp \omega_3 \sqcup \ldots \sqcup E^{2^n}_\sharp \omega_{2^n} \in \Pf(A^{[a,b]}_\T).
\end{align*}
One can easily verify that indeed $(\ev_{t^n_i})_\sharp \sigma_n = \mu_{t^n_i}$ for every $i = 0,1,\ldots,2^n$. Similarly as in the proof of (ii) $\Rightarrow$ (i), one can also check that $(\ev_a, \ev_b)_\sharp \sigma_n \in \itPi_\preceq(\mu_a,\mu_b)$, and so the constructed sequence $(\sigma_n) \subset (\ev_a, \ev_b)^{-1}_\sharp(\itPi_\preceq(\mu_a,\mu_b))$.

Now comes the crucial observation: by the continuity and properness of $(\ev_a, \ev_b)$ (Lemma \ref{lem_properonto}) and the subsequent properness of $(\ev_a, \ev_b)_\sharp$ (cf. \cite[Lemma 6]{Miller17a}) along with the narrow compactness of $\itPi_\preceq(\mu_a,\mu_b)$ (Proposition \ref{narrowcom}), the set $(\ev_a, \ev_b)^{-1}_\sharp(\itPi_\preceq(\mu_a,\mu_b))$ is compact, and thus $(\sigma_n)$ has a convergent subsequence. Its limit $\sigma$, as already explained, is the desired measure on the space $A^{[a,b]}_\T$.

Assume now that $I$ is an arbitrary interval and let $\{[a_n,b_n]\}$ be its exhaustion by compact subintervals. Fix $t_0 \in (a_1,b_1)$ and notice that the inverse image $(\ev_{t_0})^{-1}_\sharp(\mu_{t_0}) \subset \Pf(A^I_\T)$ is narrowly compact (again by Proposition \ref{narrowcom} along with the continuity and properness of $\ev_{t_0}$ guaranteed by Proposition \ref{prop_ev_contB}). Somewhat similarly as before, we shall construct a sequence $(\sigma_n) \subset (\ev_{t_0})^{-1}_\sharp(\mu_{t_0})$ such that for every $n$ we will have $(\ev_t)_\sharp \sigma_n = \mu_t$ for all $t \in [a_n,b_n]$. By the compactness of $(\ev_{t_0})^{-1}_\sharp(\mu_{t_0})$, the sequence $(\sigma_n)$ has a subsequence convergent to some $\sigma \in \Pf(A^I_\T)$, which clearly satisfies the desired equality for all $t \in I$. 
 
The construction goes as follows. For any fixed $n$, let $I^1_n \vc I \cap (-\infty, a_n]$, $I^2_n \vc [a_n, b_n]$, $I^3_n \vc I \cap [b_n,+\infty)$, and let $\sigma_n$ be the concatenation of three measures\footnote{Of course, if the intervals $I^1_n$ and/or $I^3_n$ are degenerate or empty, one can disregard them.}
\begin{align*}
\sigma_n \vc \sigma^1_n \sqcup \sigma^2_n \sqcup \sigma^3_n,
\end{align*}
where $\sigma^j_n \in A^{I^j_n}_\T$, $j=1,2,3$ are defined in the following way:
\begin{itemize}
\item $\sigma^1_n \vc \int_{\M_{a_n}} \delta_{\gamma^q} d\mu_{a_n}(q)$, where for any $q \in \M_{a_n}$ the curve $\gamma^q: I^1_n \rightarrow \M$ is defined with the help of the GBS-splitting $\Xi: \M \rightarrow \sR \times \itSigma$ as $\gamma^q(t) \vc \Xi^{-1}(t,\pi_\itSigma(\Xi(q)))$. In other words, $\gamma^q$ is a worldline of a pointlike particle being ``at rest at $\pi_\itSigma(\Xi(q))$''.
\item $\sigma^2_n$ is the measure constructed on $A^{[a_n,b_n]}_\T$ as decsribed in the first part of the proof. As such, it safisfies $(\ev_t)_\sharp \sigma^2_n = \mu_t$ for all $t \in [a_n,b_n]$ and this property transfers to $\sigma_n$ on the strength of (\ref{rem_concat}).
\item $\sigma^3_n \vc \int_{\M_{b_n}} \delta_{\gamma_p} d\mu_{b_n}(p)$, where for any $p \in \M_{b_n}$ the curve $\gamma_p: I^3_n \rightarrow \M$ is defined with the help of the GBS-splitting $\Xi$ as $\gamma_p(t) \vc \Xi^{-1}(t,\pi_\itSigma(\Xi(p)))$. Just like the curves $\gamma^q$ above, every $\gamma_p$ is a worldline of a pointlike particle being ``at rest at $\pi_\itSigma(\Xi(p))$''.
\end{itemize}
The sole aim of $\sigma^1_n$ and $\sigma^3_n$ is to ``causally extend'' the curves encapsulated in $\sigma^2_n$, so as to obtain a well-defined measure $\sigma_n$ on $A^I_\T$. As already explained, the sequence $(\sigma_n)$ has a convergent subsequence, and its limit $\sigma$ is the desired measure on $A^I_\T$ satisfying $(\ev_t)_\sharp \sigma = \mu_t$ for all $t \in I$.
\\

\emph{(ii) $\Rightarrow$ (iii)} The idea is to define the map $\wv{X}: C^\infty_{\cc}(\M_I^\circ) \rightarrow L^2(\M_I, \eta)$ from $\sigma \in \Pf(A^I_\T)$ through the Riesz representation theorem via
\begin{align}
\label{def_X0}
\langle \wv{X} \Phi, \varphi \rangle_{L^2(\M_I)} \vc \int_{A^I_\T} \int_I (\Phi \circ \gamma)'(t) (\varphi \circ \gamma)(t) dt \, d\sigma(\gamma)
\end{align}
for any $\Phi \in C^\infty_{\cc}(\M_I^\circ)$ and any $\varphi \in L^2(\M_I, \eta)$. The task is now to demonstrate that the above integral is well defined, and then to show that thus defined $\wv{X}$ has all the properties listed in Theorem \ref{main}. This, on the strength of Theorem \ref{thm_VF} and Remark \ref{Linfrem}, will in turn certify that $\wv{X}\Phi$ in fact belongs to $L^\infty(\M_I, \eta)$ for any $\Phi \in C^\infty_{\cc}(\M_I^\circ)$.

Notice that the (finite) $L^2$-norm of $\varphi$ can be equivalently written as
\begin{align}
\nonumber
& \| \varphi \|_{L^2(\M_I)}^2 \vc \int_{\M_I} |\varphi|^2 d\eta = \int_I \int_{\M_I} |\varphi(q)|^2 d\mu_t(q) \, dt = \int_I \int_{A^I_\T} |\varphi(\gamma(t))|^2 d\sigma(\gamma) \, dt 
\\
\label{def_X2}
& = \int_{A^I_\T} \int_I |\varphi(\gamma(t))|^2 dt \, d\sigma(\gamma) = \int_{A^I_\T} \| \varphi \circ \gamma \|^2_{L^2(I)} d\sigma(\gamma),
\end{align}
where in the new line we have used Fubini's theorem to change the order of integration. Notice also that (\ref{def_X0}) can be written as
\begin{align}
\label{def_X1}
\langle \wv{X} \Phi, \varphi \rangle_{L^2(\M_I)} \vc \int_{A^I_\T} \langle (\Phi \circ \gamma)', \varphi \circ \gamma \rangle_{L^2(I)} d\sigma(\gamma).
\end{align}
Therefore, we have to show that the map $\gamma \mapsto \langle (\Phi \circ \gamma)', \varphi \circ \gamma \rangle_{L^2(I)}$ is $\sigma$-mea\-sur\-able and $\sigma$-integrable.

The proof of measurability goes in three steps.

\textbf{Step 1.} Let $\varphi \in C^\infty_\cc(\M_I^\circ)$. We claim that the map $\varphi_\ast: A^I_\T \rightarrow L^2(I)$ given by $\varphi_\ast(\gamma) \vc \varphi \circ \gamma$ is well defined and continuous.

Indeed, observe first that for any $\gamma \in A^I_\T$ the map $\varphi \circ \gamma$ belongs to $C_\cc(I)$ and so it is square-integrable.

Assume now that $\gamma_n \rightarrow \gamma$ in $A^I_\T$. With the aid of Lemma \ref{useful}, take $[a,b] \subset I^\circ$ such that $(a,b) \supset \T(\supp \varphi) \supset \supp \varphi \circ \rho$ for any $\rho \in A^I_\T$. One can write that
\begin{align*}
& \|\varphi_\ast(\gamma_n) - \varphi_\ast(\gamma)\|^2_{L^2(I)} = \int_I |\varphi(\gamma_n(t)) - \varphi(\gamma(t))|^2 dt
\\
& = \int_a^b |\varphi(\gamma_n(t)) - \varphi(\gamma(t))|^2 dt \leq L_\varphi^2 (b-a) \max_{t \in [a,b]} d_h(\gamma_n(t),\gamma(t))^2 \rightarrow 0,
\end{align*}
where $L_\varphi$ is the Lipschitz constant of $\varphi$ with respect to the distance $d_h$.

\textbf{Step 2.} Let $\Phi \in C^\infty_\cc(\M_I^\circ)$. We claim that the map $A^I_\T \ni \gamma \mapsto (\Phi \circ \gamma)' \in L^2(I)$ is well defined with a norm-bounded image, and that $\gamma_n \rightarrow \gamma$ in $A^I_\T$ implies $(\Phi \circ \gamma_n)' \rightharpoonup (\Phi \circ \gamma)'$ in $L^2(I)$.

Firstly, notice that $\Phi \circ \gamma$ is Lipschitz. Indeed, it is locally Lipschitz by Proposition \ref{Liploc} as well as compactly supported, because by Lemma \ref{useful} $\supp \Phi \circ \gamma \subset \T(\supp \Phi)$. Hence $(\Phi \circ \gamma)' \in L^\infty(I) \subset L^2(I)$.

Secondly, in order to show that the set $\{(\Phi \circ \gamma)' \ | \ \gamma \in A^I_\T\}$ is $\|.\|_{L^2(I)}$-bounded, observe that, for any $\gamma \in A^I_\T$ and any $t \in I$ for which $\gamma'(t)$ exists, one can write $(\Phi \circ \gamma)'(t) = h(\nabla_h \Phi_{\gamma(t)},\gamma'(t))$, where $h$ is the Riemannian metric given by (\ref{R_metric}) and $\nabla_h \Phi$ denotes the $h$-gradient of $\Phi$. Choosing $[a,b] \subset I^\circ$ such that the compact set $\T(\supp \Phi) \subset [a,b]$, one can estimate
\begin{align}
\nonumber
\|(\Phi \circ \gamma)' \|_{L^2(I)}^2 & = \int_a^b |(\Phi \circ \gamma)'(t)|^2 dt = \int_I h(\nabla_h \Phi_{\gamma(t)}, \gamma'(t))^2 dt
\\
\label{X_step2}
& \leq \int_a^b h(\nabla_h \Phi_{\gamma(t)},\nabla_h \Phi_{\gamma(t)}) \, h(\gamma'(t),\gamma'(t)) \, dt
\\
\nonumber
& \leq \int_a^b h(\nabla_h \Phi_{\gamma(t)},\nabla_h \Phi_{\gamma(t)}) \, 2\theta(\gamma(t))\alpha(\gamma(t)) \, dt
\\
\nonumber
& \leq 2(b-a)\max_{q \in \supp \Phi} h(\nabla_h \Phi_q,\nabla_h \Phi_q) \, \theta(q)\alpha(q),
\end{align}
where we have used (\ref{derbound}) and strongly relied on the compactness of $\supp \Phi$. Notice that the obtained bound is independent of $\gamma$.

Thirdly, assume that $\gamma_n \rightarrow \gamma$ in $A_\T^I$, which means that $i_\ast(\gamma_n) \rightharpoonup i_\ast(\gamma)$ in $H^1_\loc(I,\sR^N)$ for any fixed Nash--M\"uller embedding $i: \M \hookrightarrow \sR^N$. In order to show that $(\Phi \circ \gamma_n)' \rightharpoonup (\Phi \circ \gamma)'$ in $L^2(I)$, consider the smooth map $\Phi \circ i^{-1}: i(\M) \rightarrow \sR$. Since $i(\M)$ is closed, the embedding $i$ is proper \cite[Proposition 5.5]{Lee}, and so there exists a smooth extension $\widehat{\Phi \circ i^{-1}}: \sR^N \rightarrow \sR$ \cite[Lemma 5.34]{Lee}. This allows us to write, invoking Lemma \ref{Psilemma}, that
\begin{align*}
\Phi \circ \gamma_n = (\widehat{\Phi \circ i^{-1}})_\ast(i_\ast(\gamma_n)) \rightharpoonup (\widehat{\Phi \circ i^{-1}})_\ast(i_\ast(\gamma)) = \Phi \circ \gamma
\end{align*}
in $H^1_\loc(I)$. This in turn yields $(\Phi \circ \gamma_n)' \rightharpoonup (\Phi \circ \gamma)'$ in $L^2_\loc(I)$ by Lemma \ref{lem_diff_cont}. Finally, on the strength of Lemma \ref{lem_bounded} and the $\|.\|_{L^2(I)}$-boundedness proven above, we obtain that $(\Phi \circ \gamma_n)' \rightharpoonup (\Phi \circ \gamma)'$ in $L^2(I)$, as desired.

\textbf{Step 3.} Fix $\Phi \in C^\infty_\cc(\M_I^\circ)$ and $\varphi \in L^2(\M_I, \eta)$ and let $(\varphi_n) \subset C^\infty_{\cc}(\M_I^\circ)$ be such that $\varphi_n \rightarrow \varphi$ in $L^2(\M_I, \eta)$. By the previous two steps, we obtain that for any $n \in \sN$ the map
\begin{align*}
A^I_\T \ni \gamma \mapsto  \langle (\Phi \circ \gamma)', \varphi_n \circ \gamma \rangle_{L^2(I)}
\end{align*}
is continuous and hence $\sigma$-measurable. Since the pointwise limit of a sequence of measurable functions is measurable, we now only need to show that for $\sigma$-a.a. $\gamma \in A^I_\T$ there exists a subsequence $(\varphi_{n_k})$ such that
\begin{align}
\label{X_step3}
\lim_{k \rightarrow \infty} \langle (\Phi \circ \gamma)', \varphi_{n_k} \circ \gamma - \varphi \circ \gamma \rangle_{L^2(I)} = 0.
\end{align}
Indeed, observe that $\int_{A^I_\T} \| (\varphi_n \circ \gamma) - (\varphi \circ \gamma) \|_{L^2(I)}^2 d\sigma(\gamma) = \int_{\M_I} |\varphi_n - \varphi|^2 d\eta \rightarrow 0$ by the very definition of $(\varphi_n)$, where the equality follows by the same reasoning as in (\ref{def_X2}). But this means that for some subsequence $(\varphi_{n_k})$ we have $\| (\varphi_{n_k} \circ \gamma) - (\varphi \circ \gamma) \|_{L^2(I)} \rightarrow 0$ for $\sigma$-a.a $\gamma \in A^I_T$, which in turn yields (\ref{X_step3}) by the Cauchy--Schwarz inequality.

We have thus proven that the map $\gamma \mapsto \langle (\Phi \circ \gamma)', \varphi \circ \gamma \rangle_{L^2(I)}$ is $\sigma$-mea\-sur\-able. To show its $\sigma$-integrability, observe that
\begin{align*}
\int_{A^I_\T} | \langle (\Phi \circ \gamma)', \varphi \circ \gamma \rangle_{L^2(I)} | d\sigma(\gamma) \leq \tfrac{1}{2} \int_{A^I_\T} \|(\Phi \circ \gamma)' \|_{L^2(I)}^2 d\sigma(\gamma) + \tfrac{1}{2} \int_{A^I_\T} \| \varphi \circ \gamma \|_{L^2(I)}^2 d\sigma(\gamma).
\end{align*}
The finiteness of the first term on the right-hand side has been already demonstrated through (\ref{X_step2}). As for the second term, notice that it is equal to $\tfrac{1}{2} \| \varphi \|^2_{L^2(\M_I)}$ (cf. (\ref{def_X2})) and is hence finite as well. 

All in all, we have thus shown that the integral in (\ref{def_X1}) exists for any $\Phi \in C^\infty_\cc(\M_I^\circ)$ and any $\varphi \in L^2(\M_I, \eta)$ and so the map $\wv{X}: C^\infty_{\cc}(\M_I^\circ) \rightarrow L^2(\M_I, \eta)$ is well defined.

We now demonstrate that $\wv{X}$ satisfies all the properties listed in Theorem \ref{main} (iii).

\smallskip

\textbf{Chain rule:} For any $\Theta \in C^\infty(\sR^L)$, any $\Phi_1,\ldots,\Phi_L \in C^\infty_{\cc}(\M_I^\circ)$ and any $\varphi \in L^2(\M_I, \eta)$ one has that
\begin{align*}
\langle \wv{X}(\Theta(\Phi_1,\ldots,\Phi_L)), \varphi \rangle_{L^2(\M_I)} & \vc \int_{A^I_\T} \big\langle [\Theta(\Phi_1 \circ \gamma,\ldots,\Phi_L \circ \gamma)]', \varphi \circ \gamma \big\rangle_{L^2(I)} d\sigma(\gamma)
\\
& = \sum_{l=1}^L \int_{A^I_\T} \big\langle \partial_l \Theta(\Phi_1 \circ \gamma,\ldots,\Phi_L \circ \gamma) (\Phi_l \circ \gamma)', \varphi \circ \gamma \big\rangle_{L^2(I)} d\sigma(\gamma)
\\
& = \sum_{l=1}^L \big\langle \partial_l \Theta(\Phi_1 \circ \gamma, \ldots, \Phi_L \circ \gamma) \wv{X}\Phi_l, \varphi \big\rangle_{L^2(\M_I)} \\
& = \big\langle \sum_{l=1}^L \partial_l \Theta(\Phi_1 \circ \gamma,\ldots,\Phi_L \circ \gamma) \wv{X}\Phi_l, \varphi \big\rangle_{L^2(\M_I)}.
\end{align*}

Before moving on, recall from the proof of Theorem \ref{thm_VF} that the chain rule entails the inclusion 
\begin{align}
\label{support_inclusion}
\forall \Phi \in C^\infty_{\cc}(\M_I^\circ) \quad \supp \wv{X}\Phi \subset \supp \Phi.
\end{align}

\textbf{``$X\T = 1$ $\eta$-a.e.'':}  Fix $\Phi \in C^\infty_{\cc}(\M_I^\circ)$ and $\varphi \in L^2(\M_I, \eta)$. Thanks to (\ref{support_inclusion}), the product $\wv{X}(\Phi)\T$ is a well-defined element of $L^2(\M_I, \eta)$. Remembering that $\T \circ \gamma = \id_I$ for any $\gamma \in A^I_\T$, one can write
\begin{align*}
\langle \wv{X}(\Phi \T) - \wv{X}(\Phi)\T, \varphi \rangle_{L^2(\M_I)} & = \int_{A^I_\T} \langle [(\Phi \circ \gamma)(\T \circ \gamma)]' - (\Phi \circ \gamma)'(\T \circ \gamma), \varphi \circ \gamma \rangle_{L^2(I)} d\sigma(\gamma)
\\
& = \int_{A^I_\T} \langle (\Phi \circ \gamma)(\T \circ \gamma)', \varphi \circ \gamma \rangle_{L^2(I)} d\sigma(\gamma)
\\
& = \int_{A^I_\T} \langle \Phi \circ \gamma, \varphi \circ \gamma \rangle_{L^2(I)} d\sigma(\gamma) = \langle \Phi, \varphi \rangle_{L^2(\M_I)}.
\end{align*}

\textbf{Continuity equation:} Fix $\Phi \in C^\infty_{\cc}(\M_I^\circ)$ and, with the help of Lemma \ref{useful}, take $[a,b] \subset I^\circ$ such that $(a,b) \supset \T(\supp \Phi) \supset \supp \Phi \circ \gamma$ for any $\gamma \in A^I_\T$. By (\ref{support_inclusion}), one can write that
\begin{align*}
\int_{\M_I} \wv{X} \Phi \, d\eta & = \int_{\M_I} \wv{X}\Phi \, \chi_{\supp \Phi} \, d\eta = \langle \wv{X} \Phi, \chi_{\supp \Phi} \rangle_{L^2(\M_I)}
\\
& = \int_{A^I_\T} \int_I (\Phi \circ \gamma)'(t) \, \chi_{\supp \Phi}(\gamma(t)) \, dt d\sigma(\gamma)
\\
& = \int_{A^I_\T} \int_I (\Phi \circ \gamma)'(t) \, \chi_{\supp \Phi \circ \gamma}(t) \, dt d\sigma(\gamma)
\\
& = \int_{A^I_\T} \int_a^b (\Phi \circ \gamma)'(t) \, dt d\sigma(\gamma) = \int_{A^I_\T} [\Phi(\gamma(b)) - \Phi(\gamma(a))] \, d\sigma(\gamma) = 0,
\end{align*}
where $\chi_\K$ denotes the characteristic function of the set $\K$.

\smallskip

\textbf{Causality $\eta$-a.e.:} Take any $\Phi \in C^\infty_{\cc}(\M_I^\circ)$, $\Phi \geq 0$ and any $\varphi \in L^2(\M_I, \eta)$ such that $\varphi \geq 0$ $\eta$-a.e. Let also $f \in C^\infty(\M_I^\circ)$ be a bounded causal function, which in particular means that $(f \circ \gamma)'(t) \geq 0$ for any $\gamma \in A^I_\T$ and $t \in I$ a.e. One can write that
\begin{align*}
\langle \wv{X}(\Phi f) - \wv{X}(\Phi)f, \varphi \rangle_{L^2(\M_I)} & = \int_{A^I_\T} \langle [(\Phi \circ \gamma)(f \circ \gamma)]' - (\Phi \circ \gamma)'(f \circ \gamma), \varphi \circ \gamma \rangle_{L^2(I)} \, d\sigma(\gamma)
\\
& = \int_{A^I_\T} \langle (\Phi \circ \gamma)(f \circ \gamma)', \varphi \circ \gamma \rangle_{L^2(I)} \, d\sigma(\gamma) 
\\
& = \int_{A^I_\T} \int_I \Phi(\gamma(t)) \varphi(\gamma(t)) (f \circ \gamma)'(t) \, dt d\sigma(\gamma) \geq 0
\end{align*}
by the positivity of $\Phi$, the positivity $\eta$-a.e. of $\varphi$, and the positivity a.e. of $(f \circ \gamma)'$.

We have thus shown that $\wv{X}$ defines an $L^2_\loc(\eta)$-vector field with all the properties that, on the strength of Remark \ref{Linfrem}, actually certify it is even $L^\infty_\loc(\eta)$-regular. All that now remains is to prove that the above properties alone (i.e. without resorting to $\sigma$ and formula (\ref{def_X0})) imply that the map $\Lambda_\mu(\Phi)$ belongs to $W^{1,\infty}(I)$ and that $\Lambda_\mu(\Phi)' = \Lambda_\mu(\wv{X}\Phi)$ for any $\Phi \in C^\infty_{\cc}(\M_I^\circ)$.

Both maps clearly belong to $L^\infty(I)$. In fact, $\Lambda_\mu$ can be regarded as a linear operator from $L^\infty(\M_I, \eta)$ to $L^\infty(I)$ defined via $\Lambda_\mu(F)(t) \vc \int_{\M_I} F d\mu_t$, which has norm one, because 
\begin{align*}
\|\Lambda_\mu(F)\|_{L^\infty(I)} = \esssup\limits_{t \in I} \left|\int_{\M_I} F d\mu_t \right| \leq \|F\|_{L^\infty(\M_I)} \esssup\limits_{t \in I} \underbrace{\left| \int_{\M_I} d\mu_t \right|}_{=\,1}.
\end{align*}

To see that $\Lambda_\mu(\wv{X}\Phi)$ is the weak derivative of $\Lambda_\mu(\Phi)$, take any test function $\phi \in C^\infty_\cc(I^\circ)$. Let us again fix $[a,b] \subset I^\circ$ such that $(a,b) \supset \T(\supp \Phi)$ and let $(\phi_n)$ be a sequence of polynomials\footnote{For example, one can use the Bernstein polynomials \cite{Bernstein}.} $C^1$-uniformly approximating $\phi$ on $[a,b]$. For our purposes it suffices to show that, for any $n$,
\begin{align}
\label{cont_eq2}
\int_a^b \Lambda_\mu(\Phi) \phi_n' dt = - \int_a^b \Lambda_\mu(\wv{X}\Phi) \phi_n dt.
\end{align}

To prove (\ref{cont_eq2}), notice first that the condition $\wv{X}(\Phi \T) - \wv{X}(\Phi)\T = \Phi$ immediately extends to $\wv{X}(\Phi (P \circ \T)) - \wv{X}(\Phi)(P \circ \T) = \Phi (P' \circ \T)$ for any polynomial $P$, and so one can write that
\begin{align*}
& \int_a^b \Lambda_\mu(\Phi)(t) \phi_n'(t) dt = \int_a^b \left( \int_{\M_I} \Phi d\mu_t \right) \phi_n'(t) dt = \int_a^b \int_{\M_I} \Phi (\phi_n' \circ \T) \, d\mu_t dt
\\
& = \underbrace{\int_a^b \int_{\M_I} \wv{X}(\Phi (\phi_n' \circ \T)) d\mu_t dt}_{= \, 0} - \int_a^b \int_{\M_I} \wv{X}\Phi \, (\phi_n \circ \T) d\mu_t dt
\\
& = - \int_a^b \left( \int_{\M_I} \wv{X}\Phi d\mu_t \right) \phi_n(t) dt = - \int_a^b \Lambda_\mu(\wv{X}\Phi)(t) \phi_n(t) dt,
\end{align*}
where $\int_a^b \int_{\M_I} \wv{X}(\Phi (\phi_n' \circ \T)) d\mu_t dt = \int_{\M_I} \wv{X}(\Phi (\phi_n' \circ \T)) d\eta = 0$ by the continuity equation. The integral running over $[a,b]$ can be extended onto entire $I$, because on $I \setminus [a,b]$ the integrand vanishes anyway, since
\begin{align*}
\supp \wv{X}(\Phi (\phi_n' \circ \T)) \subset \supp \Phi (\phi_n' \circ \T) \subset \supp \Phi \subset \T^{-1}((a,b)),
\end{align*}
where we have used (\ref{support_inclusion}). This concludes the proof of (iii).
\\

\emph{(iii) $\Rightarrow$ (i)} Fix $a,b \in I$ such that $a < b$. According to the characterization of causal relation between measures given by Proposition \ref{causalKchar}, proving that $\mu_a \preceq \mu_b$ amounts to showing that
\begin{align}
\label{3to1a}
\forall \K \subset \M, \K \textrm{ -- compact} \quad \mu_a(J^+(\K)) \leq \mu_b(J^+(\K)).
\end{align}
However, we claim it suffices to prove an analogous condition involving \emph{chronological} futures
\begin{align}
\label{3to1b}
\forall \K \subset \M, \K \textrm{ -- compact} \quad \mu_a(I^+(\K)) \leq \mu_b(I^+(\K)).
\end{align}
Indeed, fix a compact set $\K \subset \M$ and let $\overline{N_\varepsilon(\K)} \vc \{ p \in \M \, | \, d_h(p,\K) \leq \varepsilon \}$ be a closed $\varepsilon$-neighborhood of $\K$ with respect to some auxiliary complete Riemannian metric $h$. The set $\overline{N_\varepsilon(\K)}$ is itself compact (being closed and bounded, which by the Hopf--Rinov theorem is equivalent to being compact), and so in order to obtain (\ref{3to1a}) from (\ref{3to1b}) it is enough to prove that
\begin{align*}
J^+(\K) = \bigcap_{n=1}^\infty I^+(\overline{N_{1/n}(\K)}).
\end{align*}

To see ``$\subset$'', assume that $q \in J^+(\K)$, which means there exists $p \in \K$ such that $p \preceq q$. Notice now that for any $n \in \sN$ one can find $r \ll p$ such that $d_h(r,p) < 1/n$, which by (\ref{KPcausality}) implies $r \ll q$ and so $q \in I^+(\overline{N_{1/n}(\K)})$.

To see ``$\supset$'', assume that $q \in \bigcap_{n=1}^\infty I^+(\overline{N_{1/n}(\K)})$, which means that for any $n \in \sN$ there exists $p_n \in \overline{N_{1/n}(\K)}$ such that $p_n \ll q$ and so also $p_n \preceq q$. The sequence $(p_n)$ is contained in the compact set $\overline{N_1(\K)}$, and so it has a subsequence convergent to some $p \in \bigcap_{n=1}^\infty \overline{N_{1/n}(\K)} = \K$. By the topological closedness of $J^+$, we obtain $p \preceq q$ and so $q \in J^+(\K)$.

We now move to proving (\ref{3to1b}). To this end, fix the compact set $\K \subset \M$ along with $c > b$ and define $\cS \vc \partial I^+(\M_c \cup \K)$. We claim that $\cS$ is a Cauchy hypersurface (possibly nonsmooth and nonspacelike) satisfying, additionally,
\begin{align}
\label{3to1c}
I^+(\cS) = I^+(\M_c \cup \K),
\end{align}
which in turn means that 
\begin{align}
\label{3to1d}
\forall \, t \in [a,b] \qquad \M_t \cap I^+(\K) = \M_t \cap I^+(\cS).
\end{align}
Indeed, since $I^+(\M_c \cup \K)$ is a future set, its boundary $\cS$ is achronal \cite[Chapter 14, Corollary 27]{BN83}. Let $\gamma: \sR \rightarrow \M$ be any inextendible timelike curve parametrized such that $\T \circ \gamma = \textnormal{id}_\sR$. Notice that $\gamma(t) \in I^+(\M_c \cup \K)$ for $t > c$, whereas $\gamma(t) \not\in I^+(\M_c \cup \K)$ for $t < \min \T(\K) \cup \{c\}$. Therefore, the curve $\gamma$ must cross the boundary $\cS = \partial I^+(\M_c \cup \K)$ and it does so exactly once by the latter's achronality. This proves that $\cS$ is a Cauchy hypersurface.

In order to obtain (\ref{3to1c}), we prove the following lemma.

\begin{lem}
\label{lem_3to1}
Let $\Y \subset \M$ be such that $\Y \subset I^+(\X)$ for some achronal set $\X \subset \M$. Then $I^+(\partial \Y) = I^+(\Y)$. 
\end{lem}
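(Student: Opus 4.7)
The lemma asserts two inclusions which I plan to treat separately. The inclusion $I^+(\partial \Y) \subseteq I^+(\Y)$ is the soft one and does not require the achronality hypothesis: given $q \in I^+(\partial \Y)$ with witness $p \in \partial \Y$ and $p \ll q$, the openness of $I^-(q)$ makes it a neighborhood of $p$, and since $p \in \overline{\Y}$ every such neighborhood meets $\Y$, producing $y \in \Y$ with $y \ll q$, so $q \in I^+(\Y)$.

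For the reverse inclusion $I^+(\Y) \subseteq I^+(\partial \Y)$ --- where the achronality of $\X$ is essential --- my strategy is to reduce to the pointwise statement that every $y \in \Y$ admits $p \in \partial \Y$ with $p \preceq y$, and in fact $p \ll y$ unless $y$ itself lies on $\partial \Y$. Chaining with $y \ll q$ then gives $p \ll q$, as required. To establish this pointwise claim I split on whether $y \in \partial \Y$ (trivial, take $p = y$) or $y \in \operatorname{int}(\Y)$. In the latter case, from $\Y \subseteq I^+(\X)$ I pick $x \in \X$ with $x \ll y$ and a past-directed timelike curve $\gamma:[0,T] \to \M$ from $y$ to $x$. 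The crucial input is that $x \notin \Y$: the defining property $\X \cap I^+(\X) = \emptyset$ of an achronal set, together with $\Y \subseteq I^+(\X)$, forces $\X \cap \Y = \emptyset$. Hence $\{t \in [0,T] : \gamma(t) \notin \Y\}$ is nonempty, and its infimum $s$ satisfies $0 < s \leq T$ (the strict lower bound because $y \in \operatorname{int}(\Y)$ guarantees a full open initial segment of $\gamma$ still inside $\Y$). A first-exit argument then places $\gamma(s)$ on $\partial \Y$: for $t < s$ one has $\gamma(t) \in \Y$, giving $\gamma(s) \in \overline{\Y}$, while the infimum property produces times arbitrarily close to $s$ from the right with $\gamma(t) \notin \Y$, giving $\gamma(s) \in \overline{\M \setminus \Y}$. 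Timelikeness of $\gamma$ combined with $s > 0$ delivers $\gamma(s) \ll y$, closing the chain.

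The main obstacle I anticipate is not conceptual but rather careful bookkeeping in the first-exit step: one must simultaneously verify $\gamma(s) \in \overline{\Y}$ and $\gamma(s) \in \overline{\M \setminus \Y}$, and separately confirm $s > 0$ from the openness of $\operatorname{int}(\Y)$ at $y$. The achronality input itself enters as a single clean fact $\X \cap I^+(\X) = \emptyset$, and once $x \notin \Y$ is secured the chaining of chronological relations is entirely standard.
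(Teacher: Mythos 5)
Your proposal is correct and follows essentially the same route as the paper: the easy inclusion uses openness of $I^-(q)$ together with $p \in \overline{\Y}$, and the reverse inclusion uses achronality to get $x \notin \Y$ and then the fact that the timelike curve from $x$ to the point of $\Y$ must cross $\partial\Y$. Your first-exit-time bookkeeping and the separate case $y \in \partial\Y$ simply make explicit what the paper's "must cross $\partial\Y$" step leaves implicit.
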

\begin{proof}
Assume $q \in I^+(\partial \Y)$, i.e. there exists $p \in \partial \Y$ such that $p \ll q$. Since $I^-(q)$ is a neighborhood of $p$, one can find $r \in \Y \cap I^-(q)$. But this means that $q \in I^+(\Y)$.

Conversely, assume $q \in I^+(\Y)$, i.e. there exists $p \in \Y$ such that $p \ll q$. By assumption $\exists x \in \X \quad x \ll p$. Of course $x \not\in \Y$ by the achronality of $\X$, but this means that the timelike curve joining $x$ with $p$ must cross $\partial \Y$ at some point and so $p \in I^+(\partial \Y)$. This, in turn, implies that $q \in I^+(\partial \Y)$ as well.
\end{proof}

Equality (\ref{3to1c}) follows from taking $\Y \vc I^+(\M_c \cup \K)$ in the above lemma. Notice that for the achronal set $\X$ one might take any Cauchy hypersurface $\M_t$ with $t < \min \T(\K) \cup \{c\}$.

We now invoke one of the results of Bernal \& S\'anchez \cite[Theorem 5.15]{BS06}, by which there exists a smooth surjective causal\footnote{Bernal \& S\'anchez do not call their function ``causal'', but it directly follows from the list of its properties.} function $f: \M \rightarrow \sR$ such that $\cS = f^{-1}(0)$. Let $(\varphi_n) \subset C^\infty(\sR)$ be a sequence of nondecreasing functions such that, for any $n \in \sN$, $0 \leq \varphi_n \leq 1$, $\varphi_n|_{(-\infty,0]} = 0$, and $\varphi_n \rightarrow \chi_{(0,+\infty)}$ pointwise. Concretely, one may take
\begin{align*}
\varphi_n(x) \vc \left\{ \begin{array}{ll} \exp(-1/nx) & \textnormal{for } x > 0 \\ 0 & \textnormal{for } x \leq 0 \end{array} \right.
\end{align*}
Observe that $(\varphi_n \circ f)$ is a sequence of smooth bounded causal functions vanishing on $J^-(\cS)$, positive on $I^+(\cS)$ and such that $\varphi_n \circ f \rightarrow \chi_{I^+(\cS)}$ pointwise.

Finally, as the last ingredients we need, let $\{\K_m\}$ be an exhaustion of $\M$ by compact sets (with $\K_m \subset \K_{m+1}^\circ$) and let $(\Omega_m)$ be a sequence of smooth compactly supported functions such that $0 \leq \Omega_m \leq 1$, $\supp \Omega_m \subset \K^\circ_{m+1}$, and $\Omega_m|_{\K_m} = 1$ for every $m \in \sN$. Needless to say, $\Omega_m \rightarrow 1$ pointwise.

On the strength of (iii), we can now write that
\begin{align*}
& \int_{\M_I} (\varphi_n \circ f) d\mu_b - \int_{\M_I} (\varphi_n \circ f) d\mu_a = \lim\limits_{m \rightarrow \infty} \left( \int_{\M_I} \Omega_m (\varphi_n \circ f) d\mu_b - \int_{\M_I} \Omega_m (\varphi_n \circ f) d\mu_a \right)
\\
& = \lim\limits_{m \rightarrow \infty} \int_a^b \left( \int_{\M_I} \Omega_m (\varphi_n \circ f) d\mu_t \right)' dt = \lim\limits_{m \rightarrow \infty} \int_a^b \int_{\M_I} \wv{X}(\Omega_m (\varphi_n \circ f)) d\mu_t dt 
\\
& \geq \lim\limits_{m \rightarrow \infty} \int_a^b \int_{\M_I} \wv{X}(\Omega_m) (\varphi_n \circ f) d\mu_t dt = \lim\limits_{m \rightarrow \infty} \int_a^b \int_{\M_t} \wv{X}(\Omega_m) (\varphi_n \circ f) d\mu_t dt,
\end{align*}
where the limit can be taken out of the integral on the strength of the monotone convergence theorem, the inequality is the direct application of ``causality $\eta$-a.e.'' condition in (iii) and the last equality simply follows from $\supp \mu_t \subset \M_t$.

We now make the crucial claim that the integral on the right-hand side of the above inequality becomes zero for sufficiently large $m$ and hence the entire limit is zero. The argument is as follows.

Firstly, observe that because $\varphi_n \circ f$ is nonvanishing only on $I^+(\cS)$, the inner integral runs over $\M_t \cap I^+(\cS)$ for any $t \in [a,b]$, which by (\ref{3to1d}) is the same as $\M_t \cap I^+(\K)$. This, in turn, means that the double integral runs over $I^+(\K) \cap \M_{[a,b]} \subset J^+(\K) \cap J^+(\M_a) \cap J^-(\M_b)$, the latter set being compact (cf. \cite[Proposition 13 (iii)]{Miller17a}). Hence, there exists $m_0$ such that $I^+(\K) \cap \M_{[a,b]} \subset \K_m^\circ$ for all $m \geq m_0$. For every such $m$ the function $\Omega_m$ is identically equal to $1$ on $\K_m$, and hence, in particular, on the entire area of integration. But this already implies that $\wv{X}(\Omega_m)|_{\K_m^\circ} = 0$ $\eta$-a.e. Indeed, if $X$ is the extension of $\wv{X}$ as given by Theorem \ref{thm_VF}, then by locality (Corollary \ref{cor_VF}) we can write (cf. also (\ref{thm_VF5}))
\begin{align*}
\wv{X}(\Omega_m)|_{\K_m^\circ} = X(\Omega_m)|_{\K_m^\circ} = X(1)|_{\K_m^\circ} = 0.
\end{align*}


All in all, we have thus obtained that for any $n \in \sN$
\begin{align*}
\int_{\M_I} (\varphi_n \circ f) d\mu_a \leq \int_{\M_I} (\varphi_n \circ f) d\mu_b.
\end{align*}

Passing now to the limit $n \rightarrow +\infty$, by the dominated convergence theorem we get
\begin{align*}
\int_{\M_I} \chi_{I^+(\cS)} d\mu_a \leq \int_{\M_I} \chi_{I^+(\cS)} d\mu_b
\end{align*}
or, equivalently, $\mu_a(I^+(\cS)) \leq \mu_b(I^+(\cS))$. But this means that
\begin{align*}
\mu_a(I^+(\K)) & = \mu_a(\M_a \cap I^+(\K)) = \mu_a(\M_a \cap I^+(\cS)) = \mu_a(I^+(\cS))
\\
& \leq \mu_b(I^+(\cS)) = \mu_b(\M_b \cap I^+(\cS)) = \mu_b(\M_b \cap I^+(\K)) = \mu_b(I^+(\K)),
\end{align*}
where we used (\ref{3to1d}). This finishes the proof of (\ref{3to1b}) and thus the entire proof of (i).
\end{proof}

\section{Transformation laws and invariants}
\label{sec::discussion}

Theorem \ref{premain} (reexpressed as Theorem \ref{main}), explicitly involves the chosen Cauchy temporal function $\T$ and its associated GBS-splitting. In other words, it pertains to a concrete `global observer', who describes a causal evolution of some spatially spread physical quantity $Q$ (modeled by a time-dependent probability measure), employing his or her notion of time and space. Were we to choose a different Cauchy temporal function, we would have a different trio of equivalent descriptions (i--iii). One might naturally expect, however, that there is way to transform between various observers and their descriptions, and that there exist splitting-independent geometrical objects behind those descriptions. Here we find two such objects: the probability measure on the space of unparametrized causal curves (worldlines), and the `probability 4-current'.

To simplify, we restrict ourselves to the case where $I = \sR$. Suppose that the observers $\bA$ and $\bB$, who employ Cauchy temporal functions $\T_\bA$, $\T_\bB$, respectively, model the causal evolution of the quantity $Q$ using the maps $t \mapsto \mu_t$ and $\tau \mapsto \nu_\tau$, which satisfy $\supp \mu_t \subset \T_\bA^{-1}(t)$ and $\supp \nu_\tau \subset \T_\B^{-1}(\tau)$ for all $t,\tau \in \sR$. Denote $\eta_\bA \vc \int_\sR \mu_t dt$ and $\eta_\bB \vc \int_\sR \nu_\tau d\tau$.

Consider the spaces $A_{\T_\bA}^\sR$ and $A_{\T_\bB}^\sR$ of causal curves as parametrized by observer $\bA$ and $\bB$, respectively. The simple reparametrization map allows to pass between these spaces in a continuous manner.
\begin{prop}
\label{reparametrization}
The map $\widetilde{\ }: A^\sR_{\T_\bA} \rightarrow A^\sR_{\T_\bB}$ defined via $\widetilde{\gamma} \vc \gamma \circ (\T_\bB \circ \gamma)^{-1}$ is a well-defined reparametrization of causal curves and a homeomorphism.
\end{prop}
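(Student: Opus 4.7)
The plan is as follows. First I would verify that $\widetilde{\ }$ is well defined by checking that $\T_\bB \circ \gamma : \sR \to \sR$ is a homeomorphism whenever $\gamma \in A^\sR_{\T_\bA}$. Continuity is automatic. Strict monotonicity follows from the fact that $\T_\bB$, being temporal, strictly increases along any continuous future-directed causal curve joining two distinct points: since $\T_\bA \circ \gamma = \id_\sR$ forces $\gamma$ to be injective, for $s<t$ we have $\gamma(s) \preceq \gamma(t)$ with $\gamma(s) \neq \gamma(t)$, yielding $\T_\bB(\gamma(s)) < \T_\bB(\gamma(t))$. Surjectivity of $\T_\bB \circ \gamma$ onto $\sR$ follows from the observation that $\gamma$ is inextendible (its parameter ranges over all of $\sR$ and $\T_\bA \circ \gamma = \id_\sR$ prevents it from accumulating in any compact subset as $|t|\to\infty$), hence it must meet every Cauchy hypersurface $\T_\bB^{-1}(\tau)$.

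Having this, $\widetilde{\gamma} \vc \gamma \circ (\T_\bB \circ \gamma)^{-1}$ is a continuous curve satisfying $\T_\bB \circ \widetilde{\gamma} = \id_\sR$, and monotonicity of the reparametrization preserves condition (\ref{AIT_def1}); hence $\widetilde{\gamma} \in A^\sR_{\T_\bB}$. For bijectivity, the same construction with the roles of $\T_\bA$ and $\T_\bB$ swapped furnishes the inverse map, so it suffices to check continuity of $\widetilde{\ }$; continuity of the inverse will then come for free by the symmetry of the setup.

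For continuity, since both $A^\sR_{\T_\bA}$ and $A^\sR_{\T_\bB}$ are metrizable (Proposition \ref{AIT_Polish}, Lemma \ref{lem_metrizable}), it suffices to argue sequentially. Suppose $\gamma_n \to \gamma$ in $A^\sR_{\T_\bA}$, i.e. $h$-uniformly on compacts. Then $\tau_n \vc \T_\bB \circ \gamma_n \to \tau \vc \T_\bB \circ \gamma$ uniformly on compacts by continuity of $\T_\bB$. The crux is to deduce that the sequence of inverses $\tau_n^{-1}$ also converges to $\tau^{-1}$ uniformly on compacts — a standard fact about continuous strictly increasing bijections of $\sR$, proved by writing the desired inequality $|\tau_n^{-1}(s) - \tau^{-1}(s)| < \varepsilon$ as the pair of inequalities $\tau_n(\tau^{-1}(s) - \varepsilon) < s < \tau_n(\tau^{-1}(s) + \varepsilon)$ and controlling both uniformly in $s$ on a compact set via the strict monotonicity of $\tau$ together with the uniform convergence $\tau_n \to \tau$.

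Finally, for $s$ in any compact $K \subset \sR$ I would split
\[
d_h(\widetilde{\gamma}_n(s), \widetilde{\gamma}(s)) \leq d_h(\gamma_n(\tau_n^{-1}(s)), \gamma(\tau_n^{-1}(s))) + d_h(\gamma(\tau_n^{-1}(s)), \gamma(\tau^{-1}(s))).
\]
The first term tends to $0$ uniformly in $s \in K$ because $\gamma_n \to \gamma$ uniformly on some fixed compact neighborhood of $\tau^{-1}(K)$ that contains all $\tau_n^{-1}(K)$ for large $n$; the second term vanishes uniformly by the uniform continuity of $\gamma$ on that same compact set together with $\tau_n^{-1} \to \tau^{-1}$ uniformly on $K$. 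I expect the main — though still mild — obstacle to be packaging the two separate convergences (of the curves and of their time-reparametrizations) consistently on the same compact set, so that the triangle-inequality estimate above goes through cleanly.
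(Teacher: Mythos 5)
Your proof is correct, but it is worth noting that the paper does not actually prove this proposition in situ --- its ``proof'' is a citation of \cite[Proposition 9]{Miller17a} --- so what you have produced is a self-contained argument rather than a variant of one given here, and all of its ingredients are available from the paper's own toolkit. Strict monotonicity of $\T_\bB \circ \gamma$ follows exactly as you say, since $\T_\bA \circ \gamma = \id_\sR$ makes $\gamma$ injective, so $\gamma(s) \prec \gamma(t)$ for $s<t$, and a temporal function strictly increases along nonconstant causal curves; surjectivity uses that $\gamma$ is inextendible (again because $\T_\bA\circ\gamma=\id_\sR$ is unbounded) together with the fact, recorded in Appendix B, that every inextendible causal curve meets every Cauchy hypersurface, here the level sets of the Cauchy temporal function $\T_\bB$; and the identity $\T_\bA\circ\widetilde{\gamma} = (\T_\bB\circ\gamma)^{-1}$ shows that the map with $\bA$ and $\bB$ swapped is literally the inverse, so continuity of the inverse does come for free. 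Two steps deserve one extra line in a write-up: the uniform convergence $\T_\bB\circ\gamma_n \to \T_\bB\circ\gamma$ on a compact $[a,b]$ requires observing that, by completeness of $h$, the images $\gamma_n([a,b])$ eventually lie in a fixed compact neighborhood of $\gamma([a,b])$ on which $\T_\bB$ is uniformly continuous; and the uniform convergence of the inverses $\tau_n^{-1}\to\tau^{-1}$ on a compact $K$ needs the compactness of $K$ to extract a uniform positive gap $\min_{s\in K}\bigl(s-\tau(\tau^{-1}(s)-\varepsilon)\bigr)>0$ together with its mirror image, exactly along the lines you sketch. An alternative route that stays closer to the machinery developed in this paper would replace the monotone-inverse lemma by compactness: for $\gamma_n\to\gamma$ the curves $\widetilde{\gamma}_n$ eventually lie in a compact subset of $A^\sR_{\T_\bB}$ (via the properness of the evaluation maps, Proposition \ref{prop_ev_contB}), so one only needs to identify the limit of an arbitrary convergent subsequence, which follows from pointwise convergence; your direct estimate is more elementary, the compactness argument is shorter once Section \ref{sec::Polish_spaces} is in place. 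Either way, your argument goes through.
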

\begin{proof}
Cf. \cite[Proposition 9]{Miller17a}.
\end{proof}

Consequently, when lifted onto the level of probability measures, the reparametrization map $\widetilde{\ }$ translates \textbf{A}'s description (ii) into \textbf{B}'s. Concretely, if $\sigma \in \Pf(A^\sR_{\T_\bA})$ satisfies $(\ev_t)_\sharp\sigma = \mu_t$, then the pushforward measure $\widetilde{\sigma} \vc \widetilde{\;}_\sharp \sigma \in \Pf(A^\sR_{\T_\bB})$ should satisfy $(\ev_\tau)_\sharp\widetilde{\sigma} = \nu_\tau$, yielding a (rather implicit) relationship between the evolutions $t \mapsto \mu_t$ and $\tau \mapsto \nu_\tau$.

However, Proposition \ref{reparametrization} has another important consequence. Let $\Cf_\textrm{inext}$ be the set of all inextendible, unparametrized causal curves in $\M$. The map $\im: A_{\T_\bA}^\sR \rightarrow \Cf_\textrm{inext}$, which assigns to every curve $\gamma$ its image $\im(\gamma) \vc \gamma(\sR)$ is a well-defined bijection \cite[Proposition 8]{Miller17a}. If we endow $\Cf_\textrm{inext}$ with the (locally compact Polish space) topology transported from $A_{\T_\bA}^\sR$ by the map $\im$, then Proposition \ref{reparametrization} guarantees that this topology in fact does not depend on the choice of the Cauchy temporal function. In this way, the pushforward measure $\upsilon \vc \im_\sharp \sigma \in \Pf(\Cf_\textrm{inext})$ is the desired splitting-independent object behind description (ii).

We now move our attention to description (iii). Observe, crucially, that the right-hand side of formula (\ref{def_X0}) --- when evaluated for smooth compactly supported test function $\varphi$ --- is observer-independent.
\begin{prop}
For any $\sigma \in \Pf(A^\sR_{\T_\bA})$ and $\Phi, \varphi \in C^\infty_{\cc}(\M)$
\begin{align}
\label{obsind_X0}
\int_{A^\sR_{\T_\bA}} \int_\sR (\Phi \circ \gamma)'(t) (\varphi \circ \gamma)(t) dt \, d\sigma(\gamma)
 = \int_{A^\sR_{\T_\bB}} \int_\sR (\Phi \circ \widetilde{\gamma})'(\tau) (\varphi \circ \widetilde{\gamma})(\tau) d\tau \, d\widetilde{\sigma}(\widetilde{\gamma}).
\end{align}
\end{prop}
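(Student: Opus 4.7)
The plan is to reduce (\ref{obsind_X0}) to a pointwise-in-$\gamma$ identity via the pushforward definition of $\widetilde{\sigma} \vc \widetilde{\;}_\sharp \sigma$, and then to establish that identity via a change of variables $\tau = \phi(t) \vc \T_\bB(\gamma(t))$. By definition of the pushforward, the right-hand side of (\ref{obsind_X0}) equals
\begin{align*}
\int_{A^\sR_{\T_\bA}} \left( \int_\sR (\Phi \circ \widetilde{\gamma})'(\tau) (\varphi \circ \widetilde{\gamma})(\tau) \, d\tau \right) d\sigma(\gamma),
\end{align*}
where $\widetilde{\gamma} = \gamma \circ \phi^{-1}$. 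It thus suffices to establish the equality of the inner integrals for every fixed $\gamma \in A^\sR_{\T_\bA}$.

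Fix such a $\gamma$. The map $\phi: \sR \to \sR$ is continuous, strictly increasing (because $\gamma$ is future-directed causal and $\T_\bB$ is temporal, cf. Lemma \ref{lem_CT}), surjective (since $\gamma$ is inextendible and hence must cross every Cauchy hypersurface $\T_\bB^{-1}(\tau)$), and locally Lipschitz by (\ref{locLip1}) applied with $\T_\bB$ in place of $\T$; the inverse $\phi^{-1}$ inherits these properties. The identity $\widetilde{\gamma} \circ \phi = \gamma$ then gives $\Phi \circ \widetilde{\gamma} \circ \phi = \Phi \circ \gamma$ and $\varphi \circ \widetilde{\gamma} \circ \phi = \varphi \circ \gamma$.

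The next step is to invoke two classical facts about absolutely continuous functions of one variable. First, the change-of-variables formula for a monotone absolutely continuous $\phi$: for any compactly supported $\psi \in L^1(\sR)$,
\begin{align*}
\int_\sR \psi(\tau) \, d\tau = \int_\sR \psi(\phi(t)) \phi'(t) \, dt.
\end{align*}
Applying this to $\psi \vc (\Phi \circ \widetilde{\gamma})' \cdot (\varphi \circ \widetilde{\gamma})$ --- which lies in $L^1(\sR)$ with compact support by Step 2 of the proof of \emph{(ii)$\Rightarrow$(iii)} in Theorem \ref{main} --- rewrites the inner integral as $\int_\sR (\Phi \circ \widetilde{\gamma})'(\phi(t)) \phi'(t) (\varphi \circ \gamma)(t) \, dt$. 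Second, the chain rule for compositions of locally Lipschitz functions yields $(\Phi \circ \widetilde{\gamma})'(\phi(t)) \phi'(t) = (\Phi \circ \widetilde{\gamma} \circ \phi)'(t) = (\Phi \circ \gamma)'(t)$ for a.e. $t$, which produces the left-hand side's inner integral as desired.

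The main technical delicacy lies in the validity of the chain rule at such low regularity: for two merely absolutely continuous maps the formula $(f \circ g)' = (f' \circ g) g'$ can fail on a set of positive measure. Here, however, both $\phi$ and $\Phi \circ \widetilde{\gamma}$ are locally Lipschitz --- the latter because $\Phi$ is smooth with compact support and $\widetilde{\gamma} \in A^\sR_{\T_\bB}$ is locally Lipschitz by (\ref{locLip1}) --- so Rademacher's theorem combined with the classical pointwise chain rule gives the identity a.e. Finiteness, measurability in $\gamma$, and the interchange of integrations all follow from the same estimates used in the \emph{(ii)$\Rightarrow$(iii)} portion of the proof of Theorem \ref{main}.
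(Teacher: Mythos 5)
Your proposal is correct and follows essentially the same route as the paper: first transfer the outer integral from $\widetilde{\sigma}$ to $\sigma$ via the pushforward identity, then perform the substitution $\tau = \T_\bB(\gamma(t))$ in the inner integral, where the Jacobian $(\T_\bB\circ\gamma)'(t)$ cancels against the factor produced by the chain rule. Your additional care about the a.e.\ validity of the chain rule for locally Lipschitz compositions (and the local Lipschitzness of $(\T_\bB\circ\gamma)^{-1} = \T_\bA\circ\widetilde{\gamma}$) only makes explicit what the paper leaves implicit.
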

\begin{proof}
By the proof of ``(ii) $\Rightarrow$ (iii)'' of Theorem \ref{main} we know that the map $A^\sR_{\T_\bB} \ni \widetilde{\gamma} \mapsto \langle (\Phi \circ \widetilde{\gamma})', \varphi \circ \widetilde{\gamma} \rangle_{L^2(\sR)}$ is $\widetilde{\sigma}$-integrable. Thus, we are allowed to perform the change of variables:
\begin{align*}
\int_{A^\sR_{\T_\bB}} \langle (\Phi \circ \widetilde{\gamma})', \varphi \circ \widetilde{\gamma} \rangle_{L^2(\sR)} d\widetilde{\sigma}(\widetilde{\gamma}) = \int_{A^\sR_{\T_\bA}} \langle (\Phi \circ \gamma \circ (\T_\bB \circ \gamma)^{-1})', \varphi \circ \gamma \circ (\T_\bB \circ \gamma)^{-1} \rangle_{L^2(\sR)} d\sigma(\gamma).
\end{align*}
Substituting now $t = (\T_\bB \circ \gamma)^{-1}(\tau)$ in the inner integral, we obtain, for any $\gamma \in A^\sR_{\T_\bA}$,
\begin{align*}
& \langle (\Phi \circ \gamma \circ (\T_\bB \circ \gamma)^{-1})', \varphi \circ \gamma \circ (\T_\bB \circ \gamma)^{-1} \rangle_{L^2(\sR)}
\\
& = \int_\sR [\Phi \circ \gamma \circ (\T_\bB \circ \gamma)^{-1}]'(\tau) \, [\varphi \circ \gamma \circ (\T_\bB \circ \gamma)^{-1}](\tau) \, d\tau \\
& = \int_\sR (\Phi \circ \gamma)'(t) \, \tfrac{1}{(\T_\bB \circ \gamma)'(t)} \, (\varphi \circ \gamma)(t) \, (\T_\bB \circ \gamma)'(t) dt
\\
& = \int_\sR (\Phi \circ \gamma)'(t) \, (\varphi \circ \gamma)(t) dt = \langle (\Phi \circ \gamma)', \varphi \circ \gamma \rangle_{L^2(\sR)},
\end{align*}
what concludes the proof.
\end{proof}

Let $\wv{X}_\bA$, $\wv{X}_\bB$ be the $L^2_\loc$-vector fields constructed by observers \textbf{A} and \textbf{B} from measures $\sigma$ and $\widetilde{\sigma}$, respectively. Formula (\ref{obsind_X0}) can be expressed simply as
\begin{align}
\label{obsind_X1}
\int_\M \wv{X}_\bA \Phi \, \varphi \, d\eta_\bA = \int_\M \wv{X}_\bB \Phi \, \varphi \, d\eta_\bB
\end{align}
for all $\Phi, \varphi \in C^\infty_{\cc}(\M)$. Using Theorem \ref{thm_VF} and the standard density argument, one can easily extend this equality to
\begin{align}
\label{obsind_X2}
\int_\M X_\bA \Psi \, \varphi \, d\eta_\bA = \int_\M X_\bB \Psi \, \varphi \, d\eta_\bB,
\end{align}
where this time $\Psi \in C^\infty(\M)$ and $\varphi \in C_{\cc}(\M)$. We thus arrive at another splitting-independent object --- this time behind description (iii). It is the map $\eta X$ assigning to each $\Psi \in C^\infty(\M)$ a linear functional on $C_{\cc}(\M)$ defined via any side of formula (\ref{obsind_X2}). This functional is in fact a \emph{real-valued Radon measure}, i.e. a difference of two (positive) Radon measures\footnote{One might also adopt the Bourbaki's approach \cite{Bourbaki} (cf. also \cite[Chapter XIII]{Dieudonne}) and \emph{define} the real-valued Radon measure as a continuous linear functional on $C_\cc(\M)$ endowed with a suitable locally convex topology.} $(X\Psi)^+\eta - (X\Psi)^-\eta$, with $(X\Psi)^\pm \in L^{\infty}_\loc(\M,\eta)$ denoting the positive/negative parts of $X\Psi$. By equation (\ref{conteq3}) and the subsequent discussion, the map $\eta X$ deserves to be called the \emph{Radon 4-current}, being a low-regular generalization of the smooth 4-current $\rho Y$. 

By plugging $\Psi \vc \T_\bB$ into formula (\ref{obsind_X2}), we obtain the transformation rules for the measures $\eta_{\bA,\bB}$ and, consequently, also for the vector fields $X_{\bA,\bB}$. Namely,
\begin{align}
\label{obsind_X3}
\eta_\bB = X_\bA \T_\bB \, \eta_\bA, \qquad \textrm{whereas} \qquad X_\bB = \tfrac{1}{X_\bA \T_\bB} X_\bA,
\end{align}
where Proposition \ref{causal_measure_map} guarantees that everything is well defined. The second formula is quite expected --- observer \textbf{B} simply rescales the vector field $X_\bA$ so as to have $X_\bB \T_\bB = 1$. As for the first formula, by disintegrating the $\eta$-measures with respect to $\T_\bB$, we can write another relationship between the evolutions $t \mapsto \mu_t$ and $\tau \mapsto \nu_\tau$, which reads
\begin{align}
\label{obsind_X4}
\nu_\tau = X_\bA \T_\bB \, \eta_\bA^\tau \qquad \textnormal{for almost all } \tau,
\end{align}
where $\{\eta_\bA^\tau\}_{\tau \in \sR}$ is the disintegration\footnote{See \cite[452E \& 452O]{Fremlin4} for the general definition of disintegration (not limited to probability measures) and the existence theorem needed in the present context.} of $\eta_\bA$ with respect to $\T_\bB$.

To finish this section, let us write down the transformation rules for the vector fields $Y_{\bA,\bB} \vc |\nabla\T_{\bA,\bB}| X_{\bA,\bB}$, which, as explained in Sec. \ref{sec::intro}, have the interpretation of 4-velocities of the probability flux as seen by observers $\bA,\bB$, respectively. By the second formula of (\ref{obsind_X3}), we obtain that
\begin{align*}
Y_\bB = \tfrac{|\nabla \T_\bB|}{Y_\bA \T_\bB} Y_\bA.
\end{align*}
If, additionally, any (and hence both) of the measures $\eta_{\bA,\bB}$ is absolutely continuous with respect to the volume measure $\vol_g$ with density $\rho_{\bA,\bB}$, respectively, then we also have that
\begin{align*}
\rho_\bB = \tfrac{Y_\bA \T_\bB}{|\nabla \T_\bB|} \rho_\bA
\end{align*}
and so $\rho_\bA Y_\bA = \rho_\bB Y_\bB$, as one should expect.

\section*{Appendix A: Notational remarks}
\label{sec::AppA}

\renewcommand{\theequation}{A.\arabic{equation}}
\setcounter{equation}{0}
\renewcommand{\thethm}{A.\arabic{thm}}
\setcounter{thm}{0}

Throughout the paper, for any subset $\X$ of a topological space $\M$ we denote its complement, closure, interior and boundary by $\X^c$, $\overline{\X}$, $\X^\circ$ and $\partial\X$, respectively. If $\Y$ is another subset of $\M$, we say that $\X$ is compactly embedded in $\Y$, what is written symbolically as $\X \Subset \Y$, if $\overline{\X}$ is compact and contained in $\Y^\circ$. Notice that for a bounded interval $I$, $(a,b) \Subset I$ amounts to $[a,b] \subset I^\circ$.

The notion of a \emph{support} is used in three closely related, but formally different meanings.

The most basic meaning concerns functions. If $f: \M \rightarrow \sR^N$ is a function on a topological space $(\M, \tau)$, then $\supp f \vc \overline{\{x \in \M \, | \, f(x) \neq 0\}}$. It is worthwhile to observe that the complement of the support $(\supp f)^c$ is the largest open set on which $f$ vanishes. In other words, $\supp f = (\bigcup_{\V \in \tau, \, f|_\V = 0} \V )^c$.

Defining the support through its complement has the advantage of allowing for a straightforward extension onto classes of functions equal $\nu$-a.e. for some fixed Borel measure $\nu$ as well as onto Borel measures themselves. Concretely, let $u$ be a class of $\sR^N$-valued functions on $\M$ equal $\nu$-a.e. One defines $\supp u = (\bigcup_{\V \in \tau, \, u|_\V = 0 \, \nu\textnormal{-a.e}} \V )^c$. In the same vein, the support of the Borel measure $\nu$ on $\M$ is defined via $\supp u = (\bigcup_{\V \in \tau,\, \nu(\V) = 0} \V )^c$.

Even though we employ the same symbol `$\supp$' in all three cases, it is always clear from the context which one of them applies.

The space of continuous maps between topological spaces $\M,\N$ is denoted by $C(\M,\N)$. If only the compactly supported maps are considered, the respective space is denoted by $C_\cc(\M,\N)$. If $\N = \sR$, the spaces of continuous, continuous bounded, continuous and compactly supported real-valued maps are denoted simply by $C(\M), C_\cb(\M), C_\cc(\M)$, respectively. If $\M$ is a manifold, then $C^\infty(\M,\sR^N)$ ($C^\infty_\cc(\M,\sR^N)$) denotes the space of smooth (compactly supported) $\sR^N$-valued maps. If $N=1$, again we simply write $C^\infty(\M)$ ($C^\infty_\cc(\M)$).

All measures appearing in the paper are \emph{Borel} measures, i.e. measures defined on the $\sigma$-algebra generated by the topology of a given topological space. In fact, they are even \emph{Radon} measures, i.e. regural Borel measures which are finite on all compact sets. Given a  Borel map $F: \M \rightarrow \N$ between topological spaces and a (Borel) measure $\nu$ on $\M$, we denote the pushforward measure $\nu \circ F^{-1}$ by $F_\sharp$.

\section*{Appendix B: Causality theory}
\label{sec::AppB}

\renewcommand{\theequation}{B.\arabic{equation}}
\setcounter{equation}{0}
\renewcommand{\thethm}{B.\arabic{thm}}
\setcounter{thm}{0}

In order to make the paper self-contained, we recollect some definitions and basic results from causality theory. For a full exposition, consult e.g. \cite{Beem,BN83,HawkingEllis,MS08,Penrose1972}.

Let $(\M,g)$ be a spacetime, i.e. a connected time-oriented smooth Lo\-rentz\-ian manifold\footnote{Recall that we employ the signature convention $(-++\ldots+)$.}. The Lorentzian metric $g$ induces on $\M$ binary relations $\ll$ and $\preceq$, called the \emph{chronological} and the \emph{causal precedence relations}, respectively. We say that $p$ chronologically (reps. causally) precedes $q$, or that $q$ is in the chronological (resp. causal) future of $p$, which is denoted $p \ll q$ (resp. $p \preceq q$), if there exists a piecewise smooth future-directed chronological (resp. causal) curve from $p$ to $q$. By the standard convention, we also assume $p \preceq p$. By $p \prec q$ we mean that $p \preceq q$, but $p \neq q$.

Both $\ll$ and $\preceq$ are transitive relations and, additionally,
\begin{align}
\label{KPcausality}
\forall p,q,r \in \M \qquad (p \ll q \preceq r) \vee (p \preceq q \ll r) \quad \Rightarrow \quad p \ll r.
\end{align}

With the help of $\preceq$, one extends the notion of a causal curve to the curves which are only $C^0$. Although the general definition of a continuous (future-directed) causal curve is somewhat convoluted (cf. \cite[Definition 3.15]{MS08}), it simplifies greatly for the so-called distinguishing spacetimes. Namely \cite[Proposition 3.19]{MS08}, the map $\gamma \in C(I,\M)$ is called future-directed causal if
\begin{align*}
\forall \, s,t \in I \qquad s < t \ \Rightarrow \ \gamma(s) \prec \gamma(t).
\end{align*}

Notice that the above condition, even though it seems stronger than the one used in Definition \ref{AIT_def}, is nevertheless equivalent to (\ref{AIT_def1}) since the elements of $A_\T^I$ are by definition injective maps.

The relations $\ll$ and $\preceq$ are formally subsets of $\M^2$, but for historical reasons one usually denotes these subsets as $I^+ \vc \{ (p,q) \in \M^2 \ | \ p \ll q \}$ (which is open) and $J^+ \vc \{ (p,q) \in \M^2 \ | \ p \preceq q \}$ (which is $\sigma$-compact \cite[Theorem 4]{AHP2017}). Moreover, one writes $I^\pm(p)$ (resp. $J^\pm(p)$) to denote the set of all events in the chronological (resp. causal) future/past of $p$. Finally, for any $\X \subset \M$ one introduces $I^\pm(\X) \vc \bigcup_{p \in \X} I^\pm(p)$ and similarly $J^\pm(\X) \vc \bigcup_{p \in \X} J^\pm(p)$.

A~causal curve $\gamma \in C((a,b),\M)$, where $-\infty \leq a < b \leq +\infty$ is called \emph{inextendible} if both the limits $\lim_{t \rightarrow a^+} \gamma(t)$ and $\lim_{t \rightarrow b^-} \gamma(t)$ do not exist. A~\emph{Cauchy hypersurface} is a~subset $\cS \subseteq \M$ met by every inextendible timelike curve exactly once. Any such $\cS$ is a~closed achronal topological hypersurface, met by every inextendible causal curve. Obviously, if $\cS$ is additionally spacelike, then the latter sentence can be appended with the phrase ``exactly once'' as well.

A~function $\T: \M \rightarrow \mathbb{R}$ is referred to as
\begin{itemize}
\item a~\emph{causal function} iff it is nondecreasing along every future-directed causal curve;
\item a~\emph{time function} iff it is continuous and strictly increasing along every future-directed causal curve;
\item a~\emph{temporal function} iff it is a~smooth function with a past-directed timelike gradient.
\end{itemize}

Each of the above definitions is stronger than the preceding one. Mind that even a smooth time function need not be temporal. Additionally, any time (temporal) function the level sets of which are Cauchy hypersurfaces is called a \emph{Cauchy} time (temporal) function.

Spacetimes can be classified according to their increasingly better causal features. Each of the levels of this ``causal hierarchy'' can be characterized in many equivalent ways. Here we recall only those notions which are needed in the current paper --- the more complete exposition can be found e.g. in \cite{MS08}. 

A spacetime $\M$ is called
\begin{itemize}
\item \emph{causal} if it does not admit closed causal curves. This is equivalent to $\preceq$ being an antisymmetric relation (and thus a partial order);
\item \emph{distinguishing} if $I^\pm(p) = I^\pm(q)$ implies $p = q$ for any $p,q \in \M$;
\item \emph{stably causal} if it admits time functions or, equivalently, if it admits temporal functions \cite{BS04};
\item \emph{causally simple} if it is causal and the~sets $J^\pm(p)$ are closed for every $p \in \M$;
\item \emph{globally hyperbolic} if it is causal and the~sets $J^+(p) \cap J^-(q)$ are compact for all $p,q \in \M$.
\end{itemize}

In stably causal spacetimes, one can use temporal as well as smooth bounded causal functions to characterize the future-directed causal vectors.
\begin{lem}
\label{lem_CT}
Let $(\M,g)$ be a stably causal spacetime, and let $v \in T_q\M$, $v \neq 0$ for some fixed $q \in \M$. The following conditions are equivalent:
\begin{enumerate}[i)]
\item $v$ is future-directed causal.
\item For any temporal function $\T \quad v(\T) > 0$.
\item For any smooth bounded causal function $f \quad v(f) \geq 0$.
\end{enumerate}
\end{lem}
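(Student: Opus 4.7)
The plan is to establish the cyclic chain $(i) \Rightarrow (ii) \Rightarrow (iii) \Rightarrow (i)$, with the last implication (and in particular its spacelike sub-case) carrying the main technical weight.

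For $(i) \Rightarrow (ii)$, if $\T$ is temporal then $\nabla\T_q$ is past-directed timelike, so $-\nabla\T_q$ is future-directed timelike; the Lorentzian reverse Cauchy--Schwarz inequality applied to the pair $-\nabla\T_q$ and $v$ (both future-directed, one timelike, the other causal and nonzero) yields $g(-\nabla\T_q, v) < 0$, hence $v(\T) = g(\nabla\T_q, v) > 0$. For $(ii) \Rightarrow (iii)$ I first note that any smooth causal function $f$ has past-directed causal gradient: for $p \in \M$ and future-causal $w \in T_p\M$, the geodesic $s \mapsto \exp_p(sw)$ is future-directed causal on some $[0, s_0)$, so $g(\nabla f_p, w) = (f \circ \exp_p(s\,\cdot))'(0) \geq 0$, and ranging over all future-causal $w$ together with the standard Lorentzian duality (a nonzero $u$ is past-causal iff $g(u,w) \geq 0$ for every future-causal $w$) gives $\nabla f_p$ past-causal. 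Given (ii), fix any temporal $\T$ (existing by stable causality) and $\epsilon > 0$: then $\nabla(f+\epsilon\T) = \nabla f + \epsilon\nabla\T$ is past-timelike (past-causal plus a positive scalar multiple of past-timelike), so $f+\epsilon\T$ is itself temporal, and (ii) yields $v(f+\epsilon\T) > 0$, i.e.\ $v(f) > -\epsilon v(\T)$. The $\epsilon \to 0^+$ limit gives $v(f) \geq 0$, proving (iii).

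For $(iii) \Rightarrow (i)$ I argue by contraposition: assuming $v$ is not future-directed causal, I construct a smooth bounded causal $f$ with $v(f) < 0$. If $v$ is past-directed causal (nonzero), let $\T$ be any temporal function and set $f := \arctan \T$; it is smooth and bounded, and its gradient $\arctan'(\T)\nabla\T$ is a positive multiple of $\nabla\T$, so $f$ is temporal (hence causal). Applying $(i) \Rightarrow (ii)$, already established, to the future-causal vector $-v$ gives $v(\T) < 0$, whence $v(f) = \arctan'(\T(q)) v(\T) < 0$. If $v$ is spacelike, the Lorentzian duality (the $v$-orthogonal hyperplane has signature $(-,+,\ldots,+)$ and bisects the past-timelike cone into two nonempty open pieces) furnishes a past-directed timelike $w \in T_q\M$ with $g(w,v) < 0$; realising $w$ as $\nabla\T'_q$ for some temporal $\T'$, I then take $f := \arctan \T'$ and conclude $v(f) = \arctan'(\T'(q)) g(w, v) < 0$.

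The principal obstacle is this realisation step: exhibiting, for an arbitrary past-directed timelike $w \in T_q \M$, a globally temporal function $\T'$ on $\M$ with $\nabla\T'_q = w$. My approach starts with any temporal $\T_0$ (gradient $w_0 := \nabla\T_{0,q}$, furnished by stable causality) and adds a smooth compactly supported perturbation $\chi$, built in normal coordinates around $q$, so that $\nabla\chi_q = w - w_0$; for $w$ close to $w_0$, compactness of $\supp\chi$ and openness of the past-timelike cone ensure that $\nabla(\T_0 + \chi)$ remains past-timelike everywhere. For general past-timelike $w$, the segment $w_t := (1-t)w_0 + t w$ is entirely past-timelike by convexity, and one iterates the perturbation argument along finitely many small steps, linking $w_0$ to $w$ by a chain of temporal functions with controlled gradient at $q$. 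This completes the spacelike sub-case and with it the entire proof.
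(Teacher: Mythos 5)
Your proof of (i) $\Rightarrow$ (ii), (ii) $\Rightarrow$ (iii) and the past-directed-causal half of (iii) $\Rightarrow$ (i) is correct and essentially identical to the paper's (your exponential-map argument that a smooth causal function has past-directed causal or zero gradient is a nice self-contained replacement for the citation the paper uses; just note the gradient may vanish, which is harmless since adding $\varepsilon\nabla\T$ still gives a past-timelike vector). The divergence, and the problem, is the spacelike sub-case. Your strategy hinges on the claim that every past-directed timelike $w \in T_q\M$ can be realized (at least up to positive scale) as $\nabla\T'_q$ for a \emph{globally} temporal function $\T'$, and your proposed proof of this is the compactly supported perturbation plus an iteration along the segment from $w_0$ to $w$. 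The perturbation step is fine as far as it goes: it shows that the set of gradients at $q$ of temporal functions is an \emph{open} convex cone inside the past cone. But the iteration does not close the argument. The admissible step size at stage $k$ is governed by the ``margin'' with which $\nabla\T_k$ is past-timelike on the (fixed) compact support of the bump, and each perturbation eats into exactly that margin; there is no uniform lower bound on the step sizes along the chain, so ``finitely many small steps'' is not justified and the total displacement you can guarantee this way is bounded. An open convex subcone of the past cone need not be the whole cone, so openness plus convexity does not finish the job, and the statement you need is genuinely global -- it cannot be obtained by local surgery on one temporal function without some additional input.

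The paper avoids this entirely by proving a much weaker statement that still suffices: for the spacelike vector $w := v - \varepsilon\nabla\widetilde{\T}_q$ (with $\varepsilon$ small) it only needs \emph{some} bounded temporal function $\T$ with $w(\T) = 0$, and then $v(\T) = \varepsilon\, g(\nabla\widetilde{\T}_q,\nabla\T_q) < 0$ contradicts (iii). That annihilating temporal function is produced by pulling back a suitable Minkowski time coordinate through a smooth conformal embedding $i: \M \hookrightarrow \mathbb{L}^N$ (a nontrivial global embedding theorem for stably causal spacetimes, which the paper cites): since $di_q(w)$ is spacelike in $\mathbb{L}^N$, one can choose the time coordinate $x^0$ with $di_q(w)(x^0)=0$, and $\arctan\circ\, x^0 \circ i$ is bounded and temporal. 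So your reduction of the spacelike case to a gradient-realization problem is a legitimate idea, but as it stands the realization step is a gap; to repair it you would either need such a global tool (conformal embedding, or an equivalent result guaranteeing a rich enough family of temporal functions), or switch to the paper's weaker ``orthogonality'' requirement, which is what the embedding delivers directly.
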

\begin{proof}
$(i) \Rightarrow (ii)$ The gradient $\nabla\T_q$ is by definition past-directed timelike, and so one has $v(\T) = g(v, \nabla\T_q) > 0$.
\\

$(ii) \Rightarrow (iii)$ Since $f$ is a smooth causal function, its gradient at $q$ is past-directed causal or zero \cite[Proposition 1.18]{MinguzziNEW} and hence $\nabla f_q + \varepsilon \nabla\T_q$ is past-directed timelike for any temporal function $\T$, any $\varepsilon > 0$ and any $q \in \M$. In other words, $f + \varepsilon \T$ is a temporal function for every $\varepsilon > 0$. On the strength of \emph{(ii)}, we get $v(f) + \varepsilon v(\T) = v(f + \varepsilon \T) > 0$, and so passing with $\varepsilon \rightarrow 0^+$ yields \emph{(iii)}.
\\

$(iii) \Rightarrow (i)$ Suppose $v$ is past-directed causal or spacelike. In the former case, for any temporal (and hence smooth and causal) function $\T$ we would get $v(\arctan \circ \T) < 0$ (where composition with $\arctan$ guarantees boundedness), contradicting $(iii)$. In the latter case, let $\widetilde{\T}$ be a fixed temporal function and let $\varepsilon > 0$ be small enough for $w \vc v - \varepsilon \nabla\widetilde{\T}_q$ be spacelike (recall that the set of nonzero spacelike vectors is open in the standard linear topology of $T_q\M$). We claim there exists a bounded temporal function $\T$ such that $w(\T) = 0$. Indeed, let $i$ be a smooth conformal embedding of $\M$ into an $N$-dimensional Minkowski spacetime $\mathbb{L}^N$ existing by \cite[Corollary 1.4]{LNembeddings}. Conformality implies that the transported vector $di_q(w)$ is spacelike, and so one can choose a time-coordinate $x^0$ in $\mathbb{L}^N$ such that $di_q(w)(x^0) = 0$. Conformality guarantees, moreover, that the function $x^0 \circ i$ is temporal, and so the bounded temporal function $\T \vc \arctan \circ \, x^0 \circ i$ satisfies
\begin{align*}
w(\T) = \tfrac{1}{1 + (x^0(i(q)))^2} \, w(x^0 \circ i) = \tfrac{1}{1 + (x^0(i(q)))^2} \, di_q(w)(x^0) = 0,
\end{align*}
which in turn yields
\begin{align*}
v(\T) =  (w + \varepsilon \nabla\widetilde{\T}_q)(\T) = \varepsilon g(\nabla\widetilde{\T}_q, \nabla\T_q) < 0,
\end{align*}
again contradicting $(iii)$.
\end{proof}

\begin{rem}
The above lemma remains true if we replace `bounded causal' in condition \emph{(iii)} with `time', `bounded time' or `causal'. Indeed, each of thus modified conditions implies the original one and is itself implied by \emph{(ii)}.
\end{rem}

\begin{prop}
\label{PropCS}
If $(\M,g)$ is a causally simple spacetime, then $J^+$ is a closed subset of $\M^2$, whereas $J^+(\K)$ and $J^-(\K)$ are closed subsets of $\M$ for every compact $\K \subset \M$.
\end{prop}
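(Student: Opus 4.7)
The two claims cascade: once $J^+$ is known to be closed in $\M^2$, the closedness of $J^\pm(\K)$ for compact $\K$ follows from a straightforward compactness argument. So the real work is in the first statement, where the plan is to combine the pushup property \eqref{KPcausality} with the closedness of $J^\pm(p)$ for every $p$ (which is built into the definition of causal simplicity).

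To show $J^+$ is closed, I would take a sequence $(p_n,q_n) \in J^+$ with $p_n \to p$ and $q_n \to q$, and aim to prove $p \preceq q$. First, fix any $r \in I^-(p)$ (such $r$ exists because every event lies in the chronological future of points along a past-directed timelike curve through it). Since $I^-(p)$ is open and $p_n \to p$, for $n$ sufficiently large we have $r \ll p_n$, and combining with $p_n \preceq q_n$ through \eqref{KPcausality} yields $r \ll q_n$, so $q_n \in I^+(r) \subset J^+(r)$. Passing to the limit and invoking the closedness of $J^+(r)$, we conclude $q \in J^+(r)$, i.e., $r \preceq q$. Hence $I^-(p) \subset J^-(q)$.

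Now I would pick a sequence $r_n \in I^-(p)$ with $r_n \to p$ (again obtainable from a past-directed timelike curve through $p$, using $p \in \overline{I^-(p)}$). Each $r_n$ lies in $J^-(q)$, and by the closedness of $J^-(q)$ the limit $p$ also lies in $J^-(q)$, i.e., $p \preceq q$. This closes the argument for $J^+ \subset \M^2$.

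For the second statement, given $q_n \in J^+(\K)$ with $q_n \to q$, choose $p_n \in \K$ with $p_n \preceq q_n$; by the compactness of $\K$ extract a subsequence $p_{n_k} \to p \in \K$. Then $(p_{n_k},q_{n_k}) \in J^+$ converges to $(p,q)$, and by the just-established closedness of $J^+$ we obtain $p \preceq q$, so $q \in J^+(p) \subset J^+(\K)$. The case of $J^-(\K)$ is dual. The only subtlety is the availability of the auxiliary points $r$ and $r_n$ in $I^-(p)$, but this is a general fact about Lorentzian manifolds independent of any causality hypothesis; apart from that, the proof is a clean application of the two inputs of causal simplicity.
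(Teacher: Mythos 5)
Your argument is correct. Note, however, that the paper does not prove this proposition at all: it simply cites \cite[Lemma 3.67]{MS08}, so what you have written is essentially a self-contained reconstruction of the standard argument behind that reference, namely: use the push-up property (\ref{KPcausality}) together with the closedness of $J^\pm(p)$ (the defining property of causal simplicity) to show $J^+$ is closed in $\M^2$, and then deduce closedness of $J^\pm(\K)$ for compact $\K$ by extracting a convergent subsequence from $\K$; this is exactly the right decomposition, and since $\M$ and $\M^2$ are metrizable the sequential formulation is legitimate. One small repair: when you claim $r \ll p_n$ for large $n$, the openness you should invoke is that of $I^+(r)$ (equivalently, openness of the relation $I^+$ in $\M^2$), not of $I^-(p)$ --- since $p \in I^+(r)$ and $I^+(r)$ is open, $p_n \to p$ gives $p_n \in I^+(r)$ eventually; the openness of $I^-(p)$ says something about points near $r$, not about points near $p$. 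Likewise the auxiliary points $r$ and $r_n \in I^-(p)$ with $r_n \to p$ exist in any spacetime (take points along a past-directed timelike curve through $p$), as you say, so no extra causality hypothesis is hidden there. With that wording fixed, your proof is a valid substitute for the citation.
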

\begin{proof}
See \cite[Lemma 3.67]{MS08}.
\end{proof}

In causally simple spacetimes the relation $\preceq$ between measures (Definition \ref{causal_rel_def}) allows for various equivalent definitions \cite{AHP2017}, out of which in the current paper we need the following one.
\begin{prop}
\label{causalKchar}
If $(\M,g)$ is a causally simple spacetime, then for any $\nu_1,\nu_2 \in \Pf(\M)$
\begin{align*}
\nu_1 \preceq \nu_2 \qquad \Longleftrightarrow \qquad \forall \K \subset \M, \ \K \textnormal{ -- compact} \quad \nu_1(J^+(\K)) \leq \nu_2(J^+(\K)).
\end{align*}
\end{prop}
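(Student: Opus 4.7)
The plan has two directions. The forward implication ``$\Rightarrow$'' is a direct manipulation of the coupling. Given $\omega \in \itPi_\preceq(\nu_1,\nu_2)$ and compact $\K \subset \M$, I would write $\nu_i(J^+(\K)) = \omega\big((\pi^i)^{-1}(J^+(\K))\big)$ and, using $\omega(J^+) = 1$, intersect each preimage with $J^+$. Transitivity of $\preceq$ then yields the key set inclusion: any $(p,q) \in J^+$ with $p \in J^+(\K)$ automatically satisfies $q \in J^+(\K)$, so
\begin{align*}
(\pi^1)^{-1}(J^+(\K)) \cap J^+ \ \subset \ (\pi^2)^{-1}(J^+(\K)) \cap J^+,
\end{align*}
and monotonicity of $\omega$ gives $\nu_1(J^+(\K)) \leq \nu_2(J^+(\K))$.

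The reverse implication ``$\Leftarrow$'' is the interesting one, and I would prove it via a Strassen-type theorem. The classical result of Kamae--Krengel--O'Brien asserts that on a Polish space equipped with a closed partial order, existence of a monotone coupling of two Borel probability measures is equivalent to stochastic dominance, i.e.\ $\nu_1(U) \leq \nu_2(U)$ for every Borel \emph{upper set} $U$ (those satisfying $J^+(U) \subset U$). Both hypotheses of the theorem are in place: $\M$ is Polish (as a second-countable smooth manifold with a complete Riemannian distance such as $d_h$ from (\ref{R_metric})), $\preceq$ is a partial order by causality, and it is closed by causal simplicity (Proposition \ref{PropCS} gives $J^+$ closed in $\M^2$).

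It then remains to deduce the stochastic-dominance inequality from the weaker hypothesis that controls only sets of the form $J^+(\K)$. Let $U$ be a Borel upper set. For every compact $\K \subset U$, upperness of $U$ together with $\K \subset J^+(\K)$ forces $\K \subset J^+(\K) \subset U$, so by the hypothesis $\nu_1(\K) \leq \nu_1(J^+(\K)) \leq \nu_2(J^+(\K)) \leq \nu_2(U)$. Inner regularity of the Radon measure $\nu_1$ (automatic on a Polish space, where probability measures are tight) then yields $\nu_1(U) = \sup \{ \nu_1(\K) \ | \ \K \subset U \textnormal{ compact} \} \leq \nu_2(U)$, as required to invoke Strassen's theorem and conclude the existence of a causal coupling.

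The principal difficulty is locating a sufficiently general form of Strassen's theorem: one needs it for a closed partial order on a Polish space, not merely for a total order such as on $\setR$. Once the Kamae--Krengel--O'Brien version is invoked, everything else reduces to the routine inner-regularity reduction above. A fully self-contained proof bypassing Strassen appears substantially harder, presumably requiring a direct Hahn--Banach / minimax duality construction of the coupling from the dominance hypothesis on sets $J^+(\K)$.
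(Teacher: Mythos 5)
Your proof is correct, and it follows essentially the same route as the sources the paper itself relies on for this proposition (the paper gives no in-text proof, only the citations \cite{AHP2017,Miller18}, whose arguments are likewise of Strassen type): the forward direction by transitivity of $\preceq$ through the causal coupling, and the reverse direction by the Kamae--Krengel--O'Brien theorem for a closed partial order on a Polish space --- closedness of $J^+$ and antisymmetry being exactly what causal simplicity provides via Proposition \ref{PropCS} --- together with inner regularity of $\nu_1$ to pass from the sets $J^+(\K)$ to arbitrary Borel upper sets. The only points worth spelling out are that $J^+(\K)$ is closed (hence Borel) in a causally simple spacetime, so every set you measure is measurable, and that your dominance over all Borel upper sets implies the hypothesis of Kamae--Krengel--O'Brien in whichever form it is stated (bounded increasing functions, or open/closed upper sets) by a layer-cake argument, since upper level sets of increasing functions are Borel upper sets.
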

\begin{proof}
See \cite[Theorem 8.]{AHP2017} or \cite[Theorem 4.]{Miller18} for an alternative proof.
\end{proof}

For the next result, recall that for any Polish space $\X$ the space $\Pf(\X)$ of all Borel probability measures on $\X$ is itself Polish when endowed with the \emph{narrow topology}\footnote{Also referred to as the $w^\ast$-topology \cite{Aliprantis} or the $w$-topology \cite{garling2017polish}.}, defined as the coarsest topology such that the maps $\Pf(\X) \ni \nu \mapsto \int_\X \phi \, d\nu$ are continuous for all $\phi \in C_\cb(\X)$.
\begin{prop}
\label{narrowcom}
If $(\M,g)$ is a causally simple spacetime, then for any $\nu_1,\nu_2 \in \Pf(\M)$ the set $\itPi_\preceq(\nu_1, \nu_2)$ of causal couplings of $\nu_1$ and $\nu_2$ is narrowly compact in $\Pf(\M^2)$.
\end{prop}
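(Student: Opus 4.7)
My plan is to combine Prokhorov's theorem with a closedness argument, leveraging the fact that in a causally simple spacetime $J^+$ is closed in $\M^2$ (Proposition~\ref{PropCS}).

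First, I would establish tightness of the larger set $\itPi(\nu_1,\nu_2)$ of all (not necessarily causal) couplings. Since $\M$ is Polish, each $\nu_i$ is itself tight, so for any $\varepsilon>0$ there exist compact $\K_1,\K_2\subset\M$ with $\nu_i(\K_i^c)<\varepsilon/2$. For an arbitrary coupling $\omega\in\itPi(\nu_1,\nu_2)$ the marginal property gives
\begin{align*}
\omega\bigl((\K_1\times\K_2)^c\bigr) \leq \omega(\K_1^c\times\M) + \omega(\M\times\K_2^c) = \nu_1(\K_1^c) + \nu_2(\K_2^c) < \varepsilon,
\end{align*}
and $\K_1\times\K_2$ is compact in $\M^2$. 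By Prokhorov's theorem, $\itPi(\nu_1,\nu_2)$ is thus relatively narrowly compact, and a fortiori so is its subset $\itPi_\preceq(\nu_1,\nu_2)$.

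Next, I would show that $\itPi_\preceq(\nu_1,\nu_2)$ is narrowly closed in $\Pf(\M^2)$. Suppose $\omega_n \to \omega$ narrowly, with $\omega_n\in\itPi_\preceq(\nu_1,\nu_2)$. The marginal conditions survive the limit: for every $\phi\in C_\cb(\M)$ the map $\phi\circ\pi^i$ lies in $C_\cb(\M^2)$, so $\int\phi\,d(\pi^i_\sharp\omega) = \lim_n\int\phi\circ\pi^i\,d\omega_n = \int\phi\,d\nu_i$, whence $\pi^i_\sharp\omega=\nu_i$. The causality condition is preserved by invoking the Portmanteau theorem: since $J^+$ is closed in $\M^2$ by Proposition~\ref{PropCS}, narrow convergence yields
\begin{align*}
\omega(J^+) \geq \limsup_{n\to\infty} \omega_n(J^+) = 1,
\end{align*}
and hence $\omega(J^+)=1$. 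Therefore $\omega\in\itPi_\preceq(\nu_1,\nu_2)$.

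Combining the two steps, $\itPi_\preceq(\nu_1,\nu_2)$ is a narrowly closed subset of a relatively narrowly compact set, hence narrowly compact. There is no real obstacle here — the only substantive ingredient beyond standard optimal transport machinery is the topological closedness of $J^+$, which is precisely what causal simplicity buys us.
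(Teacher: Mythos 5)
Your proof is correct and complete: tightness of the full coupling set $\itPi(\nu_1,\nu_2)$ via Prokhorov, plus narrow closedness of $\itPi_\preceq(\nu_1,\nu_2)$ using the Portmanteau inequality for the closed set $J^+$ (which is exactly where causal simplicity enters, via Proposition \ref{PropCS}), with the sequential reasoning justified because $\Pf(\M^2)$ is metrizable in the narrow topology. The paper itself gives no argument here but simply defers to \cite[Lemma 8]{Miller17a}, and your argument is the standard one of precisely this type, so there is nothing substantive to contrast.
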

\begin{proof}
See \cite[Lemma 8]{Miller17a}.
\end{proof}

Finally, in globally hyperbolic spacetimes many more intersections of causal futures and pasts --- not just those of the form $J^+(p) \cap J^-(q)$, $p,q \in \M$ --- turn out to be compact.
\begin{prop}
\label{PropCCC}
$(\M,g)$ is a globally hyperbolic spacetime iff it admits Cauchy hypersurfaces. Moreover, the following subsets of $\M$ are compact:
\begin{itemize}
\item $J^+(\K_1) \cap J^-(\K_2)$ for any compact $\K_1, \K_2 \subseteq \M$,
\item $J^\pm(\K) \cap \cS$ for any compact $\K \subseteq \M$ and Cauchy hypersurface $\cS$,
\item $J^\pm(\K) \cap J^\mp(\cS)$ for any compact $\K \subseteq \M$ and Cauchy hypersurface $\cS$.
\end{itemize}
\end{prop}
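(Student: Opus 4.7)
The plan is to dispatch the claims in the stated order. For the ``iff'' equivalence at the start, I would not attempt a self-contained proof: this is the classical Geroch theorem (extended by Bernal--S\'anchez) already invoked in the excerpt as Theorem \ref{GBS}, and I would simply cite \cite{MS08,HawkingEllis}, since a full derivation requires substantial machinery (Geroch's volume function and the careful construction of Cauchy temporal functions).

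For the first bullet, my approach combines closedness of $J^+(\K_1) \cap J^-(\K_2)$ (immediate from Proposition \ref{PropCS}) with a finite-cover compactness trick. Since for every $p \in \M$ there exists $\tilde p \ll p$, the family $\{I^+(\tilde p)\}_{\tilde p \in \M}$ is an open cover of $\K_1$; extract a finite subcover indexed by $\tilde p_1,\ldots,\tilde p_k$, and symmetrically cover $\K_2$ by $I^-(\tilde q_1),\ldots,I^-(\tilde q_\ell)$. For any $r$ in the intersection with witnesses $p \preceq r \preceq q$, choose $i,j$ with $p \in I^+(\tilde p_i)$, $q \in I^-(\tilde q_j)$, and conclude $\tilde p_i \ll r \ll \tilde q_j$ via (\ref{KPcausality}). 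Hence $J^+(\K_1) \cap J^-(\K_2) \subseteq \bigcup_{i,j} J^+(\tilde p_i) \cap J^-(\tilde q_j)$, a finite union of compact sets by the very definition of global hyperbolicity, inside which our set is closed --- therefore compact.

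For the second bullet $J^+(\K) \cap \cS$ (the $J^-$ case being symmetric), the same finite-cover trick reduces matters to showing that $J^+(\tilde p) \cap \cS$ is compact for a single point $\tilde p$. For this I would run a limit-curve argument: given a sequence $(r_n) \subset J^+(\tilde p) \cap \cS$, connect each to $\tilde p$ by a future-directed causal curve $\gamma_n$ (parametrized, say, by $h$-arc length, cf.~the estimate (\ref{locLip1})), and invoke the standard limit-curve theorem in strongly causal spacetimes \cite{LimitCurves,MS08} to extract a subsequence converging $h$-uniformly on compact subsets to an inextendible future-directed causal curve $\gamma$ starting at $\tilde p$. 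Since $\cS$ is a Cauchy hypersurface, $\gamma$ meets $\cS$ at some point $r^*$, and the compact-open convergence forces $r_{n_k} \to r^*$ along a subsequence. Closedness from Proposition \ref{PropCS} then places $r^*$ in $J^+(\tilde p) \cap \cS$. The main subtlety here --- and the principal obstacle for the proposition as a whole --- is a careful handling of parametrizations and the passage from the limit curve back to the $h$-convergence of the endpoints $r_n$.

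Finally, for the third bullet $J^+(\K) \cap J^-(\cS)$, the strategy is to reduce to the first bullet by constructing an apt compact set on the Cauchy side. Any $r$ in the set admits $s \in \cS$ with $r \preceq s$, and transitivity with the witness $p \preceq r$ places $s$ in $\K' \vc J^+(\K) \cap \cS$, which is compact by the second bullet. Hence $J^+(\K) \cap J^-(\cS) \subseteq J^+(\K) \cap J^-(\K')$, compact by the first bullet. For closedness I would note that, by Bernal--S\'anchez, there is a Cauchy temporal function $\T_\cS$ with $\cS = \T_\cS^{-1}(0)$; a direct check in the associated GBS splitting (vertical integral curves of $\partial_t$ are timelike and hit $\cS$) gives $J^-(\cS) = \T_\cS^{-1}((-\infty,0])$, which is closed as a preimage of a closed set. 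Combined with closedness of $J^+(\K)$ from Proposition \ref{PropCS}, the set $J^+(\K) \cap J^-(\cS)$ is a closed subset of the compact set $J^+(\K) \cap J^-(\K')$, hence compact. The case $J^-(\K) \cap J^+(\cS)$ is entirely symmetric.
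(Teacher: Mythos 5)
The paper itself does not prove this proposition --- it defers entirely to \cite[Proposition 13]{Miller17a} --- so there is no in-text argument to compare yours against; judged on its own, your architecture (closedness via Proposition \ref{PropCS}, finite chronological covers to reduce to point-based sets, and the reduction of the third bullet to the first two via $\K' \vc J^+(\K)\cap\cS$) is sound and standard, and the first bullet is correct as written. Two points need repair, however. The more serious is the appeal to Bernal--S\'anchez in the third bullet: a Cauchy hypersurface in the sense of Appendix B is merely a closed achronal \emph{topological} hypersurface (the paper explicitly works with non-smooth, non-spacelike ones such as $\partial I^+(\M_c\cup\K)$), and such a set is \emph{not} a level set of any Cauchy temporal function, since level sets of temporal functions are automatically smooth and spacelike. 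The closedness of $J^-(\cS)$ should instead be obtained elementarily: for a Cauchy hypersurface one has $\M = I^-(\cS)\sqcup\cS\sqcup I^+(\cS)$, and achronality forbids $J^-(\cS)$ from meeting $I^+(\cS)$ (if $s_2 \ll q \preceq s_1$ with $s_1,s_2\in\cS$, then $s_2\ll s_1$), whence $J^-(\cS) = \M\setminus I^+(\cS)$, which is closed because $I^+(\cS)$ is open.

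The second point is the step you flag but do not close in the second bullet: the convergence $r_{n_k}\to r^*$ does not follow from the compact-open convergence of the curves alone, because the parameters at which the $\gamma_n$ meet $\cS$ might a priori diverge. The standard way to finish is a dichotomy. Write $r_n = \gamma_n(s_n)$ in the $h$-arclength parametrization. Either $(s_n)$ has a bounded subsequence, in which case $r_{n_k} = \gamma_{n_k}(s_{n_k}) \to \gamma(s^*) \in J^+(\tilde p)\cap\cS$ by uniform convergence on compacta together with the closedness of both $J^+(\tilde p)$ and $\cS$; or $s_n\to\infty$, in which case one picks a parameter $\tau$ with $\gamma(\tau)\in I^+(\cS)$ (the future-inextendible limit curve must eventually enter $I^+(\cS)$), notes that $\gamma_{n_k}(\tau)\in I^+(\cS)$ for large $k$ by openness, and derives a contradiction with the achronality of $\cS$ from $\gamma_{n_k}(\tau)\preceq r_{n_k}\in\cS$ once $s_{n_k}>\tau$. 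With these two repairs the proof is complete.
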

\begin{proof}
See \cite[Proposition 13]{Miller17a}.
\end{proof}

\section*{Appendix C: Sobolev spaces of univariate functions}
\label{sec::AppC}

\renewcommand{\theequation}{C.\arabic{equation}}
\setcounter{equation}{0}
\renewcommand{\thethm}{C.\arabic{thm}}
\setcounter{thm}{0}

For Reader's convenience, below we include a more detailed exposition of the theory of Sobolev spaces $W^{k,p}(I,\sR^N)$ and $W^{k,p}_{\loc}(I,\sR^N)$, where $I \subset \sR$ is an interval, needed throughout the paper. The latter spaces are examples of \emph{local} Sobolev spaces, which  in the textbooks (i.a. \cite{Adams,Evans,Mazya,Ziemer}) typically play an auxiliary role and which have been systematically studied only recently \cite{LocSob}. From the point of view of the current paper, what we are mostly interested in are certain \emph{topological} properties of the spaces $L^2_{\loc}(I,\sR^N)$ and $W^{1,2}_{\loc}(I,\sR^N)$. Nevertheless, let us begin by recalling the general definition (cf. \cite{WeberSobolev}). 

Let $\cdot$ and $| . |$ denote the standard Euclidean inner product and norm, respectively, and let $\cL$ stand for the 1-dimensional Lebesgue measure.
\begin{defn}
\label{locint}
Let $p \in [1,\infty)$. An $\cL$-measurable function $u: I \rightarrow \sR^N$ is
\begin{itemize}
\item $p$-\emph{integrable} if $\int_I |u(t)|^p dt < \infty$.
\item \emph{essentially bounded} if $\exists c \in \sR \ \cL(\{t \in I \, | \, |u(t)| > c\}) = 0$.
\item $p$-\emph{locally integrable} if $\forall (a,b) \Subset I \ \int_a^b |u(t)|^p dt < \infty$.
\item \emph{locally essentially bounded} if \\ $\forall (a,b) \Subset I \ \exists c \in \sR \ \cL(\{t \in (a,b) \, | \, |u(t)| > c\}) = 0$.
\item said to admit a \emph{weak derivative} of order $k > 0$ if there exists a $1$-locally integrable\footnote{1-local integrability is the least restrictive property among those discussed in Definition \ref{locint}. It is implied by the $p$-local integrability for any other $p > 1$, which is in turn implied by the local essential boundedness. Of course, $p$-integrability and essential boundedness imply their local versions. In other words, the $L^1_\loc$-regularity assumption on the weak derivative is the weakest possible (among Sobolev spaces).} function $u^{(k)}: I \rightarrow \sR^N$ such that
\begin{align*}
\forall \phi \in C^\infty_\cc(I^\circ,\sR^N) \quad \int_I u \cdot \phi^{(k)} dt = (-1)^k \int_I u^{(k)} \cdot \phi \, dt,
\end{align*}
where we have suppressed the functions' arguments for legibility. The weak derivative $u^{(k)}$, if it exists, is unique a.e.
\end{itemize}

The spaces of classes of functions equal a.e. satisfying each of the first four conditions are denoted $L^p(I,\sR^N)$, $L^\infty(I,\sR^N)$, $L^p_\loc(I,\sR^N)$ and $L^\infty_\loc(I,\sR^N)$, respectively. One subsequently defines the \emph{Sobolev spaces}
\begin{align*}
& W^{k,p}(I,\sR^N) = \{ [u] \ | \ [u], [u^{(1)}],\ldots, [u^{(k)}] \in L^p(I,\sR^N) \},
\\
& W^{k,\infty}(I,\sR^N) = \{ [u] \ | \ [u], [u^{(1)}],\ldots, [u^{(k)}] \in L^\infty(I,\sR^N) \} ,
\\
& W^{k,p}_{\loc}(I,\sR^N) = \{ [u] \ | \ [u], [u^{(1)}],\ldots, [u^{(k)}] \in L^p_\loc(I,\sR^N) \},
\\
& W^{k,\infty}_{\loc}(I,\sR^N) = \{ [u] \ | \ [u], [u^{(1)}],\ldots, [u^{(k)}] \in L^\infty_\loc(I,\sR^N) \}.
\end{align*}
In what follows, we adopt the common practice of omitting the square brackets, identifying the class of functions equal a.e. with its representative. We also denote $W^{0,p}(I,\sR^N) \vc L^p(I,\sR^N)$ for any $p \in [1,\infty]$ and similarly for the local spaces.
\end{defn}

Observe that we slightly modify the standard definitions of Sobolev spaces, which would require $I$ to be an \emph{open} interval. However, since we explicitly demand the test functions in the definition of the weak derivative to be compactly supported within $I^\circ$, and the elements of Sobolev spaces are defined only up to $\cL$-null subsets, this modification is purely notational and motivated by our desire to emphasize what the original interval $I$ is. 

Notice that, by the H\"older inequality, for any $p_1, p_2 \in [1,\infty]$
\begin{align}
\label{Holder}
p_1 < p_2 \quad \Rightarrow \quad W^{k,p_1}_{\loc}(I,\sR^N) \supset W^{k,p_2}_{\loc}(I,\sR^N).
\end{align}

The spaces $W^{k,p}(I,\sR^N)$, $p \in [1, \infty]$ are Banach spaces with the \emph{$W^{k,p}$-norms} given by
\begin{align*}
& \|u\|_{W^{k,p}(I)} \vc \left( \sum_{l=0}^k \int_I |u^{(l)}(t)|^p dt \right)^{1/p} \ \textup{ for } p < \infty 
\\
\textup{ and } \quad & \|u\|_{W^{k,\infty}(I)} \vc \max_{l \leq k} \textup{ess}\sup\limits_{t \in I} |u^{(l)}(t)|,
\end{align*}
where the essential supremum is defined as the infimum of the set of the essential upper bounds (i.e. the numbers $c$ appearing in the definition of essential boundedness above).

One traditionally denotes $H^k(I,\sR^N) \vc W^{k,2}(I,\sR^N)$, what is motivated by the Hilbert space structure possessed by these Sobolev spaces. Indeed, the norms are in this case induced by the inner product
\begin{align*}
\langle u,v \rangle_{H^k(I)} \vc \sum_{l=0}^k \int_I u^{(l)}(t) \cdot v^{(l)}(t) \, dt.
\end{align*}

However, the norm topology, or the \emph{strong} topology, is not the only one we shall be considering on the Sobolev spaces with $p=2$.
\begin{defn}
The \emph{weak} topology on $H^k(I,\sR^N)$ is the locally convex topology induced by the seminorms $u \mapsto |\langle u,v \rangle_{H^k(I)}|$ indexed by $v \in H^k(I,\sR^N)$. Accordingly, a net $(u_\lambda) \subset H^k(I,\sR^N)$ \emph{weakly converges} to $u$, which is written symbolically as $u_\lambda \rightharpoonup u$, iff $\langle u_\lambda,v \rangle_{H^k(I)} \rightarrow \langle u,v \rangle_{H^k(I)}$ for every $v \in H^k(I,\sR^N)$.
\end{defn}

Just as the names suggest, strong topology is finer than weak topology. Moreover, both these topologies can be ``pulled back'' on the local Sobolev spaces with the help of the restriction maps.
\begin{defn}
\label{loctopdef}
The strong (resp. weak) topology on $H^k_\loc(I,\sR^N)$ is the coarsest topology such that for each $(a,b) \Subset I$ the restriction map $|_{[a,b]}: H^k_\loc(I,\sR^N) \rightarrow H^k([a,b],\sR^N)$ is strongly (resp. weakly) continuous. Accordingly, a net $(u_\lambda) \subset H^k_\loc(I,\sR^N)$ converges \emph{strongly} (resp. \emph{weakly}) to $u$ iff $u_\lambda|_{[a,b]} \rightarrow u|_{[a,b]}$ (resp. $u_\lambda|_{[a,b]} \rightharpoonup u|_{[a,b]}$) in $H^k([a,b],\sR^N)$ for every $(a,b) \Subset I$.
\end{defn}

Notice that the restriction map commutes with the weak differentiation: $(u|_{[a,b]})' = u'|_{[a,b]}$ a.e., provided any (and hence both) of the two terms exist. Indeed, for any $\phi \in C^\infty_\cc((a,b),\sR^N)$ we can write (suppressing $t$)
\begin{align*}
\int_a^b (u|_{[a,b]})' \cdot \phi \, dt = -\int_a^b u|_{[a,b]} \cdot \phi' dt = - \int_a^b u \cdot \phi' dt = \int_a^b u' \cdot \phi \, dt = \int_a^b u'|_{[a,b]} \cdot \phi \, dt.
\end{align*}
Thanks to that, we can write $\|u\|_{H^k[a,b]}$ instead of the cumbersome $\|u|_{[a,b]}\|_{H^k[a,b]}$ without ambiguity. 

Both topologies on $H^k_\loc(I,\sR^N)$ defined above are locally convex topologies. The seminorms inducing the strong topology are of the form $u \mapsto \|u\|_{H^k[a,b]}$ for $(a,b) \Subset I$, whereas the seminorms inducing the weak topology are of the form $u \mapsto |\langle u,v \rangle_{H^k[a,b]}|$ and are indexed by $(a,b) \Subset I$ and $v \in H^k([a,b],\sR^N)$.

There are obvious (set-theoretic) inclusions summarized in the following commutative diagram for any $k \in \sN$ and any $[a,b] \subset I^\circ$.
\begin{center}
\begin{tikzcd}
H^{k+1}(I,\sR^N) \arrow[hookrightarrow]{d}{\iota_2} \arrow[hookrightarrow]{r}{\iota_1} & H^k(I,\sR^N) \arrow[hookrightarrow]{d}{\iota_3}
\\
H^{k+1}_\loc(I,\sR^N) \arrow[rightarrow]{d}{|_{[a,b]}} \arrow[hookrightarrow]{r}{\iota_4} & H^k_\loc(I,\sR^N)\arrow[rightarrow]{d}{|_{[a,b]}}
\\
H^{k+1}([a,b],\sR^N) \arrow[hookrightarrow]{r}{\iota_5} & H^k([a,b],\sR^N)
\end{tikzcd}
\end{center}
\begin{prop}
\label{prop_inclusions}
All the above inclusions between Sobolev spaces are both strongly and weakly continuous.
\end{prop}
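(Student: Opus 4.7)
The plan is to dispatch each of the six inclusions separately, leaning on two elementary facts: \textbf{(a)} a bounded linear operator $T: H_1 \to H_2$ between Hilbert spaces is continuous for both the norm topology and the weak topology, because the identity $\langle Tu, v \rangle_{H_2} = \langle u, T^\ast v \rangle_{H_1}$ identifies every weak seminorm on the target with a weak seminorm on the source; and \textbf{(b)} by the universal property of the initial topology given in Definition \ref{loctopdef}, a linear map into $H^k_\loc(I,\sR^N)$ (with either the strong or the weak topology) is continuous iff its post-composition with each restriction $|_{[a,b]}$ is continuous in the corresponding topology on $H^k([a,b],\sR^N)$.

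First I would note the trivial norm inequalities
\begin{align*}
\|u\|_{H^k(I)} \leq \|u\|_{H^{k+1}(I)}, \qquad \|u\|_{H^k[a,b]} \leq \|u\|_{H^{k+1}[a,b]}, \qquad \|u\|_{H^k[a,b]} \leq \|u\|_{H^k(I)},
\end{align*}
valid for every $[a,b] \subset I$, together with the fact, recalled just before Definition \ref{loctopdef}, that the weak derivative of a restriction equals the restriction of the weak derivative. These show that $\iota_1$, $\iota_5$, and the restriction $|_{[a,b]}: H^k(I,\sR^N) \to H^k([a,b],\sR^N)$ are bounded linear operators between Hilbert spaces. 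By (a) they are therefore strongly and weakly continuous. The restriction $|_{[a,b]}: H^k_\loc(I,\sR^N) \to H^k([a,b],\sR^N)$ is continuous in both the strong and the weak topology directly from Definition \ref{loctopdef}, by its very construction.

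Finally, for the inclusions into local Sobolev spaces I would invoke (b). For $\iota_2$ and $\iota_3$, the composition $|_{[a,b]} \circ \iota_j$ is precisely the global-to-local restriction already handled above. For $\iota_4$, for each $[a,b] \subset I^\circ$ one has the identity
\begin{align*}
|_{[a,b]} \circ \iota_4 \; = \; \iota_5 \circ |_{[a,b]},
\end{align*}
where the rightmost restriction is the map $H^{k+1}_\loc(I,\sR^N) \to H^{k+1}([a,b],\sR^N)$. Both factors on the right have already been shown strongly and weakly continuous, so the composition is too. Applying (b) to each of $\iota_2, \iota_3, \iota_4$ then yields their strong and weak continuity, completing the proof.

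There is no real obstacle; essentially the proposition is a bookkeeping consequence of the definitions. The only point worth emphasizing is fact (a), i.e.\ the observation that the Hilbert-space adjoint of a bounded operator pulls weak test vectors on the target back to weak test vectors on the source, which is what secures weak continuity of all the ``global'' inclusions and restrictions at once.
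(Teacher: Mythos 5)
Your proof is correct and follows essentially the same route as the paper: strong continuity from the obvious norm inequalities, weak continuity of the ``global'' maps from the equivalence of strong and weak continuity for bounded linear operators (which you justify directly via the Hilbert-space adjoint, where the paper cites a general result for normed spaces), and continuity of $\iota_2,\iota_3,\iota_4$ by factoring through the restriction maps and using that the topologies on $H^k_\loc$ are initial topologies with respect to $|_{[a,b]}$ for arbitrary $[a,b]\subset I^\circ$.
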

\begin{proof}
Since inclusions are of course linear maps, their strong continuity follows here from the (obvious) inequalities between the norms and seminorms:
\begin{align*}
\|.\|_{H^{k+1}(I)} \geq \|.\|_{H^k(I)} \geq \|.\|_{H^k[a,b]} \leq \|.\|_{H^{k+1}[a,b]}.
\end{align*}

In order to show weak continuity, recall that for linear operators between normed spaces strong and weak continuity are equivalent \cite[Theorem 6.17]{Aliprantis}. This immediately implies the weak continuity of $\iota_1$ and $\iota_5$ above, as well as of $|_{[a,b]} \circ \iota_2$ and $|_{[a,b]} \circ \iota_3$. By the arbitrariness of $[a,b] \subset I^\circ$, the latter in turn imply the weak continuity of $\iota_2$ and $\iota_3$. Similarly, the weak continuity of $\iota_4$ follows from the strong, and hence weak continuity of $|_{[a,b]} \circ \iota_4 = \iota_5 \circ |_{[a,b]}$ for any $[a,b] \subset I^\circ$.
\end{proof}

Weak differentiation is another linear map which is continuous in any of the situations considered.
\begin{lem}
\label{lem_diff_cont}
The map $u \mapsto u'$ is both strongly and weakly continuous when considered either on $H^k(I,\sR^N)$ or $H^k_\loc(I,\sR^N)$, $k \geq 1$.
\end{lem}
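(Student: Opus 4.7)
The plan is to reduce everything to the bounded linear case on $H^k([a,b],\sR^N)$ and then transfer via the initial topology used to define the local spaces. The key observation is the trivial norm estimate
\begin{align*}
\|u'\|_{H^{k-1}(J)}^2 = \sum_{l=0}^{k-1} \int_J |(u')^{(l)}|^2 dt = \sum_{l=1}^{k} \int_J |u^{(l)}|^2 dt \leq \|u\|_{H^k(J)}^2,
\end{align*}
valid for any interval $J$ (in particular for $J = I$ and for $J = [a,b]$). This shows that $D \vc (u \mapsto u')$ is a bounded linear operator from $H^k(I,\sR^N)$ to $H^{k-1}(I,\sR^N)$, hence strongly continuous. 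Because $D$ is linear between normed spaces, strong continuity automatically upgrades to weak continuity by the same general fact \cite[Theorem 6.17]{Aliprantis} that was already invoked in Proposition \ref{prop_inclusions}.

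For the local case, I would use the defining property of the topology on $H^k_\loc$ (Definition \ref{loctopdef}) together with the fact, noted right after that definition, that weak differentiation commutes with restriction: $(u|_{[a,b]})' = u'|_{[a,b]}$ a.e. Thus the diagram
\begin{align*}
\begin{array}{ccc} H^k_\loc(I,\sR^N) & \xrightarrow{D} & H^{k-1}_\loc(I,\sR^N) \\ \downarrow |_{[a,b]} & & \downarrow |_{[a,b]} \\ H^k([a,b],\sR^N) & \xrightarrow{D} & H^{k-1}([a,b],\sR^N) \end{array}
\end{align*}
commutes for every $[a,b] \subset I^\circ$. Since the bottom horizontal $D$ is both strongly and weakly continuous by the first paragraph, and each restriction $|_{[a,b]}$ is by definition strongly/weakly continuous, the composition $|_{[a,b]} \circ D$ is strongly/weakly continuous from $H^k_\loc$ into $H^{k-1}([a,b],\sR^N)$. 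The universal property of the initial topology defining $H^{k-1}_\loc(I,\sR^N)$ then yields that $D: H^k_\loc(I,\sR^N) \to H^{k-1}_\loc(I,\sR^N)$ is itself strongly/weakly continuous, as desired.

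There is essentially no obstacle here; the statement is a formal consequence of (a) the obvious norm bound $\|u'\|_{H^{k-1}} \leq \|u\|_{H^k}$, (b) the standard equivalence of strong and weak continuity for linear maps between normed spaces, and (c) the universal property of the initial topology. The only point worth stating carefully is the commutation of weak differentiation with restriction, but this has been verified explicitly in the discussion preceding the lemma.
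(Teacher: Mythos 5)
Your proposal is correct and follows essentially the same route as the paper: the obvious norm bound giving strong (hence, by the standard linearity argument, weak) continuity on $H^k(I,\sR^N)$, followed by the commutative diagram ${}^\prime \circ |_{[a,b]} = |_{[a,b]} \circ {}^\prime$ and the defining (initial) topology of the local spaces. If anything, you spell out slightly more than the paper does, e.g. the explicit estimate $\|u'\|_{H^{k-1}} \leq \|u\|_{H^k}$ and the strong-continuity part of the local case, which the paper leaves implicit.
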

\begin{proof}
Consider first $': H^k(I,\sR^N) \rightarrow H^{k-1}(I,\sR^N)$. Strong continuity is obvious, which by \cite[Theorem 6.17]{Aliprantis} means it is weakly continuous as well.

Consider now the commutative diagram, where $[a,b] \subset I^\circ$ is arbitrary
\begin{center}
\begin{tikzcd}
H^{k}_\loc(I,\sR^N) \arrow[rightarrow]{d}{|_{[a,b]}} \arrow[rightarrow]{r}{{}^\prime} & H^{k-1}_\loc(I,\sR^N)\arrow[rightarrow]{d}{|_{[a,b]}}
\\
H^{k}([a,b],\sR^N) \arrow[rightarrow]{r}{{}^\prime} & H^{k-1}([a,b],\sR^N)
\end{tikzcd}
\end{center}
By the previous part of the proof, the map $|_{[a,b]} \circ {}^\prime = {}^\prime \circ |_{[a,b]}$ is weakly continuous, which by the arbitrariness of $[a,b]$ yields the weak continuity of the map $': H^k_\loc(I,\sR^N) \rightarrow H^{k-1}_\loc(I,\sR^N)$.
\end{proof}

The following lemma demonstrates that the \emph{distributional convergence} is in a sense the ``common denominator'' mode of convergence for the Sobolev spaces considered.
\begin{lem}
\label{lem_dist_conv}
For any $\phi \in C^\infty_\cc(I^\circ,\sR^N)$ the map $u \mapsto \int_I u(t) \cdot \phi(t) \, dt$ is both strongly and weakly continuous when considered either on $H^k(I,\sR^N)$ or on $H^k_\loc(I,\sR^N)$.
\end{lem}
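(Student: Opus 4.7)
The plan is to reduce both topological claims to a single strong-continuity estimate by exploiting linearity. The map $L_\phi: u \mapsto \int_I u(t) \cdot \phi(t) \, dt$ is a linear functional on each of the four Sobolev spaces under consideration, and each of these spaces is either a normed space or a locally convex space whose weak topology is, by definition, the coarsest one making all strongly continuous linear functionals continuous. Consequently, on the strength of \cite[Theorem 6.17]{Aliprantis} (already invoked in the proof of Proposition \ref{prop_inclusions}), it will suffice to verify that $L_\phi$ is strongly continuous in each setting; weak continuity will then follow automatically.

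For the global space $H^k(I,\sR^N)$, strong continuity is an immediate consequence of the Cauchy--Schwarz inequality. Indeed, since $\phi \in C^\infty_\cc(I^\circ, \sR^N) \subset L^2(I, \sR^N)$, and since $\|u\|_{L^2(I)} \leq \|u\|_{H^k(I)}$ for every $u \in H^k(I, \sR^N)$, one obtains
\begin{align*}
|L_\phi(u)| \leq \|u\|_{L^2(I)} \|\phi\|_{L^2(I)} \leq \|\phi\|_{L^2(I)} \|u\|_{H^k(I)},
\end{align*}
which proves that $L_\phi$ is a bounded (hence strongly continuous) linear functional on $H^k(I,\sR^N)$.

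For the local space $H^k_\loc(I,\sR^N)$, I would exploit the compact support of $\phi$: pick $[a,b] \subset I^\circ$ such that $\supp \phi \subset (a,b)$, so that $L_\phi(u) = \int_a^b u|_{[a,b]} \cdot \phi \, dt$ for every $u \in H^k_\loc(I,\sR^N)$. Thus $L_\phi$ factors as the composition of the restriction map $|_{[a,b]}: H^k_\loc(I,\sR^N) \rightarrow H^k([a,b],\sR^N)$ with the functional $v \mapsto \int_a^b v \cdot \phi \, dt$ on $H^k([a,b],\sR^N)$. The first map is strongly continuous directly by Definition \ref{loctopdef}, and the second is strongly continuous by the preceding paragraph applied on the interval $[a,b]$. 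Composing, $L_\phi$ is strongly continuous on $H^k_\loc(I,\sR^N)$, and then weakly continuous by linearity as explained above.

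There is no genuine obstacle here: the lemma is a packaging result whose content is just that Sobolev convergence (in any of the four modes) implies distributional convergence. The only mildly delicate point is conceptual rather than computational, namely the passage from strong to weak continuity via linearity, which bypasses the need to exhibit a Riesz-type representer of $L_\phi$ with respect to the inner product $\langle \cdot, \cdot \rangle_{H^k(I)}$.
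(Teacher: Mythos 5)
Your proof is correct and follows essentially the same route as the paper: a H\"older/Cauchy--Schwarz bound showing $L_\phi$ is a bounded linear functional on $H^k(I,\sR^N)$, passage from strong to weak continuity via \cite[Theorem 6.17]{Aliprantis}, and factorization through the restriction map for the local space. The only cosmetic remark is that on $H^k_\loc(I,\sR^N)$ the weak continuity is cleanest justified precisely by your factorization (the restriction map is \emph{weakly} continuous by Definition \ref{loctopdef} and the functional on $H^k([a,b],\sR^N)$ is weakly continuous by the normed-space case), since the paper's weak topology on $H^k_\loc$ is defined via the restriction maps rather than abstractly as the coarsest topology making all strongly continuous functionals continuous.
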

\begin{proof}
Let $[a,b] \supset \supp \phi$. The considered map is a bounded linear functional on $H^k(I,\sR^N)$, because
\begin{align}
\label{lem_dist_conv1}
\left| \int_I u \cdot \phi \, dt \right| = \left| \int_a^b u \cdot \phi \, dt \right| \leq \|\phi\|_{L^2[a,b]} \|u\|_{L^2[a,b]} \leq \|\phi\|_{L^2[a,b]} \|u\|_{H^k(I)},
\end{align}
where we have used the H\"older inequality and $t$ has been suppressed for clarity. The map is thus strongly, and hence weakly continuous (we again employ \cite[Theorem 6.17]{Aliprantis}).

In case of $H^k_\loc(I,\sR^N)$, simply observe that the considered map factorizes through the restriction map $|_{[a,b]}: H^k_\loc(I,\sR^N) \rightarrow H^k([a,b],\sR^N)$, which is continuous by definition of the (strong or weak) topology on $H^k_\loc(I,\sR^N)$, so it reduces to the first part of the proof.
\end{proof}

The weak topologies on $H^k(I,\sR^N)$ and $H^k_\loc(I,\sR^N)$ are in a sense very close to each other, as attested by the next lemma.
\begin{lem}
\label{lem_bounded}
Let $\A$ be a $\|.\|_{H^k(I)}$-bounded subset of $H^k_\loc(I,\sR^N)$. Then in fact $\A \subset H^k(I,\sR^N)$ and the weak topologies induced on $\A$ from $H^k_\loc(I,\sR^N)$ and $H^k(I,\sR^N)$ coincide.
\end{lem}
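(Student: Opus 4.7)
The first assertion is immediate: by hypothesis there exists $M < +\infty$ with $\|u\|_{H^k(I)} \leq M$ for all $u \in \A$, and the very finiteness of $\|u\|_{H^k(I)}$ for each $u \in \A$ places $u$ inside $H^k(I, \sR^N)$ in accordance with Definition \ref{locint}. So the entire content of the lemma is the comparison of the two weak topologies on $\A$, which I plan to establish by a separate inequality in each direction.

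The coarser-than direction (weak-$H^k_\loc$ coarser than weak-$H^k(I)$, on all of $H^k(I,\sR^N)$, no boundedness required) is free from Proposition \ref{prop_inclusions}: the inclusion $\iota_3: H^k(I,\sR^N) \hookrightarrow H^k_\loc(I,\sR^N)$ is weakly continuous, so pulling the weak-$H^k_\loc$ seminorms back to $H^k(I,\sR^N)$ yields a coarser locally convex topology than the weak-$H^k(I)$ one; restricting to $\A$ preserves this ordering.

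For the nontrivial converse on $\A$, I will show that for every fixed $w \in H^k(I,\sR^N)$ the functional $\ell_w(u) \vc \langle u, w \rangle_{H^k(I)}$ is weak-$H^k_\loc$-continuous when restricted to $\A$. The key approximation is the Cauchy--Schwarz estimate
\begin{align*}
\left| \ell_w(u) - \langle u, w \rangle_{H^k[a,b]} \right|
 = \left| \sum_{l=0}^k \int_{I \setminus [a,b]} u^{(l)} \cdot w^{(l)} \, dt \right|
 \leq \|u\|_{H^k(I)} \cdot R_w(a,b),
\end{align*}
where $R_w(a,b) \vc \bigl(\sum_{l=0}^k \int_{I \setminus [a,b]} |w^{(l)}|^2 dt\bigr)^{1/2}$. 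Since each $w^{(l)} \in L^2(I, \sR^N)$, the dominated convergence theorem yields $R_w(a,b) \to 0$ as $[a,b]$ exhausts $I^\circ$. Given $\varepsilon > 0$ and $u_0 \in \A$, I pick $[a,b] \subset I^\circ$ so small that $2M \cdot R_w(a,b) < \varepsilon / 2$, which by the uniform bound $\|u\|_{H^k(I)} \leq M$ on $\A$ forces
\begin{align*}
\bigl| \ell_w(u) - \ell_w(u_0) - \bigl( \langle u, w|_{[a,b]} \rangle_{H^k[a,b]} - \langle u_0, w|_{[a,b]} \rangle_{H^k[a,b]} \bigr) \bigr| < \varepsilon/2
\end{align*}
for every $u \in \A$. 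Since $w|_{[a,b]} \in H^k([a,b], \sR^N)$, Definition \ref{loctopdef} guarantees that the mapping $u \mapsto \langle u, w|_{[a,b]} \rangle_{H^k[a,b]}$ is weak-$H^k_\loc$-continuous on $\A$, so a weak-$H^k_\loc$-open neighborhood $U$ of $u_0$ in $\A$ exists on which the bracketed quantity is smaller than $\varepsilon/2$. Then $|\ell_w(u) - \ell_w(u_0)| < \varepsilon$ on $U$, proving continuity and hence the desired reverse inclusion of topologies.

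The main obstacle, and the only place where the $H^k(I)$-boundedness hypothesis is genuinely needed, is precisely in selecting the interval $[a,b]$ uniformly in $u \in \A$: without a common bound $M$, one could not beat the tail $R_w(a,b)$ by a single $[a,b] \subset I^\circ$ valid for all $u$, and the approximation of $\ell_w$ by a weak-$H^k_\loc$-continuous functional would collapse. Everything else is routine Hilbert-space bookkeeping.
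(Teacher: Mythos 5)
Your proof is correct, but it takes a genuinely different route from the paper. The paper argues with nets: given a net in $\A$ converging weakly in one of the two senses, every subnet is norm-bounded, so by Alaoglu's theorem it has a subsubnet converging weakly in the other sense, and the limit is identified with the original one via distributional convergence (Lemma \ref{lem_dist_conv}); this subnet--subsubnet argument is run symmetrically in both directions. You instead split the two directions asymmetrically: the inclusion of topologies that needs no boundedness you get for free from the weak continuity of $\iota_3$ in Proposition \ref{prop_inclusions}, and for the nontrivial direction you show directly that each defining functional $\ell_w = \langle \cdot, w\rangle_{H^k(I)}$ is, on the bounded set $\A$, a uniform limit of the weak-$H^k_\loc$-defining functionals $\langle \cdot, w|_{[a,b]}\rangle_{H^k[a,b]}$, with the tail controlled by the Cauchy--Schwarz estimate $\|u\|_{H^k(I)} R_w(a,b) \leq M R_w(a,b)$ and absolute continuity of the integral of $w$. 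This buys an argument that is more elementary and quantitative --- no Alaoglu, no nets, no limit-identification step --- and it isolates exactly where the boundedness of $\A$ enters (the uniform choice of $[a,b]$), whereas the paper's version is shorter given that Lemma \ref{lem_dist_conv} and the Alaoglu citation are already in place and treats both directions by one symmetric mechanism. Your concluding step is also sound: continuity of every $\ell_w|_\A$ in the relative weak-$H^k_\loc$ topology gives, by the universal property of the initial topology, that this topology is finer than the relative weak-$H^k(I)$ topology, and together with the first direction the two coincide.
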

\begin{proof}
The fact that $\A \subset H^k(I,\sR^N)$ follows trivially from the definition of Sobolev spaces. In order to show the coincidence of the weak topologies, let $(u_\lambda) \subset \A$ be a net and let also $u \in \A$.

Let us first prove that $u_\lambda \rightharpoonup u$ in $H^k_\loc(I,\sR^N)$ implies that $u_\lambda \rightharpoonup u$ in $H^k(I,\sR^N)$. To this end, observe that any subnet $(u_\xi)$ of $(u_\lambda)$, being $\|.\|_{H^k(I)}$-bounded, has a subsubnet $(u_\kappa)$ weakly convergent to some $\tilde{u} \in H^k(I,\sR^N)$ (this is a consequence of Alaoglu's theorem \cite[Theorem 6.21]{Aliprantis}). It thus suffices to prove that $u = \tilde{u}$ a.e. This can be done by employing the distributional convergence, since by Lemma \ref{lem_dist_conv} we have that, for any $\phi \in C_\cc^\infty(I^\circ,\sR^N)$,
\begin{align*}
\int_I u_\kappa \cdot \phi \, dt \rightarrow \int_I u \cdot \phi \, dt, \ \quad \textnormal{but also} \ \quad \int_I u_\kappa \cdot \phi \, dt \rightarrow \int_I \tilde{u} \cdot \phi \, dt.
\end{align*}
Hence $\int_I (\tilde{u} - u) \cdot \phi \, dt = 0$, what by the arbitrariness of $\phi$ implies that $u = \tilde{u}$ a.e. on $I$.

The converse reasoning is very similar. Assume that $u_\lambda \rightharpoonup u$ in $H^k(I,\sR^N)$. In order to prove that $u_\lambda \rightharpoonup u$ in $H^k_\loc(I,\sR^N)$, fix any $[a,b] \subset I^\circ$ and notice that any subnet $(u_\xi)$ is $\|.\|_{H^k[a,b]}$-bounded (because $\|.\|_{H^k[a,b]} \leq \|.\|_{H^k(I)}$), and hence it has a subsubnet $(u_\kappa)$ weakly convergent to some $\tilde{u} \in H^k([a,b],\sR^N)$. Employing again the distributional convergence, we obtain that $u = \tilde{u}$ a.e. on $[a,b]$, and so $u_\lambda \rightharpoonup u$ in $H^k([a,b],\sR^N)$. Invoking the arbitrariness of $[a,b]$ concludes the proof.
\end{proof}

The next, very important lemma shows that for $k \geq 1$, the elements of $H^k(I,\sR^N)$ and $H^k_\loc(I,\sR^N)$ have continuous representatives, and so can be conveniently regarded as continuous functions.

\begin{lem}
\label{lem_cont_rep}
If $k \geq 1$, then 
\begin{enumerate}[i)]
\item $H^k(I,\sR^N) \subset C(\bar{I},\sR^N)$, and if $u_n \rightharpoonup u$ in $H^k(I,\sR^N)$, then $u_n$ converges to $u$ uniformly on compact subsets of $\bar{I}$.
\item $H^k_\loc(I,\sR^N) \subset C(I^\circ,\sR^N)$, and if $u_n \rightharpoonup u$ in $H^k_\loc(I,\sR^N)$, then $u_n$ converges to $u$ uniformly on compact subsets of $I^\circ$.
\end{enumerate}
\end{lem}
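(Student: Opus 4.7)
My plan is first to reduce to $k=1$ using the weakly continuous inclusion $\iota_1: H^k(I,\sR^N) \hookrightarrow H^1(I,\sR^N)$ from Proposition \ref{prop_inclusions} (and analogously $\iota_4$ for the local case). For $u \in H^1(I,\sR^N)$ I would set
\begin{align*}
\tilde{u}(t) \vc c + \int_{s_0}^t u'(\tau)\, d\tau
\end{align*}
for a fixed $s_0 \in I^\circ$ and a constant $c \in \sR^N$ chosen to equal a Lebesgue point value of $u$ at $s_0$. Both $\tilde{u}$ and $u$ have weak derivative $u'$, so they differ by a constant a.e., which is pinned to zero by the choice of $c$; hence $\tilde{u}$ is a continuous (indeed absolutely continuous) representative. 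The key quantitative fact is the Cauchy--Schwarz bound
\begin{align*}
|\tilde{u}(t) - \tilde{u}(s)| \leq \sqrt{|t-s|}\, \|u'\|_{L^2(I)} \qquad \forall s, t \in I,
\end{align*}
which makes $\tilde{u}$ uniformly continuous on $I$, hence extendable by continuity to $\bar{I}$.

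\textbf{Uniform convergence for part (i).} Suppose $u_n \rightharpoonup u$ in $H^1(I,\sR^N)$. By the uniform boundedness principle, $M \vc \sup_n \|u_n\|_{H^1(I)} < \infty$. Fix a compact $K \subset \bar{I}$ and choose $[a,b] \subset \bar{I}$ with $K \subset [a,b]$. The Hölder bound above applied to each $u_n$ delivers \emph{uniform} equicontinuity of $\{u_n|_K\}$. For uniform boundedness, integrate the inequality $|u_n(t)| \leq |u_n(s)| + \sqrt{|t-s|}\,\|u_n'\|_{L^2(I)}$ over $s \in [a,b]$ to obtain
\begin{align*}
\max_{t \in K} |u_n(t)| \leq \tfrac{1}{\sqrt{b-a}}\|u_n\|_{L^2[a,b]} + \sqrt{b-a}\,\|u_n'\|_{L^2(I)} \leq C(a,b)\, M.
\end{align*}
Arzelà--Ascoli then provides, for any subsequence of $(u_n|_K)$, a further subsequence converging uniformly to some $w \in C(K,\sR^N)$. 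By Lemma \ref{lem_dist_conv} the weak $H^1$-convergence forces $\int_I u_n \cdot \phi\, dt \to \int_I u \cdot \phi \, dt$ for every $\phi \in C^\infty_\cc(K^\circ,\sR^N)$, but uniform convergence of the sub-subsequence to $w$ also gives $\int_K u_{n_k} \cdot \phi \, dt \to \int_K w \cdot \phi \, dt$, forcing $w = u|_K$ a.e.\ and (by continuity) everywhere on $K$. The standard subsequence-of-subsequence argument upgrades this to uniform convergence of the entire sequence $(u_n|_K) \to u|_K$.

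\textbf{Part (ii).} Here I would simply invoke Definition \ref{loctopdef}: $u_n \rightharpoonup u$ in $H^k_\loc(I,\sR^N)$ means $u_n|_{[a,b]} \rightharpoonup u|_{[a,b]}$ in $H^k([a,b],\sR^N)$ for every $[a,b] \subset I^\circ$. Selecting continuous representatives on each such $[a,b]$ via part (i) yields pieces that agree on overlaps (being a.e.\ equal and continuous), hence glue into a global continuous representative on $I^\circ$. Given any compact $K \subset I^\circ$, pick $[a,b] \subset I^\circ$ with $K \subset [a,b]$ and apply part (i) on $[a,b]$ to obtain uniform convergence on $K$.

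\textbf{Anticipated obstacle.} The main delicate point is ensuring that, for finite endpoints of $I$ not belonging to $I$ (e.g. $I = (0,1)$), the continuous representative extends to $\bar{I}$ and uniform convergence survives at those endpoints. This is handled by observing that the Hölder estimate provides a modulus of continuity depending \emph{only} on $\|u'\|_{L^2(I)}$, which is uniformly bounded along the sequence $(u_n)$; the uniform convergence on $I \cap K$ then automatically propagates to any boundary points of $K$ by a standard Cauchy argument, completing the proof on $\bar{I}$.
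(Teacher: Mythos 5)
Your proposal is correct and follows essentially the same route as the paper: an absolutely continuous representative via the fundamental theorem of calculus, the H\"older-$\tfrac12$ bound (your Cauchy--Schwarz estimate is exactly the Morrey inequality the paper cites), norm-boundedness of the weakly convergent sequence, Arzel\`a--Ascoli plus identification of the limit through distributional convergence (Lemma \ref{lem_dist_conv}) and the subsequence-of-subsequences trick, with part (ii) obtained by restriction to compact subintervals as in Definition \ref{loctopdef}. The only cosmetic deviations are that you prove Morrey's estimate by hand, you argue per fixed compact set instead of the paper's diagonal argument over an exhaustion of $\bar{I}$, and your identification step should be phrased on the interval $[a,b]$ rather than on $K$ (so that test functions supported in $(a,b)\subset I^\circ$ are available even when $K^\circ=\emptyset$) --- a trivial rewording.
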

\begin{proof}
Let $u \in H^k(I,\sR^N) \subset H^1(I,\sR^N)$. Then it is well known that $u$ has an absolutely continuous representative on $\bar{I}$ (which we denote by the same symbol $u$) satisfying \cite[Section 5.2, Theorem 2]{Evans}
\begin{align}
\label{ac}
u(t_2) - u(t_1) = \int_{t_1}^{t_2} u'(t)dt
\end{align}
for all $t_1,t_2 \in \bar{I}$. Hence $u \in C(\bar{I},\sR^N)$.

Let now $(u_n) \subset H^k(I,\sR^N) \subset C(\bar{I},\sR^N)$ be a sequence weakly convergent to $u$ and let $(u_m)$ be any its subsequence. Our aim is to construct its subsubsequence convergent to $u$ uniformly on compact sets of $\bar{I}$.

Begin by noticing that $(u_m)$ is $\|.\|_{H^k(I)}$-bounded, being weakly convergent. Moreover, for any $[a,b] \subset \bar{I}$ we have the following estimate on the strength of Morrey's inequality (cf. \cite[Section 5.6, Theorem 5]{Evans}; in fact, it can also be obtained directly from (\ref{ac}))
\begin{align}
\label{Morrey}
\|u_m\|_{C^{0,1/2}[a,b]} \leq M \| u_m \|_{H^1[a,b]} \leq M \| u_m \|_{H^k(I)},
\end{align}
where the constant $M > 0$ depends only on $[a,b]$ and $N$, whereas $\|.\|_{C^{0,1/2}[a,b]}$ denotes the H\"older $\tfrac{1}{2}$-norm
\begin{align*}
\|u_m\|_{C^{0,1/2}[a,b]} \vc \|u_m\|_{C[a,b]} + \sup\limits_{t_1 \neq t_2 \in [a,b]} \frac{|u_m(t_1) - u_m(t_2)|}{|t_1-t_2|^{1/2}},
\end{align*}
with $\|.\|_{C[a,b]}$ denoting the supremum norm. Altogether, we thus obtain that $(u_m)$ is $\|.\|_{C^{0,1/2}[a,b]}$-bounded, which by the Arzel\`a--Ascoli theorem means it possesses a $\|.\|_{C[a,b]}$-convergent subsequence $(u^{(0)}_m)$.

This, of course, is not yet the subsubsequence we are looking for --- we need to somehow `unfix' the compact subinterval $[a,b]$. To this end, let 
$\{[a_j,b_j]\}$ be an exhaustion of $\bar{I}$ by compact intervals. Let $(u^{(1)}_m)$ be a $\|.\|_{C[a_1,b_1]}$-convergent subsequence of $(u^{(0)}_m)$, existing by the above reasoning. Similarly, define inductively $(u^{(j+1)}_m)$ as a $\|.\|_{C[a_{j+1},b_{j+1}]}$-convergent subsequence of $(u^{(j)}_m)$. Clearly, the ``diagonal'' $(u^{(m)}_m)$ is a subsubsequence of $(u_m)$ that converges uniformly on $[a_j,b_j]$ for every $j \in \sN$, and hence on every compact subset of $\bar{I}$.

Denote the limit of the ``diagonal'' subsubsequence by $\tilde{u} \in C(\bar{I},\sR^N)$. To see that $\tilde{u} = u$, notice that both the weak convergence in $H^k(I,\sR^N)$ and the uniform convergence on compact subsets of $\bar{I}$ imply the distributional convergence (cf. Lemma \ref{lem_dist_conv}). For any $\phi \in C_\cc^\infty(I^\circ,\sR^N)$ we thus have
\begin{align*}
\int_I u^{(m)}_m \cdot \phi \, dt \rightarrow \int_I u \cdot \phi \, dt, \quad \textup{but also} \quad \int_I u^{(m)}_m \cdot \phi \, dt \rightarrow \int_I \tilde{u} \cdot \phi \, dt.
\end{align*}
But this means that $\int_I (\tilde{u} - u) \cdot \phi \, dt = 0$ for all $\phi \in C_\cc^\infty(I^\circ,\sR^N)$, what together with the continuity of both $\tilde{u}$ and $u$ implies that they are equal on $\bar{I}$.


The proof of \emph{ii)} heavily relies on \emph{i)}. Assume this time that $u \in H^k_\loc(I,\sR^N)$, which means $u|_{[a,b]} \in H^k([a,b],\sR^N) \subset C([a,b],\sR^N)$ for any $[a,b] \subset I^\circ$. Hence $u \in C(I^\circ,\sR^N)$.

Let now $(u_n) \in H^k_\loc(I,\sR^N)$ be a sequence weakly convergent to $u$, which means that $u_n|_{[a,b]} \rightharpoonup u|_{[a,b]}$ in $H^k([a,b], \sR^N)$ for any $[a,b] \subset I^\circ$. By \emph{i)}, $(u_n|_{[a,b]})$ converges to $(u|_{[a,b]})$ uniformly on compact subsets of $[a,b]$ and hence on the entire $[a,b]$. By the latter's arbitrariness, this means that $(u_n)$ converges to $u$ uniformly on compact subsets of $I^\circ$.
\end{proof}

\begin{cor}
\label{lem_l2}
If $k \geq 1$, then $u_n \rightharpoonup u$ in $H^k_\loc(I,\sR^N)$ implies $u_n \rightarrow u$ in $L^2_\loc(I,\sR^N)$.
\end{cor}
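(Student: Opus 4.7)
The plan is to derive the corollary almost directly from part \emph{ii)} of Lemma \ref{lem_cont_rep}, which is a strictly stronger statement than we need: it already gives \emph{uniform} convergence on compact subsets of $I^\circ$, and uniform convergence on a bounded interval trivially implies $L^2$ convergence on that interval.

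More concretely, I would argue as follows. Suppose $u_n \rightharpoonup u$ in $H^k_\loc(I,\sR^N)$ with $k \geq 1$. By Lemma \ref{lem_cont_rep} \emph{ii)}, every $u_n$ and $u$ can be identified with elements of $C(I^\circ,\sR^N)$, and $u_n \to u$ uniformly on every compact subset of $I^\circ$. Now, according to Definition \ref{loctopdef}, $u_n \to u$ in $L^2_\loc(I,\sR^N)$ means exactly that $u_n|_{[a,b]} \to u|_{[a,b]}$ in $L^2([a,b],\sR^N)$ for every $(a,b) \Subset I$, i.e., for every $[a,b] \subset I^\circ$. But for any such $[a,b]$,
\begin{align*}
\|u_n - u\|_{L^2[a,b]}^2 = \int_a^b |u_n(t) - u(t)|^2 \, dt \leq (b-a) \sup_{t \in [a,b]} |u_n(t) - u(t)|^2 \xrightarrow[n\to\infty]{} 0,
\end{align*}
because $[a,b]$ is a compact subset of $I^\circ$. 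This gives $L^2_\loc$-convergence.

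Since Lemma \ref{lem_cont_rep} \emph{ii)} does all of the heavy lifting (namely, invoking Morrey's inequality plus an Arzel\`a--Ascoli diagonal argument to promote weak $H^k_\loc$-convergence to uniform-on-compacta convergence), there really is no obstacle here. The only point that needs checking is the alignment between ``compact subsets of $I^\circ$'' appearing in Lemma \ref{lem_cont_rep} \emph{ii)} and ``$(a,b) \Subset I$'' appearing in Definition \ref{loctopdef}, but this is immediate from the conventions fixed in Appendix A.
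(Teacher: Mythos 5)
Your proposal is correct and coincides with the paper's own proof: both deduce uniform convergence on every $[a,b] \subset I^\circ$ from Lemma \ref{lem_cont_rep} \emph{ii)} and then conclude via the elementary bound $\|u_n - u\|_{L^2[a,b]}^2 \leq (b-a)\,\|u_n - u\|_{C[a,b]}^2 \rightarrow 0$. Nothing further is needed.
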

\begin{proof}
By Lemma \ref{lem_cont_rep}, $u_n$ converges to $u$ uniformly on any $[a,b] \subset I^\circ$, and so
\begin{align*}
\int_a^b |u_n(t) - u(t)|^2 dt \leq (b-a) \|u_n - u\|^2_{C[a,b]} \rightarrow 0.
\end{align*}
\end{proof}

Thanks to Lemma \ref{lem_cont_rep} it makes sense to consider the evaluation map $u \mapsto u(t)$ on the spaces $H^k(I,\sR^N)$ and $H^k_\loc(I,\sR^N)$, provided $k \geq 1$.
\begin{lem}
\label{lem_ev_cont}
If $k \geq 1$, the map $u \mapsto u(t)$ is both strongly and weakly continuous when considered either on $H^k(I,\sR^N)$ (for any $t \in \bar{I}$) or $H^k_\loc(I,\sR^N)$ (for any $t \in I^\circ$).
\end{lem}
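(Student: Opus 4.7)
The plan is to exploit the fact that the heavy machinery --- continuous representatives and Morrey-type estimates --- has already been set up in the proof of Lemma \ref{lem_cont_rep}. Since $\ev_t$ is a linear map from a normed (resp. locally convex) space into the finite-dimensional space $\sR^N$, by \cite[Theorem 6.17]{Aliprantis} its strong and weak continuity are equivalent, so in each of the two cases it is enough to check boundedness with respect to a suitable seminorm.

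First I would treat $\ev_t: H^k(I,\sR^N) \to \sR^N$ for fixed $t \in \bar{I}$. Pick any compact subinterval $[a,b] \subset \bar{I}$ containing $t$ (possible since $t \in \bar{I}$). Using that $H^k(I,\sR^N) \hookrightarrow H^1(I,\sR^N)$ with $\|\cdot\|_{H^1(I)} \leq \|\cdot\|_{H^k(I)}$, and applying Morrey's inequality exactly as in (\ref{Morrey}), one gets
\[
|u(t)| \leq \|u\|_{C[a,b]} \leq \|u\|_{C^{0,1/2}[a,b]} \leq M \|u\|_{H^1[a,b]} \leq M \|u\|_{H^k(I)},
\]
where $u$ is understood as its continuous representative guaranteed by Lemma \ref{lem_cont_rep}. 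Thus $\ev_t$ is a bounded linear functional, hence strongly continuous, and by the remark above also weakly continuous.

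Second, for $\ev_t: H^k_\loc(I,\sR^N) \to \sR^N$ with $t \in I^\circ$, I would fix any $[a,b] \subset I^\circ$ with $t \in (a,b)$ and factor $\ev_t = \ev_t^{[a,b]} \circ |_{[a,b]}$, where $\ev_t^{[a,b]}$ denotes the evaluation map on $H^k([a,b],\sR^N)$. By Definition \ref{loctopdef} the restriction $|_{[a,b]}: H^k_\loc(I,\sR^N) \to H^k([a,b],\sR^N)$ is by definition both strongly and weakly continuous, and by the first part so is $\ev_t^{[a,b]}$. The composition is therefore continuous in both topologies, which concludes the argument.

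There is no real obstacle here; the only subtlety is that Morrey's inequality is a local (compact-interval) statement, so one has to reduce the possibly unbounded $I$ to a compact subinterval before invoking it --- which is precisely why the target point must lie in $\bar{I}$ (resp. $I^\circ$ in the local case, so that such an $[a,b]$ can be chosen strictly inside $I$).
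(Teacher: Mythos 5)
Your proof is correct and follows essentially the same route as the paper: the chain of estimates $|u(t)| \leq \|u\|_{C[a,b]} \leq \|u\|_{C^{0,1/2}[a,b]} \leq M\|u\|_{H^1[a,b]} \leq M\|u\|_{H^k(I)}$ via Morrey's inequality for the global case, and factorization of $\ev_t$ through the restriction map $|_{[a,b]}$ for the local case. The only cosmetic difference is that you cite the strong-weak continuity equivalence for linear maps explicitly, which the paper leaves implicit in the phrase ``strong (and hence weak) continuity''.
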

\begin{proof}
For any $t \in \bar{I}$ the map $H^k(I,\sR^N) \ni u \mapsto u(t) \in \sR^N$ is well-defined and linear, and by Morrey's inequality one can write that, for any $[a,b] \subset \bar{I}$ such that $[a,b] \ni t$,
\begin{align}
\label{lem_ev_cont1}
|u(t)| \leq \|u\|_{C[a,b]} \leq \|u\|_{C^{0,1/2}[a,b]} \leq M \| u \|_{H^1[a,b]} \leq M \| u \|_{H^k(I)},
\end{align}
which yields its strong (and hence weak) continuity.

As for the map $H^k_\loc(I,\sR^N) \ni u \mapsto u(t)$, where $t \in I^\circ$, it suffices to notice that it factorizes through the restriction map $|_{[a,b]}$ for some $[a,b] \subset I^\circ$ containing $t$, and so it really reduces to the previous case.
\end{proof}

The final lemma explains how the elements of $H^1_\loc$-spaces behave under composition with smooth functions.
\begin{lem}
\label{Psilemma}
Let $\Psi: \sR^N \rightarrow \sR^K$ be a smooth map\footnote{Actually, $C^1$ is enough, cf. \cite[Proposition 4.1.21]{WeberSobolev}.}. Then the map $\Psi_\ast: H^1_\loc(I,\sR^N) \rightarrow H^1_\loc(I,\sR^K)$, $\Psi_\ast(u) \vc \Psi \circ u$ is well defined and has the weak derivative $(\Psi \circ u)' = D\Psi(u) \cdot u'$ with $D\Psi$ denoting the Jacobian matrix of $\Psi$. Moreover, $\Psi_\ast$ is both strongly and weakly sequentially continuous.
\end{lem}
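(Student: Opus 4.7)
The plan is to first establish that $\Psi_\ast$ is well defined together with the chain rule $(\Psi \circ u)' = D\Psi(u) \cdot u'$ in the weak sense, and then derive both continuity statements from this formula via a uniform splitting argument.

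Fix $u \in H^1_\loc(I,\sR^N)$, identified with its continuous representative via Lemma \ref{lem_cont_rep}. For any $[a,b] \subset I^\circ$ the set $u([a,b])$ is compact, so on a bounded open neighborhood $\U \supset u([a,b])$ the derivative $D\Psi$ is bounded, which makes $\Psi|_\U$ Lipschitz. Since $u|_{[a,b]}$ is absolutely continuous (being in $H^1([a,b],\sR^N)$), the composition $\Psi \circ u|_{[a,b]}$ is absolutely continuous as well, hence differentiable a.e. with the classical chain rule $(\Psi \circ u)'(t) = D\Psi(u(t)) \cdot u'(t)$ valid wherever $u'(t)$ exists. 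The right-hand side lies in $L^2([a,b],\sR^K)$ because $D\Psi(u)$ is continuous on $[a,b]$ and $u' \in L^2([a,b],\sR^N)$, and integration by parts on absolutely continuous functions identifies this pointwise a.e. derivative with the weak derivative of $\Psi \circ u$. Arbitrariness of $[a,b] \subset I^\circ$ yields $\Psi \circ u \in H^1_\loc(I,\sR^K)$ and the stated chain rule.

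For strong sequential continuity, assume $u_n \rightarrow u$ in $H^1_\loc(I,\sR^N)$ and fix $[a,b] \subset I^\circ$. Strong convergence implies weak convergence, so by Lemma \ref{lem_cont_rep} we have $u_n \rightarrow u$ uniformly on $[a,b]$; consequently the image set $K \vc \bigcup_n u_n([a,b]) \cup u([a,b])$ is compact in $\sR^N$. Uniform continuity of $\Psi$ on $K$ gives $\Psi \circ u_n \rightarrow \Psi \circ u$ uniformly on $[a,b]$, hence in $L^2([a,b],\sR^K)$. For the derivatives, decompose
\begin{align*}
(\Psi \circ u_n)' - (\Psi \circ u)' = \left[ D\Psi(u_n) - D\Psi(u) \right] u_n' + D\Psi(u)\left[ u_n' - u' \right].
\end{align*}
The first summand tends to zero in $L^2[a,b]$ because $D\Psi$ is uniformly continuous on $K$ (yielding uniform vanishing of the bracket) and $\|u_n'\|_{L^2[a,b]}$ is bounded; the second vanishes because $D\Psi(u)$ is bounded on $[a,b]$ and $u_n' \rightarrow u'$ in $L^2[a,b]$.

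For weak sequential continuity, suppose $u_n \rightharpoonup u$ in $H^1_\loc$. Then $u_n \rightarrow u$ uniformly on any $[a,b] \subset I^\circ$ by Lemma \ref{lem_cont_rep}, and $(u_n)$ is $\|\cdot\|_{H^1[a,b]}$-bounded (weakly convergent sequences are norm bounded and the restriction map is weakly continuous). Using the same decomposition as above, the first summand still converges to zero strongly in $L^2[a,b]$ exactly as before. For the second summand, note that $D\Psi(u)$ is a continuous, hence bounded, $\sR^{K\times N}$-valued map on $[a,b]$, so for any test $\varphi \in L^2([a,b],\sR^K)$ the map $D\Psi(u)^T \varphi$ belongs to $L^2([a,b],\sR^N)$; pairing against it shows $D\Psi(u)(u_n' - u') \rightharpoonup 0$ in $L^2[a,b]$. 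Combining with uniform (hence strong $L^2$, hence weak $L^2$) convergence of $\Psi \circ u_n$ to $\Psi \circ u$, we obtain $\Psi_\ast(u_n) \rightharpoonup \Psi_\ast(u)$ in $H^1([a,b],\sR^K)$, and by arbitrariness of $[a,b]$ also in $H^1_\loc(I,\sR^K)$.

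The main delicate point is the chain rule for weak derivatives, which is what forces us to first promote $u$ to a continuous (indeed absolutely continuous on compact subintervals) representative and to control $\Psi$ on the compact range of $u$. Once the formula is secured, both continuity claims reduce to the observation that locally uniform convergence of $u_n$ to $u$ (furnished by Lemma \ref{lem_cont_rep}) makes $D\Psi(u_n)$ converge uniformly to $D\Psi(u)$, which together with the boundedness or weak convergence of $(u_n')$ handles the product term in a routine way.
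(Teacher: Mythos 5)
Your proof is correct, and the first half takes a genuinely different route from the paper. For well-definedness and the chain rule, the paper approximates $u$ by smooth compactly supported functions in $H^1_\loc(I,\sR^N)$, applies the classical chain rule to the approximants, and passes to the limit in $L^2_\loc$ using the closedness of weak differentiation under such limits (citing the relevant propositions on Sobolev approximation); you instead work directly with the absolutely continuous representative of $u|_{[a,b]}$, observe that $\Psi$ is Lipschitz near the compact range so that $\Psi\circ u$ is again absolutely continuous, invoke the pointwise chain rule at the a.e.\ points of differentiability of $u$, and identify the a.e.\ derivative with the weak derivative via the fundamental theorem of calculus for AC functions. Your argument is more elementary (no approximation machinery, no density of smooth functions in $H^1_\loc$) and buys a cleaner proof of the formula $(\Psi\circ u)'=D\Psi(u)\cdot u'$; the paper's approximation argument is more in the spirit of its Sobolev-space toolkit and generalizes more readily to settings where an AC calculus is unavailable. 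Two cosmetic points: to get Lipschitzness of $\Psi$ from boundedness of $D\Psi$ you should take a convex bounded neighborhood (e.g.\ a ball) rather than an arbitrary bounded open one, although Lipschitzness of $\Psi$ on the compact set $u([a,b])$ is all you actually use; and in the weak-continuity step the convergence $u_n'\rightharpoonup u'$ in $L^2[a,b]$, which you use implicitly when pairing $D\Psi(u)(u_n'-u')$ against test functions, is exactly Lemma \ref{lem_diff_cont} of the paper and deserves an explicit citation. The continuity arguments themselves (same splitting of the derivative difference, uniform convergence of $u_n$ from Lemma \ref{lem_cont_rep}, compactness of the union of the ranges, norm-boundedness of $(u_n')$) coincide in substance with the paper's.
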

\begin{proof}
Fix $u \in H^1_\loc(I,\sR^N)$, which by Lemma \ref{lem_cont_rep} ii) can be regarded as an element of $C(I^\circ,\sR^N)$. Let us first show that $\Psi \circ u$ and $D\Psi(u) \cdot u'$ are locally $2$-integrable. Indeed, for any $(a,b) \Subset I$ we have that
\begin{align}
\label{Psilemma1}
\int_a^b |\Psi(u(t))|^2 dt & \leq 2 \int_a^b \left(|\Psi(u(t))-\Psi(0)|^2 + |\Psi(0)|^2\right)dt
\\
\nonumber
& \leq 2 L^2_\Psi \int_a^b |u(t)|^2 dt + 2 (b-a)|\Psi(0)|^2 < \infty,
\end{align}
where $L_\Psi$ is the Lipschitz constant of $\Psi|_\K$, where the set $\K \vc u([a,b]) \cup \{0\}$ is compact by the continuity of $u$. We also have that
\begin{align}
\label{Psilemma2}
\int_a^b |D\Psi(u(t)) \cdot u'(t)|^2 dt \leq \max\limits_{x \in \K} |D\Psi(x)|_F^2 \int_a^b |u'(t)|^2 dt < \infty,
\end{align}
where $|D\Psi(x)|_F \vc \sqrt{\sum_{i,j} |D\Psi_{ij}(x)|^2}$ is the Frobenius norm of the Jacobian matrix $D\Psi$ at the point $x$. The map $x \mapsto |D\Psi(x)|_F^2$ is of course continuous and hence bounded on $\K$.

Let us now show that $(\Psi \circ u)'$ exists and equals $D\Psi(u) \cdot u'$. To this end, take $(u_n) \subset C_\cc^\infty(I^\circ, \sR^N)$ such that $u_n \rightarrow u$ in $H^1_\loc(I,\sR^N)$, existing by \cite[Proposition 4.1.13]{WeberSobolev}. Of course, on the approximating sequence we have the standard chain rule
\begin{align*}
(\Psi \circ u_n)' = D\Psi(u_n) \cdot u_n'
\end{align*}
and it suffices to show that $\Psi \circ u_n \rightarrow \Psi \circ u$ and $D\Psi(u_n) \cdot u_n' \rightarrow D\Psi(u) \cdot u'$ in $L^2_\loc(I,\sR^N)$ (cf. \cite[Proposition 4.1.12]{WeberSobolev}). Indeed, for any $(a,b) \Subset I$ we have that, similarly as in (\ref{Psilemma1}),
\begin{align}
\label{Psilemma3}
\int_a^b |\Psi(u_n(t)) - \Psi(u(t))|^2 dt \leq L^2_\Psi \int_a^b |u_n(t) - u(t)|^2 dt \rightarrow 0,
\end{align}
where this time $L_\Psi$ is the Lipschitz constant of $\Psi|_{\K^\ast}$ with $\K^\ast \vc u([a,b]) \cup \bigcup_n u_n([a,b])$ being a compact subset of $\sR^N$ on the strength of Lemma \ref{lem_cont_rep} ii).

One can also write that (suppressing the integration variable $t$)
\begin{align}
\nonumber
& \int_a^b |D\Psi(u_n) \cdot u_n' - D\Psi(u) \cdot u'|^2 dt = \int_a^b |D\Psi(u_n) \cdot (u_n'-u') + (D\Psi(u_n)- D\Psi(u)) \cdot u'|^2 dt
\\
\label{Psilemma4}
& \leq 2 \max\limits_{x \in \K^\ast} |D\Psi(x)|_F^2 \int_a^b |u_n' - u'|^2 dt + 2 \max\limits_{t \in [a,b]} |D\Psi(u_n(t)) - D\Psi(u(t))|_F^2 \int_a^b |u'|^2 dt,
\end{align}
where again the Frobenius norm of the Jacobian matrix appears. The first term clearly tends to zero. As for the second term, observe that we can estimate
\begin{align}
\label{Psilemma5}
|D\Psi(u_n(t)) - D\Psi(u(t))|_F \leq L_{D\Psi} |u_n(t) - u(t)|,
\end{align}
where $L_{D\Psi}$ is the Lipschitz constant of the map $D\Psi: \sR^N \rightarrow \sM_{K \times N}(\sR)$ re\-strict\-ed to $\K^\ast$ (and with $\sM_{K \times N}(\sR)$ endowed with the Frobenius norm). But since by Lemma \ref{lem_cont_rep} ii) $u_n$ converges uniformly to $u$ on $[a,b]$, we have that $\max_{t \in [a,b]} |u_n(t) - u(t)|^2 \rightarrow 0$ and thus the second term in (\ref{Psilemma4}) tends to zero as well. This finishes the proof that the map $\Psi_\ast: H^1_\loc(I,\sR^N) \rightarrow H^1_\loc(I,\sR^K)$, $\Psi_\ast(u) \vc \Psi \circ u$ is well defined and that $(\Psi \circ u)' = D\Psi(u) \cdot u'$ for any $u \in H^1_\loc(I,\sR^N)$.

We now move to proving that $\Psi_\ast$ is both strongly and weakly sequentially continuous. Proving the strong sequential continuity is straightforward --- it amounts to taking any $(u_n) \subset H^1_\loc(I,\sR^N)$ strongly convergent to $u$ and then using estimates (\ref{Psilemma3},\ref{Psilemma4},\ref{Psilemma5}) to prove that $\Psi \circ u_n \rightarrow \Psi \circ u$ in $H^1_\loc(I,\sR^K)$. The proof of the weak sequential continuity, however, requires some more care.

Assume that $(u_n) \subset H^1_\loc(I,\sR^N)$ is weakly convergent to $u$. We want to show that, for any $(a,b) \Subset I$ and any $w \in H^1([a,b],\sR^K)$
\begin{align}
\label{Psilemma6}
& \langle \Psi \circ u_n - \Psi \circ u, w \rangle_{H^1[a,b]}
\\
\nonumber
& = \langle \Psi \circ u_n - \Psi \circ u, w \rangle_{L^2[a,b]} + \langle D\Psi(u_n) \cdot u_n' - D\Psi(u) \cdot u', w' \rangle_{L^2[a,b]} \rightarrow 0.
\end{align}
Indeed, on the strength of Corollary \ref{lem_l2}, $u_n \rightarrow u$ in $L^2_\loc(I,\sR^N)$ and hence, by estimate (\ref{Psilemma3}), we obtain that $\Psi \circ u_n \rightarrow \Psi \circ u$ in $L^2_\loc(I,\sR^K)$ and thus also $\langle \Psi \circ u_n - \Psi \circ u, w \rangle_{L^2[a,b]} \rightarrow 0$ for any $w$.

As for the second term in (\ref{Psilemma6}), one can write, somewhat similarly as in (\ref{Psilemma4}), that
\begin{align}
\nonumber
& |\langle D\Psi(u_n) \cdot u_n' - D\Psi(u) \cdot u', w' \rangle_{L^2[a,b]}| = \left| \int_a^b \left( D\Psi(u_n) \cdot u_n' - D\Psi(u) \cdot u' \right) \cdot w' dt \right|
\\
\label{Psilemma7}
& \leq \left| \int_a^b \left[ \left( D\Psi(u_n) - D\Psi(u) \right) \cdot u_n' \right] \cdot w' dt \right| + \left| \int_a^b \left[ D\Psi(u) \cdot (u_n' - u') \right] \cdot w' dt \right|.
\end{align}
The first term above can be estimated by
\begin{align*}
& \left| \int_a^b \left[ \left( D\Psi(u_n) - D\Psi(u) \right) \cdot u_n' \right] \cdot w' dt \right| \leq \int_a^b |D\Psi(u_n) - D\Psi(u)|_F \, |u_n'| \, |w'| \, dt
\\
& \leq \max\limits_{t \in [a,b]} |D\Psi(u_n(t)) - D\Psi(u(t))|_F \, \|u_n'\|_{L^2[a,b]} \, \|w'\|_{L^2[a,b]}
\\
& \leq L_{D\Psi} \max\limits_{t \in [a,b]} |u_n(t) - u(t)| \, \|u_n'\|_{L^2[a,b]} \, \|w'\|_{L^2[a,b]},
\end{align*}
where we have again employed the Frobenius norm and the Lipschitz constant $L_{D\Psi}$ of the map $D\Psi: \sR^N \rightarrow \sM_{K \times N}(\sR)$ restricted to the compact set $\K^\ast \vc u([a,b]) \cup \bigcup_n u_n([a,b])$, cf. (\ref{Psilemma5}) above. But $\max_{t \in [a,b]} |u_n(t) - u(t)| \rightarrow 0$ (by Lemma \ref{lem_cont_rep} ii)) and $\|u_n'\|_{L^2[a,b]} \leq \|u_n\|_{H^1[a,b]}$ is bounded (because weakly convergent sequences are norm-bounded). Hence the first term on the right-hand side of (\ref{Psilemma7}) tends to zero.

The second term in (\ref{Psilemma7}) is nothing but $|\langle u_n' - u', D\Psi(u)^T \cdot w' \rangle_{L^2[a,b]}|$, which tends to zero on the strength of Lemma \ref{lem_diff_cont}, because $D\Psi(u)^T \cdot w' \in L^2_\loc(I,\sR^K)$ as all the entries of the (transposed) Jacobian matrix belong to $C(I^\circ)$.

This finishes the proof of (\ref{Psilemma6}) and of the entire lemma.
\end{proof}

\section*{Acknowledgments}

The author wishes to thank Micha\l{} Eckstein and Przemys\l{}aw G\'{o}rka for enlightening discussions and insightful comments. The author is also grateful to the anonymous Referees, whose remarks and suggestions lead to notable improvements in various parts of the exposition. The work was supported by the National Science Centre in Poland under the research grant Sonatina (2017/24/C/ST2/00322).

\bibliographystyle{habbrv}
\bibliography{h1CAG_bib}{}

\end{document}